\def\be{\begin{equation}}
\def\ee{\end{equation}}
\def\ben{\begin{equation*}}
\def\een{\end{equation*}}
\def\bea{\begin{eqnarray}}
\def\eea{\end{eqnarray}}
\def\beaa{\begin{eqnarray*}}
\def\eeaa{\end{eqnarray*}}
\def\biea{\begin{IEEEeqnarray}{rCl}}
\def\eiea{\end{IEEEeqnarray}}
\def\mb{\mathbf}
\def\bs{\boldsymbol}
\def\ds{\displaystyle}
\def\RCMLEL{$\text{RCML}_\text{EL}\;$}
\def\RCMLC{$\text{RCML}_\text{Chen}\;$}
\def\RCMLAIC{$\text{RCML}_\text{AIC}\;$}
\def\CNCEL{$\text{CNCML}_\text{EL}\;$}
\def\CNCML{$\text{CNCML}\;$}
\DeclareMathOperator{\rank}{rank}
\DeclareMathOperator{\sinc}{sinc}
\DeclareMathOperator{\lr}{LR}
\DeclareMathOperator{\tr}{tr}
\DeclareMathOperator{\diag}{diag}
\DeclareMathOperator{\Kmax}{K_{\max}}
\theoremstyle{remark} \newtheorem{lemma}{Lemma}
\theoremstyle{remark} 
\begin{document}

\title{Robust Covariance Estimation under Imperfect Constraints using an Expected Likelihood Approach}

\author{Bosung~Kang,~\IEEEmembership{Student~Member,~IEEE,}
        Vishal~Monga,~\IEEEmembership{Senior~Member,~IEEE,}
        Muralidhar~Rangaswamy,~\IEEEmembership{Fellow,~IEEE,}
        and Yuri~Abramovich,~\IEEEmembership{Fellow,~IEEE}
        \thanks{Research was supported by AFOSR grant number FA9550-12-1-0333.

        Dr. Rangaswamy was supported by the Air Force Office of Scientific
Research under project 2311IN.}
        }

\maketitle

\vspace{-10mm}

\begin{abstract}
We address the problem of structured covariance matrix estimation for radar space-time adaptive processing (STAP). A priori knowledge of the interference environment has been exploited in many previous works to enable accurate estimators even when training is not generous. Specifically, recent work has shown that employing practical constraints such as the rank of clutter subspace and the condition number of disturbance covariance leads to powerful estimators that have closed form solutions. While rank and the condition number are very effective constraints, often practical non-idealities makes it difficult for them to be known precisely using physical models. Therefore, we propose a robust covariance estimation method for radar STAP via an expected likelihood (EL) approach. We analyze covariance estimation algorithms under three cases of imperfect constraints: 1) a rank constraint, 2) both rank and noise power constraints, and 3) condition number constraint. In each case, we formulate precise constraint determination as an optimization problem using the EL criterion. For each of the three cases, we derive new analytical results which allow for computationally efficient, practical ways of setting these constraints. In particular, we prove formally that both the rank and condition number as determined by the EL criterion are unique. Through experimental results from a simulation model and the KASSPER data set, we show the estimator with optimal constraints obtained by the EL approach outperforms state of the art alternatives.
\end{abstract}

\begin{IEEEkeywords}
ML estimation, rank constraint, expected likelihood, condition number, radar signal processing, STAP, convex optimization.
\end{IEEEkeywords}

\IEEEpeerreviewmaketitle

\section{Introduction}
\label{Sec:Introduction}
Radar systems using multiple antenna elements and processing multiple pulses are widely used in modern radar signal processing since it helps overcome the directivity and resolution limits of a single sensor. Joint adaptive processing in the spatial and temporal domains for the radar systems, called space-time adaptive processing (STAP) \cite{Guerci03,Klemm02,Monzingo04}, enables suppression of interfering signals as well as preservation of gain on the desired signal. Interference statistics, in particular the covariance matrix of the disturbance, which must be estimated from secondary training samples in practice, play a critical role on the success of STAP. To obtain a reliable estimate of the disturbance covariance matrix, a large number of homogeneous training samples are necessary. This gives rise to a compelling challenge for radar STAP because such generous homogeneous (target free) training is generally not available in practice \cite{Himed97}.

Much recent research for radar STAP has been developed to overcome this practical limitation of generous homogeneous training. Specifically, the knowledge-based processing which uses \emph{a priori} information about the interference environment is widely referred in the literature \cite{Guerci06,Wicks06} and has merit in the regime of limited training data. These techniques include intelligent training selection \cite{Guerci06} and the spatio-temporal degrees of freedom reduction \cite{Wicks06,Wang91,Gini08}. In addition, covariance matrix estimation techniques that enforce and exploit a particular structure have been pursued as one approach of these methods. Examples of structure include persymmetry \cite{Nitzberg80}, Toeplitz structure \cite{Li99,Fuhrmann91,Abramovich98}, circulant structure \cite{Conte98}, eigenstructure \cite{Steiner00,Kang14,Aubry12}. In particular, the fast maximum likelihood (FML) method \cite{Steiner00} which enforces a special eigenstructure that the disturbance covariance matrix represents a scaled identity matrix plus a rank deficient and positive semidefinite clutter component also falls in this category and is shown to be the most competitive technique experimentally.

Previous works, notably in statistics \cite{Anderson63,Wax85} (and references therein) have considered factor analysis approaches for incorporating rank information in ML estimation. Recently, Kang \emph{et al.} \cite{Kang14} have developed extensions based on convex optimization approaches and furnished closed forms for rank constrained ML (RCML) estimation in practial radar STAP. Crucially, Kang \emph{et al.} show that rank of the clutter covariance if exactly known and incorporated, enables much higher normalized SINR and detection performance over the state-of-the-art, particularly FML, even under limited training.

Aubry \emph{et al.} \cite{Aubry12} also improve upon the FML by exploiting a practical constraint inspired by physical radar environment, specifically the eigenstructure of the disturbance covariance matrix. They employed a condition number of the interference covariance matrix as well as the structural constraint used in the FML. Though the initial optimization problem is non-convex, the estimation problem is reduced to a convex optimization problem.

In \cite{Kang14}, the authors assume the rank of the clutter is given by Brennan rule \cite{Ward94} under ideal conditions. However, in practice (under non-ideal conditions) the clutter rank departs from the Brennan rule prediction due to antenna errors and internal clutter motion. In this case, the rank is not known precisely and needs to be determined before using the RCML estimator. Determination of the number of signals in a measurement record is a classical eigenvalue problem, which has received considerable attention in the past 60 years. It is important to note that the problem does not have a simple and unique solution. Consequently, a number of techniques have been developed to address this problem \cite{Akaike74,Rissanen78,Wax85,Yin87,Tufts94}. The problem of rank estimation using the knowledge aided sensor signal processing and expert reasoning (KASSPER) data \cite{Bergin02} was also studied in \cite{Abramovich11} for the time varying multichannel autoregressive model, that provides an approximation to the spectral properties underlying the clutter phenomenon. A detailed comparison of the approach adopted here with that of \cite{Abramovich11} is beyond the scope of this paper. The condition number is also rarely known precisely, in fact Aubry {\em et al.} \cite{Aubry12} employ an ML estimate of the condition number.

Expected likelihood (EL) approach \cite{Abramovich07} has been proposed to determine a regularization parameter based on the statistical invariance property of the likelihood ratio (LR) values. Specifically, the probability density function (pdf) of LR values for the true covariance matrix depends on only the number of training samples ($K$) and the dimension of the true covariance matrix ($N$), not the true covariance itself under a Gaussian assumption on the observations. This statistical independence of LR values on the true covariance itself enables pre-calculation of LR values even though the true covariance is unknown. Finally, the regularization parameters are selected so that the LR value of the estimate agrees as closely as possible with the {\em median} LR value determined via its pre-characterized pdf.

\textbf{Contributions:} In view of the aforementioned observations, we develop covariance estimation methods which automatically and adaptively determine the values of practical constraints via an expected likelihood approach for practical radar STAP.\footnote{A preliminary version of the work appeared at the 2015 IEEE Radar Conference \cite{Kang15Radarcon}.} Our main contributions are outlined below.
\begin{itemize}
    \item \textbf{Fast Algorithms for adaptively determining practical constraints:} We propose methods to select practical constraints employed in the optimization problems for covariance estimation in radar STAP using the expected likelihood approach. The proposed methods guide the selection of the constraints via the expected likelihood criteria when they are imperfectly known. We consider three different cases of the constraints in this paper: 1) the clutter rank constraint, 2) jointly the rank and the noise power constraints, and 3) the condition number constraint.
    \item \textbf{Analytical results with formal proofs:} For each case mentioned above, we derive {\em new} analytical results. We first formally prove that the rank selection problem based on the expected likelihood approach has a unique solution. This guarantees there is only one rank which is the best (global optimal) rank in the sense of the EL approach. Second, we derive a closed-form solution of the optimal noise power for a given rank, which means we do not need iterative or numerical methods to find the optimal noise power, which in turn enables fast implementation. Finally, we also prove there exists a unique optimal condition number for the condition number selection criterion via the EL approach.
    \item \textbf{Experimental Results through simulated model and the KASSPER data set:} Experimental investigation on a simulation model and on the KASSPER data set shows that the proposed methods for three different cases outperform alternatives such as the FML, leading rank selection methods in radar literature and statistics, and the ML estimation of the condition number constraint with respect to the normalized output SINR.
\end{itemize}

The rest of the paper is organized as follows. Section \ref{Sec:Background} briefly reviews the previous structured covariance estimation methods including the rank constrained ML estimator and the condition number constrained ML estimator and the expected likelihood approach. Constraint selection problems via the EL approach and their corresponding solutions are provided in Section \ref{Sec:Proposed}. Section \ref{Sec:Experiments} performs experimental validation wherein we report the performance of the proposed method and compare it against existing methods in terms of normalized output SINR on both the simulation model and the KASSPER data set. Section \ref{Sec:Conclusion} concludes the paper.

\section{Background}
\label{Sec:Background}

In this section, we briefly provide a review of related structured covariance estimation algorithms and the expected likelihood criterion which can be useful in estimating parameters/constraints.

%


\subsection{Rank Constrained ML estimation}

It has been shown \cite{Kang14} that the rank can be employed into the optimization problem in a tractable manner and the RCML estimator is the best STAP estimator when the rank is accurately predicted by the Brennan rule. The initial non-convex optimization problem for the rank constrained ML estimation is given by
\be
\label{Eq:InitialProblemRCML}
\left\{ \begin{array}{cc}
\ds \max_{\mb R} & f(\mb Z) = \frac{1}{\pi^{NK}|\mb R|^K}\exp(-\tr\{\mb Z^H \mb R^{-1} \mb Z\})\\
s.t. & \mb{R} = \sigma^2 \mb{I} + \mb{R}_c\\
 & \rank(\mb{R}_c) = r\\
 & \mb{R}_c \succeq \mb 0 \end{array} \right.
\ee

Rank constrained ML estimation has been studied in statistics \cite{Anderson63} and in the radar signal processing literature \cite{Kang14}. In particular, the closed form estimator when the radar noise floor is known is given by \cite{Kang14}
\be
\label{Eq:RCMLsolution2}
\mb{R}^\star = \sigma^2 {\mb{X}^\star}^{-1} = \sigma^2 \mb{V}{\mb\Lambda^\star}^{-1}\mb{V}^H
\ee
where $\mb{V}$ is the eigenvector matrix of the sample covariance matrix $\mb{S}$ and $\mb\Lambda^\star$ is a diagonal matrix with diagonal entries $\lambda_i^\star$ which is given by
\be
\label{Eq:RCMLsolution}
\lambda_i^\star = \left\{ \begin{array}{cc}
\min (1,\dfrac{1}{\bar d_i}) & \text{for} \; i=1,2,\ldots,r\\
1 & \text{for} \; i=r+1,r+2,\ldots,N \end{array} \right.
\ee
where $\bar d_i$'s are the eigenvalues of the normalized sample covariance and $r$ is the clutter rank. Note that the ML solution of the eigenvalue is a function of the rank $r$ and $\bar d_i$'s.

\subsection{Condition Number constrained ML estimation}

Aubry \emph{et al.} proposed the method of a structured covariance matrix under a condition number upper-bound constraint \cite{Aubry12}. The initial non-convex optimization problem is
\be
\label{Eq:InitialProblemCNCML}
\left\{ \begin{array}{cc}
\ds \max_{\mb R} & f(\mb Z) = \frac{1}{\pi^{NK}|\mb R|^K}\exp(-\tr\{\mb Z^H \mb R^{-1} \mb Z\})\\
s.t. & \mb{R} = \sigma^2 \mb{I} + \mb{R}_c\\
 & \frac{\lambda_{\max}(\mb R)}{\lambda_{\min} (\mb R)} \leq K_{\max}\\
 & \mb{R}_c \succeq \mb 0\\
 & \sigma^2 \geq c \end{array} \right.
\ee

The authors showed that the optimization problem falls within the class of MAXDET problems \cite{Vandenberghe98,DeMaio09} and developed an efficient procedure for its solution in closed form which is given by

\be
\label{Eq:CN1}
\mb{R}^\star = \mb{V}{\mb\Lambda^\star}^{-1}\mb{V}^H
\ee
where
\be
\mb\Lambda^\star = \diag\big(\bs\lambda^\star (\bar u)\big)
\ee
, $\bs\lambda^\star (\bar u) = [\lambda_1^\star (\bar u),\ldots,\lambda_N^\star (\bar u)]$ with
\be
\lambda_i^\star (\bar u) = \min \bigg(\min (K_{\max} \bar u, 1), \max \Big(\bar u,\frac{1}{\bar d_i}\Big) \bigg)
\ee
, $K_{\max}$ is a condition number constraint, and $\bar u$ is an optimal solution of the following optimization problem,
\be
\label{Eq:OptimizationU}
\left\{ \begin{array}{cc}
\ds\min_u & \sum_{i=1}^N G_i(u)\\
s.t. & 0 < u \leq 1 \end{array} \right.
\ee
where
\be
\label{Eq:Giu1}
G_i(u) = \left\{ \begin{array}{ll}
\log K_{\max} - \log u + K_{\max} \bar d_i u & \text{if} \quad 0 < u \leq \frac{1}{K_{\max}}\\
\bar d_i & \text{if} \quad \frac{1}{K_{\max}} \leq u \leq 1 \end{array} \right.
\ee
for $\bar d_i \leq 1$, and
\biea
\label{Eq:Giu2}
\lefteqn{G_i(u)}\nonumber\\ & = & \left\{ \begin{array}{ll}
\log K_{\max} - \log u + K_{\max} \bar d_i u & \text{if} \quad 0 < u \leq \frac{1}{K_{\max} \bar d_i}\\
\log \bar d_i + 1 & \text{if} \quad \frac{1}{K_{\max} \bar d_i} < u \leq \frac{1}{\bar d_i}\IEEEeqnarraynumspace\\
\frac{1}{\bar d_i} + \bar d_i u & \text{if} \quad \frac{1}{\bar d_i} \leq u \leq 1 \end{array} \right.
\eiea
for $\bar d_i > 1$. Similar to the RCML estimator, the ML solution is a fucntion of $\bar d_i$'s and the condition number $K_{\max}$.

\subsection{Expected Likelihood Approach}
Abramovich \emph{et al.} \cite{Abramovich07} proposed an approach called the expected likelihood (EL) method which develops a new criterion for selection of parameters such as the loading factor based on direct likelihood matching. Expected likelihood approach is motivated by invariance properties of the likelihood ratio (LR) value which is given by
\bea
\lr(\mb R, \mb Z) & \equiv & \Big[\dfrac{f(\mb Z | \mb R)}{f(\mb Z | \mb S)}\Big]^{1/K}\\
& = & \frac{|\mb R^{-1} \mb S| \exp N}{\exp[\tr(\mb R^{-1} \mb S)]}
\eea
under a Gaussian assumption on the observations, $\mb z_i$'s. Furthermore, the unconstrained ML solution $\mb S$ has the LR value of 1. That is,
\be
\ds\max_{\mb R} \lr (\mb R, \mb Z) = \lr (\mb S, \mb Z) = 1
\ee

However, as shown in \cite{Abramovich07} the LR values of the true covariance matrix $\mb R_0$ are much lower than that of the ML solution $\mb S$. Therefore, it seems natural to replace the ML estimate by one that generates LR values consistent with what is expected for the true covariance matrix. More importantly, Abramovich \emph{et al.} showed \cite{Abramovich07} that the pdf of the LR for the true covariance matrix, which is given by
\bea
\lr (\mb R_0, \mb Z) & = & \frac{|\mb R_0^{-1} \mb S| \exp N}{\exp[\tr(\mb R_0^{-1} \mb S)]}\\
& = & \frac{|\mb R_0^{-1/2} \mb S \mb R_0^{-1/2}| \exp N}{\exp[\tr(\mb R_0^{-1/2} \mb S \mb R_0^{-1/2})]}
\eea
does not depend on the true covariance itself since
\be
\hat{\mb C} \equiv N \mb R_0^{-1/2} \mb S \mb R_0^{-1/2} \sim \mathcal{CW}(K,N,\mb I)
\ee
where $\mathcal{CW}$ represents complex Wishart distribution which is determined entirely by $K$ and $N$ and does not need $\mb R_0$. Therefore, the pdf of LR values for the true covariance matrix can be precalculated for given $K$ and $N$ and indeed the moments of distribution of the LR values were derived by Abramovich \emph{et al.} in their paper \cite{Abramovich07}.

Based on the invariance of the pdf of LR values, the EL approach can be used to determine values of parameters in estimation problems. For instance, the EL estimator for a diagonally loaded SMI technique under homogeneous interference training conditions and fluctuating target with known power is given by \cite{Abramovich07}
\be
\hat{\mb R}_{\text{LSMI}} = \hat\beta \mb I + \mb S
\ee
where
\be
\label{Eq:OptimalBeta}
\hat\beta \equiv \ds\arg_\beta \Bigg\{ \dfrac{|(\beta \mb I + \mb S)^{-1} \mb S| \exp N}{\exp \big( \tr[(\beta \mb I + \mb S)^{-1} \mb S] \big)} \equiv \lr_0 \Bigg\}
\ee
and $\lr_0$ is the reference median statistic, which can be precalculated from the pdf of the LR values
\be
\int_0^{\lr_0} f\big[\lr(\mb R_0, \mb Z)\big] d\lr = 0.5
\ee
where $f\big[\lr(\mb R_0, \mb Z)\big]$ is the invariant pdf of the LR values.

\section{Constraints selection method via Expected Likelihood Approach}
\label{Sec:Proposed}

\subsection{Selection of rank constraint}
\label{Sec:Rankonly}

\begin{figure}
\centering
\includegraphics[scale=0.43]{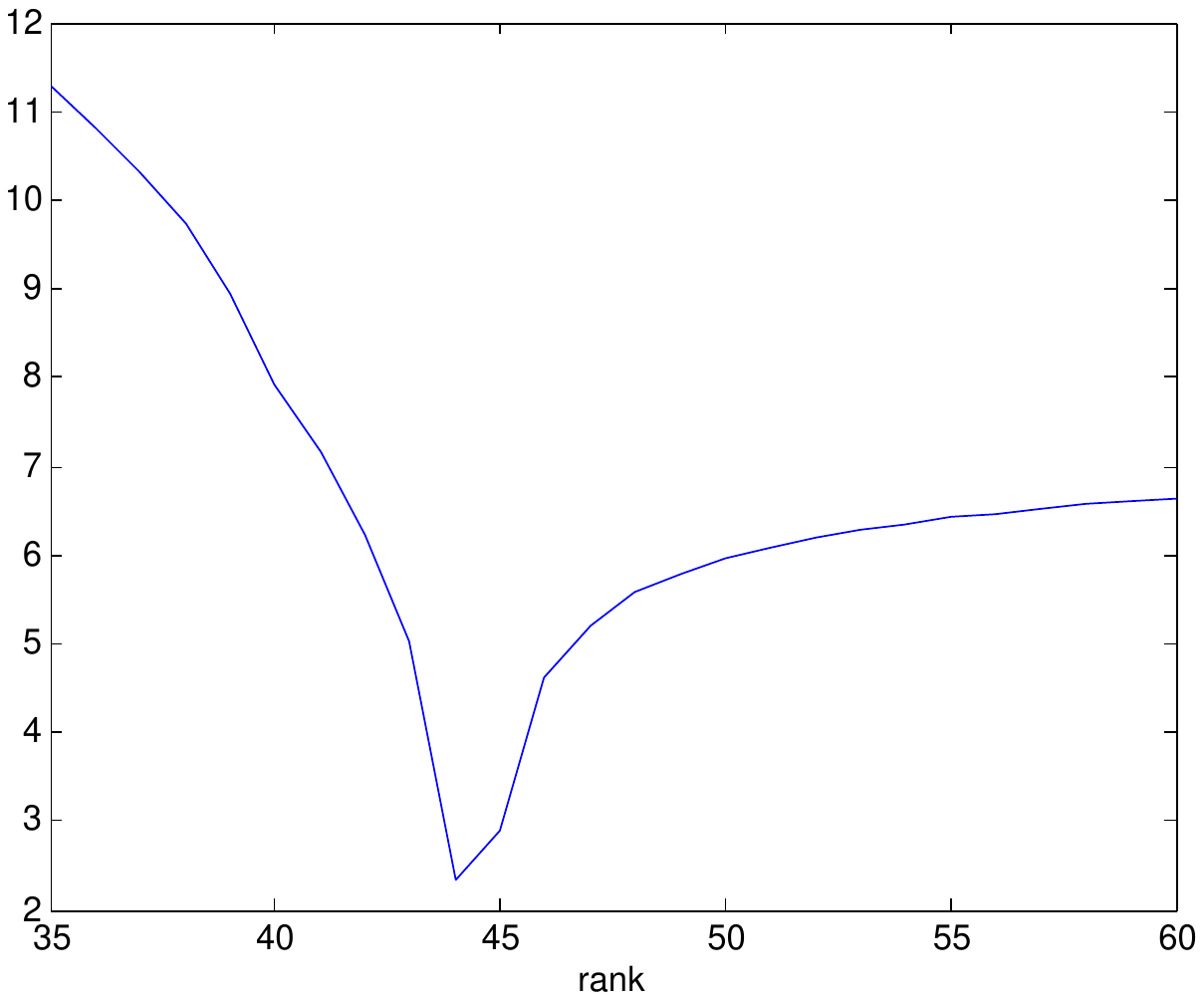}
\caption{$\bigg( \log\Big( \lr\big(\mb R_\text{RCML}(r), \mb Z\big)/\lr_0\Big)\bigg)^2$ versus $r$ for KASSPER dataset ($K=2N=704$)}
\label{Fig:LRdifference}
\end{figure}


We propose to use the EL approach to refine and find the {\em optimal} rank when the rank determined by underlying physics is not necessarily accurate.

Now we set up the optimization criterion to find the rank via the EL approach. Since the rank is an integer, there may not exist the rank which exactly satisfies Eq. \eqref{Eq:OptimalBeta}. Therefore, we instead find a rank which such that the corresponding LR value departs the least from the median (and precomputed) LR value $\lr_0$. That is,
\be
\hat{\mb R}_{\text{RCML}_\text{EL}} = \sigma^2 \mb{V} {\mb\Lambda^\star}^{-1}(\hat r) \mb{V}^H
\ee
where
\be
\label{Eq:OptimalRank}
\hat r \equiv \arg\min_{r \in \mathds{Z}} \Big| \lr\big(\mb R_\text{RCML}(r), \mb Z\big)  - \lr_0 \Big|^2
\ee
and $\lr\big(\mb R_\text{RCML}(r), \mb Z\big)$ is given by Eq. \eqref{Eq:LR_RCML1}.

Now we investigate the optimization problem \eqref{Eq:OptimalRank} for the rank selection. Since the eigenvectors of $\mb R_\text{RCML}$ are identical to those of the sample covariance matrix $\mb S$ as shown in Eq. \eqref{Eq:RCMLsolution2}, the LR value of $\mb R_\text{RCML}$ in Eq. \eqref{Eq:OptimalRank} can be reduced to the function of the eigenvalues of $\mb R_\text{RCML}$ and $\mb S$. Let the eigenvalues of $\mb R_\text{RCML}$ and $\mb S$ be $\lambda_i$ and $d_i$ (arranged in descending order). Then the LR value of $\mb R_\text{RCML}$ can be simplified to a function of ratio of $d_i$ to $\lambda_i$, $\dfrac{d_i}{\lambda_i}$. That is,
\bea
\lr\big(\mb R_\text{RCML}(r), \mb Z\big) & = & \dfrac{|\hat{\mb R}_{\text{RCML}}^{-1}(r) \mb S| \exp N}{\exp \Big( \tr\big[\hat{\mb R}_{\text{RCML}}^{-1}( r) \mb S\big] \Big)}\label{Eq:LR_RCML1}\\
& = & \frac{\ds\prod_{i=1}^N \dfrac{d_i}{\lambda_i} \cdot \exp N}{\exp\Big[\ds\sum_{i=1}^N \dfrac{d_i}{\lambda_i}\Big]}\label{Eq:SimplifiedLR}
\eea
\begin{lemma}
\label{Lemma1}
The LR value of the RCML estimator, $\lr\big(\mb R_\text{RCML}(r), \mb Z\big)$, is a monotonically increasing function with respect to the rank $r$ and there is only one unique $\hat r$ in the optimization problem \eqref{Eq:OptimalRank}.
\end{lemma}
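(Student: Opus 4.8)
The plan is to collapse the likelihood ratio \eqref{Eq:SimplifiedLR} into a product of identical single-variable functions and then read off everything from the shape of that function. By \eqref{Eq:RCMLsolution2} the eigenvectors of $\mb R_\text{RCML}(r)$ are those of $\mb S$ and its eigenvalues are $\sigma^2/\lambda_i^\star(r)$, so each ratio appearing in \eqref{Eq:SimplifiedLR} is $d_i/\lambda_i=\bar d_i\,\lambda_i^\star(r)$, which by \eqref{Eq:RCMLsolution} equals $\min(\bar d_i,1)$ for $i\le r$ and $\bar d_i$ for $i>r$. Hence
\ben
\lr\big(\mb R_\text{RCML}(r),\mb Z\big)=\prod_{i=1}^{N}\phi\big(t_i(r)\big),\qquad \phi(t):=t\,e^{1-t},
\een
with $t_i(r)=\min(\bar d_i,1)$ for $i\le r$ and $t_i(r)=\bar d_i$ for $i>r$. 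This first step is pure bookkeeping. The second step is to record the elementary properties of $\phi$: $\phi'(t)=(1-t)e^{1-t}$, so $\phi$ is strictly increasing on $(0,1]$, strictly decreasing on $[1,\infty)$, and attains its unique maximum $\phi(1)=1$ (equivalently $\log\phi(t)=1-t+\log t\le 0$ with equality only at $t=1$).

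For the monotonicity claim I would compare $\lr\big(\mb R_\text{RCML}(r+1),\mb Z\big)$ with $\lr\big(\mb R_\text{RCML}(r),\mb Z\big)$ directly in the product form: the two products agree in every factor except the one indexed by $i=r+1$, where $t_{r+1}$ changes from $\bar d_{r+1}$ to $\min(\bar d_{r+1},1)$. Since $\phi$ is increasing on $(0,1]$ and globally maximized at $1$, we get $\phi\big(\min(\bar d_{r+1},1)\big)\ge\phi(\bar d_{r+1})$, with equality exactly when $\bar d_{r+1}\le 1$. As the $\bar d_i$ are in descending order, this shows $r\mapsto\lr\big(\mb R_\text{RCML}(r),\mb Z\big)$ is nondecreasing, and strictly increasing for every $r$ below the number of sample eigenvalues exceeding $\sigma^2$ (the range of practical interest) — this is the stated monotonicity.

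Uniqueness of $\hat r$ in \eqref{Eq:OptimalRank} then follows quickly: since $\lr\big(\mb R_\text{RCML}(r),\mb Z\big)$ is monotone in $r$, the quantity $\lr\big(\mb R_\text{RCML}(r),\mb Z\big)-\lr_0$ changes sign at most once, so the objective $\big|\lr(\mb R_\text{RCML}(r),\mb Z)-\lr_0\big|^2$ is nonincreasing and then nondecreasing in $r$, and on the sub-range where $\lr(\cdot)$ is strictly increasing — which contains the sign change because $\lr_0$ sits above the fully-clamped value $\lr\big(\mb R_\text{RCML}(N),\mb Z\big)$ — the minimum is attained at a single integer. I expect the only genuinely delicate part to be exactly this last point: arguing that the weakly-monotone plateau of $\lr(\cdot)$ at large $r$ does not contain the minimizer, and excluding the exceptional tie in which two consecutive ranks are equidistant from $\lr_0$; both are handled cleanly by restricting the search to the strictly-increasing portion of $\lr(\cdot)$ certified by the monotonicity argument. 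Everything before that is routine once the product reduction is in place.
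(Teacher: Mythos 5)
Your monotonicity argument is essentially the paper's: both proofs compare consecutive ranks, observe that only the $(r+1)$-th eigenvalue ratio changes, and show the resulting multiplicative update is at least $1$. Indeed your factor $\phi(\bar d_{r+1})=\bar d_{r+1}e^{1-\bar d_{r+1}}$ is exactly the reciprocal of the paper's coefficient $\tfrac{\sigma^2}{d_{r+1}}\exp\!\big(\tfrac{d_{r+1}}{\sigma^2}-1\big)$, and your analytic verification that $\log\phi(t)=1-t+\log t\le 0$ with equality only at $t=1$ is a genuine (if small) improvement over the paper, which justifies this key inequality only by pointing to a plot. Your treatment of the plateau ($\bar d_{r+1}\le 1$, where the estimator and hence the LR value stop changing) also matches the paper's second case.

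One correction on the uniqueness step, which the paper itself leaves essentially unargued beyond ``monotone implies unique.'' You assert the sign change of $\lr(\mb R_\text{RCML}(r),\mb Z)-\lr_0$ lies in the strictly increasing range ``because $\lr_0$ sits above the fully-clamped value $\lr(\mb R_\text{RCML}(N),\mb Z)$.'' The direction is backwards: $\lr(\mb R_\text{RCML}(N),\mb Z)$ is the \emph{maximum} of the monotone sequence, so if $\lr_0$ exceeded it there would be no sign change at all, the objective would be minimized wherever $\lr$ is largest, and \emph{every} rank on the terminal plateau would be a minimizer --- precisely the non-uniqueness you are trying to exclude. What you need is $\lr_0<\lr(\mb R_\text{RCML}(N),\mb Z)$ together with $\lr_0$ exceeding the value at the smallest admissible rank, which places the crossing strictly inside the strictly increasing portion; you should also state explicitly that the exceptional tie between two adjacent ranks equidistant from $\lr_0$ is a measure-zero coincidence rather than something ruled out by monotonicity alone. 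With that fix your write-up is, if anything, more careful than the paper's on this point.
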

\begin{proof}
We derive the relationship between $\lr\big(\mb R_\text{RCML}(i)\big)$ and $\lr\big(\mb R_\text{RCML}(i+1)\big)$. See Appendix A for details.
\end{proof}

Lemma \ref{Lemma1} gives us a significant analytical result that is the EL approach leads to a unique value of the rank, i.e., when searching over the various values of the rank it is impossible to come up with multiple choices. That also means that it is guaranteed that we can always find the global optimum of $r$ not local optima (minima).  We plot the values of $\bigg( \log\Big( \lr\big(\mb R_\text{RCML}(r), \mb Z\big)/\lr_0\Big)\bigg)^2$ versus the rank $r$ for one realization for the KASSPER dataset ($K=2N=704$) in Fig. \ref{Fig:LRdifference}. Since the LR values are too small in this case, we use a log scale and the ratio between two instead of the distance to see the variation clearly. Note that monotonic increase of the value of $\lr\big(\mb R_\text{RCML}(r), \mb Z\big)$ w.r.t $r$ guarantees a unique optimal rank even if the optimization function as defined in (\ref{Eq:OptimalRank}) is not necessarily convex in $r$.

The algorithm to find the optimal rank is simple and not computationally expensive due to the analytical results above. For a given initial rank, we first determine a direction of searching and then find the optimal rank by increasing or decreasing the rank one by one. The value of the initial rank can be given by Brennan rule for the KASSPER data set and the number of jammers for a simulation model. The availability of the initial guess hastens the process of finding the optimal rank as shown in Algorithm 1.

\begin{algorithm}[t]
\caption{The proposed algorithm to select the rank via EL criterion}
\label{Alg:Rank}
    \begin{algorithmic}[1]
        \STATE Initialize the rank $r$ by physical environment such as Brennan rule.
        \STATE Evaluate $\lr(r-1)$, $\lr(r)$, $\lr(r+1))$, the LR values of RCML estimators for the ranks $r-1$, $r$, $r+1$, respectively.
            \begin{itemize}
                \item if $|\lr(r+1) - \lr_0| < |\lr(r) - \lr_0|$\\
                $\rightarrow$ increase $r$ by 1 until $|\lr(r) - \lr_0|$ is minimized to find $\hat r$.
                \item elseif $|\lr(r-1) - \lr_0| < |\lr(r) - \lr_0|$\\
                $\rightarrow$ decrease $r$ by 1 until $|\lr(r) - \lr_0|$ is minimized to find $\hat r$.
                \item else $\hat r = r$, the initial rank.
            \end{itemize}
    \end{algorithmic}
\end{algorithm}

\subsection{Joint selection of rank and noise power constraints}
\label{Sec:Both}

In this section, we investigate the second case that both the rank $r$ and the noise power $\sigma^2$ are not perfectly known. We propose the estimation of both the rank and the noise level based on the EL approach. The estimator with both the rank and the noise power obtained by the EL approach is given by
\be
\hat{\mb R}_{\text{RCML}_\text{EL}} = \hat \sigma^2 \mb{V} {\mb\Lambda^\star}^{-1}(\hat r) \mb{V}^H
\ee
where
\be
\label{Eq:OptimalRankSigma}
(\hat r, \hat \sigma^2) \equiv \arg\min_{r \in \mathds{Z}, \sigma^2 > 0} \Big| \lr\big(\mb R_\text{RCML}(r,\sigma^2), \mb Z\big)  - \lr_0 \Big|^2
\ee

In section \ref{Sec:Rankonly}, we have shown that the optimal rank via the EL approach is uniquely obtained for a fixed $\sigma^2$. Now we analyze the LR values of the RCML estimator for various $\sigma^2$ and a fixed rank.

\begin{lemma}
\label{Lemma2}
For a fixed rank, the LR value of the RCML estimator, which is a function of $\sigma^2$, has a maximum value at $\sigma^2 = \sigma_{\text{ML}}^2$. It monotonically increases for $\sigma^2 < \sigma_{\text{ML}}^2$ and monotonically decreases for $\sigma^2 > \sigma_{\text{ML}}^2$.
\end{lemma}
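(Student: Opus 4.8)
The plan is to write the likelihood ratio in Eq.~\eqref{Eq:SimplifiedLR} as an explicit function of $s \equiv \sigma^2$ and to analyze the sign of its derivative. For a fixed clutter rank $r$ (with $r < N$), the eigenvalues of $\mb R_\text{RCML}(r,\sigma^2)$ are $\lambda_i(s) = \max(s,d_i)$ for $i=1,\dots,r$ and $\lambda_i(s) = s$ for $i = r+1,\dots,N$, where $d_1 \ge \dots \ge d_N$ are the eigenvalues of $\mb S$. Setting $t_i(s) = d_i/\lambda_i(s)$ and $g(t) = \ln t - t$, Eq.~\eqref{Eq:SimplifiedLR} reads $\ln\lr\big(\mb R_\text{RCML}(r,s),\mb Z\big) = N + \sum_{i=1}^N g\big(t_i(s)\big)$; since $\exp$ is strictly increasing it suffices to show this quantity increases strictly for $s$ below $\sigma_{\text{ML}}^2$ and decreases strictly above it.

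First I would differentiate away from the finitely many kink points $s = d_i$, $i \le r$. A short computation gives $\frac{d}{ds}g\big(t_i(s)\big) = (d_i - s)/s^2$ whenever $\lambda_i(s) = s$ (that is, for every $i > r$, and for $i \le r$ with $s > d_i$), and $\frac{d}{ds}g\big(t_i(s)\big) = 0$ whenever $\lambda_i(s) = d_i$ (i.e.\ $s < d_i$ with $i \le r$). Hence, with the index set $A(s) \equiv \{\,i : i > r\,\} \cup \{\,i \le r : d_i < s\,\}$,
\be
\frac{d}{ds}\ln\lr\big(\mb R_\text{RCML}(r,s),\mb Z\big) \;=\; \frac{1}{s^2}\,\phi(s), \qquad \phi(s) \;\equiv\; \sum_{i \in A(s)} (d_i - s),
\ee
so the derivative has the same sign as $\phi(s)$. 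Next I would establish that $\phi$ is continuous and strictly decreasing on $(0,\infty)$: when $s$ passes $d_j$ for some $j \le r$ the new term $(d_j - s)$ enters $\phi$ at value $0$, so there is no jump, and on each interval where $A(s)$ is constant $\phi$ is affine with slope $-|A(s)| \le -(N-r) < 0$ since $\{\,i : i > r\,\}\subseteq A(s)$ always.

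Then I would combine the pieces. Because $\mb S$ is full rank in the training regime considered, $\phi(0^+) = \sum_{i>r} d_i > 0$, while $\phi(s) = \sum_{i=1}^N d_i - Ns \to -\infty$ as $s \to \infty$; strict monotonicity of $\phi$ then forces a unique zero $s^\star$, with $\phi > 0$ on $(0,s^\star)$ and $\phi < 0$ on $(s^\star,\infty)$. Therefore $\ln\lr\big(\mb R_\text{RCML}(r,s),\mb Z\big)$, and with it $\lr\big(\mb R_\text{RCML}(r,s),\mb Z\big)$, strictly increases on $(0,s^\star)$ and strictly decreases on $(s^\star,\infty)$, attaining a unique maximum at $s^\star$. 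To identify $s^\star$ with $\sigma_{\text{ML}}^2$ I would note that, for fixed $\mb S$, the LR value is a strictly increasing transformation of $f(\mb Z | \mb R)$, so maximizing the LR over $\sigma^2$ is equivalent to maximizing the likelihood, whose maximizer is $\sigma_{\text{ML}}^2$ by definition.

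The main obstacle I anticipate is the bookkeeping at the non-smooth points $s = d_i$, $i \le r$: I must verify that the one-sided derivatives of $\sum_i g\big(t_i(s)\big)$ agree there (equivalently, that $\phi$ neither jumps nor acquires a positive slope increment when an index enters $A(s)$), so that the single-sign-change structure of $\phi$ — and hence the unimodality of the LR in $\sigma^2$ — holds on all of $(0,\infty)$ rather than merely piecewise. The remaining steps are routine one-variable calculus together with the elementary observation that $\lr(\cdot,\mb Z)$ and $f(\mb Z | \cdot)$ share the same argmax in $\sigma^2$.
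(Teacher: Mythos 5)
Your proposal is correct and follows the same core strategy as the paper's Appendix B: write the LR as an explicit function of $s=\sigma^2$ and determine where its derivative changes sign, identifying the maximizer with $\sigma_{\text{ML}}^2 = \frac{1}{N-r}\sum_{i=r+1}^N d_i$. The one substantive difference is that the paper works only with the formula $\lr(t) = d_p e^{N-r} t^{r-N} e^{-d_s/t}$, which presumes the top $r$ eigenvalues of the estimator equal $d_1,\dots,d_r$; this is valid only while $s \le d_r$, since the closed form \eqref{Eq:RCMLsolution} actually gives $\lambda_i(s)=\max(s,d_i)$ for $i\le r$. You handle the resulting kinks at $s=d_j$, $j\le r$, explicitly, showing that $\phi(s)$ enters each new term at value zero and stays strictly decreasing, so unimodality holds on all of $(0,\infty)$ rather than just on the interval where the paper's formula applies. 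Since $\sum_{i>r} d_i \le (N-r)d_r$ guarantees the zero of $\phi$ lies at or below $d_r$, the two arguments locate the same maximizer, but your version closes a gap in rigor; your alternative identification of the argmax via monotone equivalence of the LR with the profile likelihood is also a clean shortcut that avoids recomputing the stationary point.
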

\begin{proof}
We first represent the LR values as a function of $\sigma^2$ and show the function is increasing or decreasing according to the sign of the first derivative. See Appendix B for details.
\end{proof}

Fig. \ref{Fig:LR_sigma} shows an example of the LR values as a function of the noise level $\sigma^2$ when two optimal solutions exist. As shown in Lemma \ref{Lemma2}, we see that the LR value is maximized for the ML solution of $\sigma^2$. It is obvious that we have three cases of the number of the solution of the optimal noise power for given a fixed rank from Lemma \ref{Lemma2}: 1) no solution if $\lr_0 > \lr(\sigma^2_{\text{ML}})$, 2) only one solution if $\lr_0 = \lr(\sigma^2_{\text{ML}})$, and 3) two optimal solutions if $\lr_0 < \lr(\sigma^2_{\text{ML}})$. Now we discuss how to obtain the optimal noise power for a fixed rank.

\begin{lemma}
\label{Lemma3}
For given a fixed rank, $r$, satisfying $\lr_0 < \lr(r,\sigma^2_{\text{ML}})$ , the noise power obtained by the expected likelihood approach, $\hat{\sigma}_{\text{EL}}^2$, is given by
\be
\hat{\sigma}_{\text{EL}}^2 = \exp\Bigg(W_k\bigg(\frac{b}{a} e^{-\frac{c}{a}}\bigg) + \frac{c}{a}\Bigg)
\ee
where $W_k (z)$ is the $k$-th branch of Lambert $W$ function, $k=0,1$, and
\be
\left\{\begin{array}{l}
a = r - N\\
b = \sum_{k=r+1}^N d_k\\
c = \log \lr_0 - \log\Big (\prod_{k=r+1}^N d_k\Big) + a
\end{array} \right.
\ee
\end{lemma}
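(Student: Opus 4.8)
The plan is to write the LR value of the RCML estimator explicitly as a function of $\sigma^2$ (for the fixed rank $r$), set it equal to $\lr_0$, and solve the resulting transcendental equation in closed form. Starting from the simplified expression in Eq.~\eqref{Eq:SimplifiedLR}, I first substitute the RCML eigenvalues $\lambda_i^\star$ from Eq.~\eqref{Eq:RCMLsolution}. Recall that the RCML solution scales by $\sigma^2$: the top-$r$ clutter eigenvalues of $\mb R_\text{RCML}$ track the corresponding sample eigenvalues $d_i$ (when those exceed $\sigma^2$), while the bottom $N-r$ eigenvalues are pinned at $\sigma^2$. Hence in the ratio $d_i/\lambda_i$, the first $r$ terms contribute a $\sigma^2$-independent constant (each ratio is $1$ or $d_i/\sigma^2$ depending on the clipping, but for the regime of interest they cancel the clutter part), whereas each of the last $N-r$ terms contributes $d_i/\sigma^2$. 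This reduces $\lr(\mb R_\text{RCML}(r,\sigma^2),\mb Z)$ to an expression of the form $(\text{const})\cdot(\sigma^2)^{-(N-r)}\exp\!\big(N - \sum_{k=r+1}^N d_k/\sigma^2\big)$ up to the clutter-block contributions, which is exactly the shape that produces the coefficients $a=r-N$, $b=\sum_{k=r+1}^N d_k$, and $c$ as stated.

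Next I would take logarithms of the equation $\lr(\mb R_\text{RCML}(r,\sigma^2),\mb Z)=\lr_0$. Writing $t=\log\sigma^2$, the log-LR becomes an affine-plus-exponential function of $t$: something of the form $a t - b e^{-t} + (\text{const}) = \log\lr_0$. Rearranging, I isolate a term of the shape $\big(\tfrac{b}{a}e^{-c/a}\big) = (t - c/a)\,e^{-(t-c/a)}$ — wait, more carefully, after collecting terms the equation takes the canonical form $y e^{y} = z$ with $y = -(t - c/a)$ (or its negative, depending on signs) and $z = \tfrac{b}{a}e^{-c/a}$. By definition of the Lambert $W$ function, $y = W_k(z)$, and back-substituting $t = \log\sigma^2$ gives $\log\hat\sigma^2_{\text{EL}} = W_k\!\big(\tfrac{b}{a}e^{-c/a}\big) + c/a$, i.e. the claimed formula. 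The two branches $k=0,1$ correspond precisely to the two solutions guaranteed by Lemma~\ref{Lemma2} when $\lr_0 < \lr(r,\sigma^2_{\text{ML}})$, since the argument $z$ will be negative (as $a=r-N<0$) and lie in $(-e^{-1},0)$, the interval on which $W$ is double-valued.

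The main obstacle I anticipate is bookkeeping the clutter-block terms correctly — specifically, verifying that the first $r$ ratios $d_i/\lambda_i^\star$ genuinely collapse to a $\sigma^2$-independent contribution (folded into the constant $c$) under the hypothesis $\lr_0 < \lr(r,\sigma^2_{\text{ML}})$, and confirming that the clipping $\min(1, \sigma^2/d_i)$ in Eq.~\eqref{Eq:RCMLsolution} does not reintroduce $\sigma^2$-dependence in the operative regime. I also need to check the sign conventions so that the argument of $W$ comes out as $\tfrac{b}{a}e^{-c/a}$ with the correct sign and lands in the branch-point interval $[-e^{-1},0)$, which is what makes the two-branch statement meaningful; this ties back directly to the three-case dichotomy following Lemma~\ref{Lemma2} (no solution / one solution / two solutions according to whether $z < -e^{-1}$, $z=-e^{-1}$, or $z\in(-e^{-1},0)$). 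Everything else is routine algebra and an appeal to the defining identity $W(z)e^{W(z)}=z$.
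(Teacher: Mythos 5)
Your proposal follows essentially the same route as the paper's Appendix C: write $\lr$ as $(\text{const})\cdot(\sigma^2)^{r-N}e^{-b/\sigma^2}$, take logarithms, substitute $u=\log\sigma^2$ and then $s=u-c/a$ to reach the canonical form $se^{s}=\frac{b}{a}e^{-c/a}$, and invert with the Lambert $W$ identity. Your additional observation that $a<0$ forces the argument into $(-e^{-1},0)$ so that the two branches $k=0,1$ correspond exactly to the two solutions of Lemma~\ref{Lemma2} is a correct refinement that the paper's proof leaves implicit.
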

\begin{proof}
We first set $\lr(\sigma^2)$ to $\lr_0$ and rewrite the equation by using a transformation of variables. The equation is reduced to a well-known form whose solution is expressed by a Lambert $W$ function. See Appendix C for details.
\end{proof}
Lemma \ref{Lemma3} shows that there is a closed-form solution of the optimal noise power for a fixed rank. Therefore we do not need expensive iterative or numerical algorithms to find the optimal noise power.

Now we propose the method to alternately find the optimal solution of both the rank and the noise power. For a fixed $\sigma^2$, we can obtain the optimal rank via Algorithm 1. For a fixed rank, we should consider three cases described above. For the first case that the LR value corresponding $\sigma_{\text{ML}}^2$ is less than $\lr_0$, we increase the rank until at least one of the solutions of $\sigma^2$ exists. For the second case, we can easily determine $\hat \sigma^2 = \sigma_{\text{ML}}^2$. For the third case that there are two solutions of $\sigma^2$, we have to choose one among two EL solutions and the ML solution. We intuitively observe that with target-free training samples the values of the test statistics such as the normalized matched filter given in \eqref{Eq:NMF} are typically smaller for the better estimator since smaller values of the test statistics clearly separate the values from observations including target information and lead to higher detection probability. Therefore, we generate the values of the test statistics for estimates with $\sigma_{\text{ML}}^2$, $\sigma_{\text{EL1}}^2$, $\sigma_{\text{EL2}}^2$ and choose one that generates the smallest average value of the test statistics. The detailed procedure of jointly determining the best rank and noise power is described in Algorithm 2.

\be
\label{Eq:NMF}
\dfrac{|\mb{s}^H \hat{\mb{R}}^{-1}\mb{z}|^2}{(\mb{s}^H \hat{\mb{R}}^{-1}\mb{s})(\mb{z}^H \hat{\mb{R}}^{-1}\mb{z})} \overset{H_1}{\underset{H_0}{\gtrless}} \lambda_{\text{NMF}}
\ee

\begin{figure}
\centering
\includegraphics[scale=0.5]{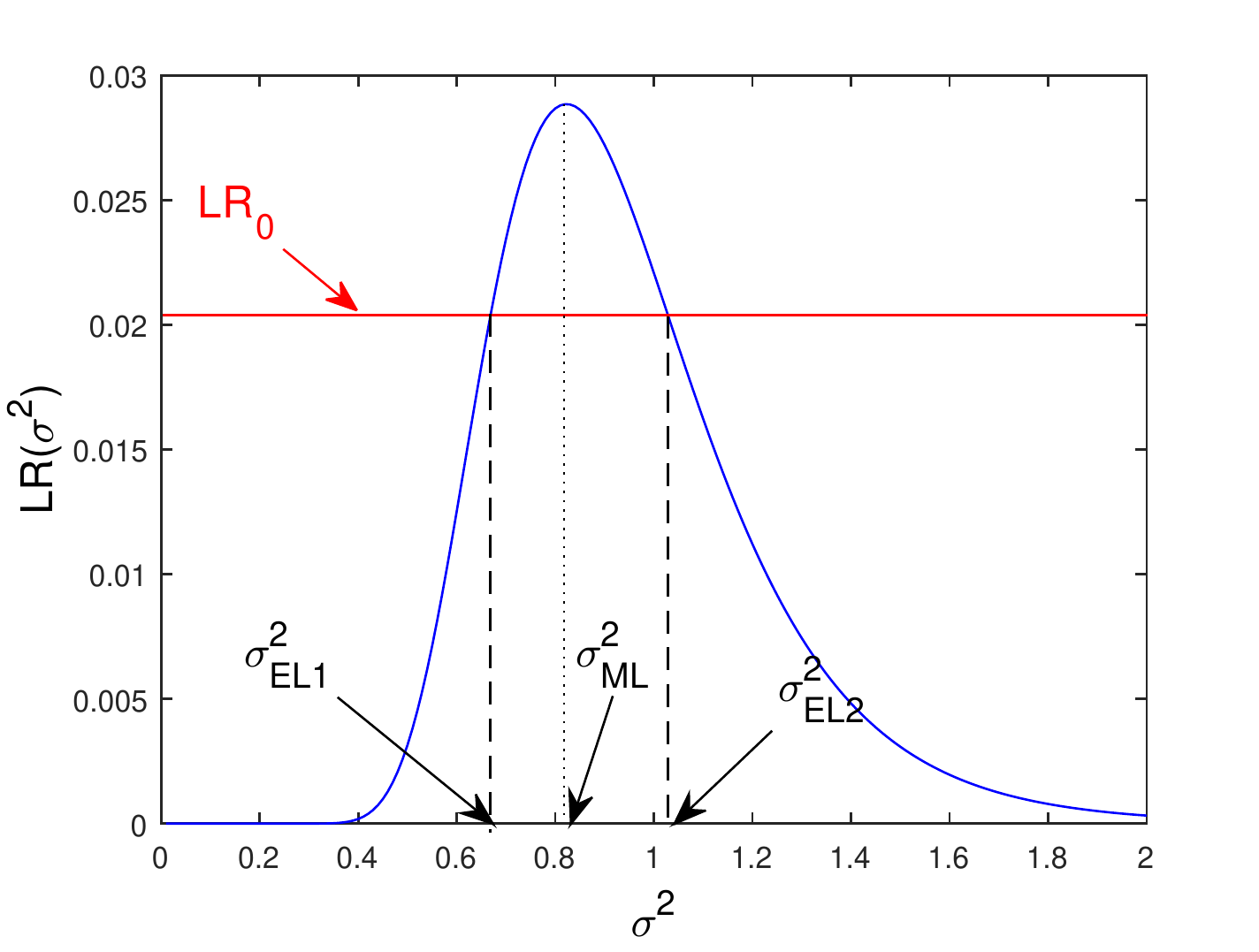}
\caption{The LR value versus $\sigma^2$ for the simulation model, $N=20$, $K=40$, $r=5$}
\label{Fig:LR_sigma}
\end{figure}

\begin{algorithm}[t]
\caption{The proposed algorithm to select the rank and the noise level via EL}
\label{Alg:RankNoise}
    \begin{algorithmic}[1]
        \STATE Initialize the rank $r$ by physical environment such as Brennan rule or the number of jammers.
        \STATE If there is no solution of $\sigma^2$ for given $r$, increase $r$ until the solution of $\sigma^2$ exists.
        \STATE Obtain $\sigma_{\text{ML}}^2 =\frac{1}{N-r} \sum_{i=r+1}^N d_i$.
        \STATE For given $\sigma_{\text{ML}}^2$, find a new $r$ using Algorithm 1.
        \STATE Repeat Step 3 and Step 4 until the rank $r$ converges.
        \STATE After $r$ is determined, choose $\hat \sigma^2$ among $\sigma_{\text{ML}}^2$, $\sigma_{\text{EL1}}^2$, $\sigma_{\text{EL2}}^2$.
    \end{algorithmic}
\end{algorithm}

\subsection{Selection of condition number constraint}
\label{Sec:ConditionNumber}

Now we propose a method to determine the condition number constraint through the EL approach in this section. As shown in Eq. \eqref{Eq:CN1} through Eq. \eqref{Eq:Giu2}, the condition number constrained ML estimator is a function of $u$ which is a function of the condition number $K_{\max}$. Therefore, the final estimate is also a function of $K_{\max}$. Our goal is to find the optimal condition number so that the LR value of the estimated covariance matrix should be as close as possible to the statistical median value of the LR value of the true covariance matrix, that is
\be
\hat{\mb R}_{\text{CNCML}_\text{EL}} = \hat \sigma^2 \mb{V} {\mb\Lambda^\star}^{-1}(\hat K_{\max}) \mb{V}^H
\ee
where
\be
\label{Eq:OptimalCN}
\hat K_{\max} \equiv \arg\min_{K_{\max} \geq 1} \Big| \lr\big(\mb R_\text{CNCML}(K_{\max}), \mb Z\big)  - \lr_0 \Big|^2
\ee

Before we discuss the algorithm to find the optimal condition number, we analyze the closed-form solution for the condition number constrained ML estimation which is proposed in \cite{Aubry12}. We derive a more explicit closed-form solution to analyze the LR values of the estimator more tractably.

\begin{lemma}
\label{Lemma4}
The more simplified closed-form solution of the condition number constrained ML estimator is given by
   \begin{enumerate}
    \item $d_1\leq \sigma^2$,
    \be
    \hat{\mb R}_{\text{CN}} = \sigma^2 \mb I
    \ee

    \item $\sigma^2 \leq d_1 \leq \sigma^2 K_{\max}$,
    \be
    \hat{\mb R}_{\text{CN}} = \hat{\mb R}_{\text{FML}}
    \ee

    \item $d_1 > \sigma^2 K_{\max}$ and $K_{\max} \geq \frac{\sum_{i=1}^c d_i}{c - \sum_{\bar N + 1}^N (d_i -1)}$,
    \be
    \hat{\mb R}_{\text{CN}} = \mb\Phi \diag(\bs\lambda^*) \mb\Phi^H
    \ee
    where
    \be
    \bs\lambda^\star = \big[ \sigma^2 K_{\max}, \ldots, \sigma^2 K_{\max}, d_{c+1}, \ldots, d_{\bar N}, \sigma^2,\ldots,\sigma^2  \big],
    \ee
     $c$ and $\bar N$ are the vector of the eigenvalues of the estimate, the largest indices so that $d_c > \sigma^2 K_{\max}$, and $d_{\bar N} \geq \sigma^2$

    \item $d_1 > \sigma^2 K_{\max}$ and $K_{\max} < \frac{\sum_{i=1}^c d_i}{c - \sum_{\bar N + 1}^N (d_i -1)}$,
    \be
    \bs\lambda^\star = \big[ \frac{\sigma^2}{u}, \ldots, \frac{\sigma^2}{u}, d_{p+1}, \ldots, d_q,\frac{\sigma^2}{uK_{\max}},\ldots,\frac{\sigma^2}{uK_{\max}}  \big]
    \ee
    \end{enumerate}
and the condition numbers of the estimates are $1$, $\frac{d_1}{\sigma^2}$, $K_{\max}$, and $K_{\max}$, respectively.
\end{lemma}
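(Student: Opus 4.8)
The plan is to start from the Aubry \emph{et al.} characterization recalled in \eqref{Eq:CN1}--\eqref{Eq:Giu2}: $\hat{\mb R}_{\text{CN}}=\sigma^2\mb V\,{\mb\Lambda^\star}^{-1}(\bar u)\,\mb V^H$ with $\lambda_i^\star(u)=\min\big(\min(K_{\max}u,1),\,\max(u,1/\bar d_i)\big)$ and $\bar u$ the minimizer on $(0,1]$ of the scalar program \eqref{Eq:OptimizationU}, where $\bar d_i=d_i/\sigma^2$ are the normalized sample eigenvalues in descending order. So the lemma reduces to (i) solving that one-dimensional program explicitly and (ii) substituting $\bar u$ into $\lambda_i^\star$, the $i$th eigenvalue of $\hat{\mb R}_{\text{CN}}$ being $\sigma^2/\lambda_i^\star(\bar u)$. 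The preliminary fact I would establish is that $\sum_i G_i$ is convex on $(0,1]$: each $G_i$ consists of a strictly convex logarithmic branch, flat branch(es), and an increasing branch, and at the breakpoints $u=1/(K_{\max}\bar d_i)$, $u=1/\bar d_i$, $u=1/K_{\max}$ the one-sided derivatives agree or jump upward, so $\Psi(u):=\big(\sum_i G_i\big)'(u)$ is nondecreasing. Hence $\bar u$ is the smallest $u$ with $\Psi(u)\ge 0$ (or $\bar u=1$ if $\Psi<0$ throughout), and the four cases are dictated by where that sign change sits relative to $u=1/K_{\max}$.

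For the two constraint-slack regimes: if $d_1\le\sigma^2$ then every $\bar d_i\le 1$, so each $G_i$ is nonincreasing on $(0,1/K_{\max}]$ and flat afterward, giving $\Psi\le 0$ on $(0,1/K_{\max}]$ and $\Psi\equiv 0$ on $[1/K_{\max},1]$; any $\bar u\in[1/K_{\max},1]$ is then optimal and yields $\lambda_i^\star(\bar u)=\min(1,1/\bar d_i)=1$, i.e. $\hat{\mb R}_{\text{CN}}=\sigma^2\mb I$. If $\sigma^2\le d_1\le\sigma^2 K_{\max}$ no $\bar d_i$ exceeds $K_{\max}$, so on $[1/K_{\max},1/\bar d_1]$ every logarithmic and clamp breakpoint has passed while no increasing branch has started, $\Psi\equiv 0$ there, and any such $\bar u$ gives $\lambda_i^\star(\bar u)=1/\bar d_i$ for $\bar d_i\ge 1$ and $\lambda_i^\star(\bar u)=1$ for $\bar d_i<1$, so $\hat{\mb R}_{\text{CN}}$ has eigenvalues $\max(d_i,\sigma^2)$, i.e. it coincides with $\hat{\mb R}_{\text{FML}}$, whose condition number is $d_1/\sigma^2$ (the structural model $\mb R=\sigma^2\mb I+\mb R_c$ with rank-deficient $\mb R_c$ places the smallest sample eigenvalue below $\sigma^2$).

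For the constraint-active regime $d_1>\sigma^2 K_{\max}$, set $c=\#\{i:\bar d_i>K_{\max}\}\ge 1$ and $\bar N=\#\{i:\bar d_i\ge 1\}$. At $u=1/K_{\max}$ only the indices with $\bar d_i>K_{\max}$ (increasing branch, derivative $\bar d_i-K_{\max}>0$) and with $\bar d_i<1$ (logarithmic branch, derivative $K_{\max}(\bar d_i-1)\le 0$) contribute; hence the right derivative of $\sum_i G_i$ there equals $\sum_{i\le c}(\bar d_i-K_{\max})>0$, forcing $\bar u\le 1/K_{\max}$, while the left derivative equals $\sum_{i\le c}(\bar d_i-K_{\max})+K_{\max}\sum_{i>\bar N}(\bar d_i-1)$. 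By convexity $\bar u=1/K_{\max}$ exactly when the latter is $\le 0$, which rearranges to $K_{\max}\ge \sum_{i\le c}\bar d_i\big/\big(c-\sum_{i>\bar N}(\bar d_i-1)\big)$ --- the Case~3 threshold --- whereupon $\lambda_i^\star$ equals $1/K_{\max}$ for $i\le c$, $1/\bar d_i$ for $c<i\le\bar N$, and $1$ for $i>\bar N$, i.e. the Case~3 eigenvalues $\sigma^2 K_{\max},\dots,d_{c+1},\dots,d_{\bar N},\sigma^2,\dots$ with condition number $K_{\max}$. Below that threshold $\bar u<1/K_{\max}$ is interior; writing $p=\#\{i:\bar d_i>1/\bar u\}$ and $q=\#\{i:\bar d_i>1/(K_{\max}\bar u)\}$, the top $p$ indices lie on their increasing branch, the middle ones flat, the bottom $N-q$ on their logarithmic branch, so $\Psi(\bar u)=0$ reads $\sum_{i\le p}\bar d_i+K_{\max}\sum_{i>q}\bar d_i=(p+N-q)/\bar u$, and substitution gives $\lambda_i^\star$ equal to $\bar u$ (top), $1/\bar d_i$ (middle), $K_{\max}\bar u$ (bottom), i.e. the Case~4 eigenvalues $\sigma^2/\bar u,\dots,d_{p+1},\dots,d_q,\sigma^2/(K_{\max}\bar u),\dots$, again with condition number $(\sigma^2/\bar u)\big/(\sigma^2/(K_{\max}\bar u))=K_{\max}$.

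The main obstacle is bookkeeping rather than any single hard estimate. One must order the up to $2N+1$ breakpoints $\{1/(K_{\max}\bar d_i),\,1/\bar d_i,\,1/K_{\max}\}$ --- which depends on $K_{\max}$ and on all the $\bar d_i$ --- and certify that the sign change of $\Psi$ falls on the claimed linear piece; the ordering $\bar d_1\ge\cdots\ge\bar d_N$ makes the clamp thresholds nested so this is clean, but must be spelled out, and ties ($\bar d_i=1$ or $\bar d_i=K_{\max}$, or repeated $d_i$) need a fixed convention, harmless by continuity of $\sum_i G_i$. The genuinely delicate point is Case~4: the counts $p,q$ that \emph{define} which piece $\bar u$ lies on are themselves functions of $\bar u$, so one closes the loop with a short waterfilling-type argument --- posit $(p,q)$, solve the resulting linear equation for $\bar u$, verify the counts, and invoke convexity of $\sum_i G_i$ for uniqueness of the fixed point. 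Finally I would carry the $\sigma^2$ rescaling explicitly, since \eqref{Eq:Giu1}--\eqref{Eq:Giu2} are written in $\bar d_i=d_i/\sigma^2$ whereas the lemma reports unnormalized eigenvalues. Once convexity of $\sum_i G_i$ is in hand, each of the four cases is a one-line substitution.
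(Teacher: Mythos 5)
Your proposal is correct and follows essentially the same route as the paper's Appendix D: both reduce the lemma to locating the minimizer $\bar u$ of the scalar program \eqref{Eq:OptimizationU} through the one-sided derivatives of $\sum_i G_i(u)$ at the breakpoints $1/K_{\max}$ and $1/\bar d_1$, and then substitute $\bar u$ into $\lambda_i^\star(\bar u)$ to read off the eigenvalues and condition numbers case by case. The only organizational differences are that you make the convexity of $\sum_i G_i$ explicit (the paper imports the optimality characterization of $\bar u$ from Aubry \emph{et al.}) and you absorb the paper's separate argument that $u^\star = 1/\bar d_1$ is impossible when $d_1 > \sigma^2 K_{\max}$ into the sign of the right derivative at $u = 1/K_{\max}$.
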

\begin{proof}
In each case, we derive the closed form using $\bar u$ which is the optimal solution of \eqref{Eq:OptimizationU} provided in \cite{Aubry12}. See Appendix D for details.
\end{proof}

From Lemma \ref{Lemma4}, for the first two cases that is $d_1 \leq \sigma^2 K_{\max}$, the estimator is either a scaled identity matrix or the FML. Therefore, there is no need to find an optimal condition number in these cases since the estimator is not a function of the condition number.

Now we investigate uniqueness of the optimal condition number as we have done in the case of only rank constraint for the last two cases where the optimal eigenvalues are functions of the condition number.

\begin{lemma}
\label{Lemma5}
The LR value of the condition number ML estimator is a monotonically increasing function with respect to the condition number $K_{\max}$ and there is only one unique $K_{{\max}_{\text{EL}}}$.
\end{lemma}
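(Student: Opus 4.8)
The plan is to mimic the strategy used for Lemma~\ref{Lemma1}, but now carrying the condition number $K_{\max}$ as the continuous parameter. By Lemma~\ref{Lemma4}, in the only two cases that matter (cases 3 and 4, where $d_1 > \sigma^2 K_{\max}$), the eigenvalues $\lambda_i^\star$ of $\hat{\mb R}_{\text{CN}}$ are explicit functions of $K_{\max}$ (and, in case 4, of the auxiliary $\bar u$, itself a function of $K_{\max}$ through \eqref{Eq:OptimizationU}). Since the eigenvectors of $\hat{\mb R}_{\text{CN}}$ coincide with those of $\mb S$, I would first reuse \eqref{Eq:SimplifiedLR} to write
\be
\lr\big(\mb R_\text{CNCML}(K_{\max}), \mb Z\big) = \frac{\prod_{i=1}^N \frac{d_i}{\lambda_i^\star(K_{\max})} \cdot \exp N}{\exp\big[\sum_{i=1}^N \frac{d_i}{\lambda_i^\star(K_{\max})}\big]},
\ee
and then take $\log$ to obtain $\log\lr = N + \sum_i\big(\log d_i - \log\lambda_i^\star - \tfrac{d_i}{\lambda_i^\star}\big)$. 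The key elementary observation is that the scalar map $t \mapsto -\log t - d_i/t$ is increasing on $(0, d_i]$ and decreasing on $[d_i,\infty)$, so each summand increases precisely when $\lambda_i^\star(K_{\max})$ moves toward $d_i$. The whole argument then reduces to showing that, as $K_{\max}$ increases, every eigenvalue $\lambda_i^\star(K_{\max})$ moves monotonically toward its corresponding $d_i$ (or stays put).

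Concretely, I would split the eigenvalue index set according to the three regimes appearing in case 3 (and similarly case 4): the ``clipped-high'' block where $\lambda_i^\star = \sigma^2 K_{\max}$, the ``untouched'' block where $\lambda_i^\star = d_i$, and the ``clipped-low'' block where $\lambda_i^\star = \sigma^2$ (or $\sigma^2/u$, $\sigma^2/(uK_{\max})$ in case 4). For the clipped-high entries $\lambda_i^\star = \sigma^2 K_{\max} < d_i$: increasing $K_{\max}$ pushes $\lambda_i^\star$ up toward $d_i$, so the summand increases. For the untouched entries the summand is constant. For the clipped-low entries $\lambda_i^\star \geq \sigma^2 \geq $ the corresponding small $d_i$: here I need the behavior of $\sigma^2/u$ and $\sigma^2/(uK_{\max})$ in $K_{\max}$, which requires understanding how the optimizer $\bar u$ of \eqref{Eq:OptimizationU} varies with $K_{\max}$. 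One also has to account for indices migrating between blocks as $K_{\max}$ crosses the thresholds $d_i/\sigma^2$ — but at such a transition the relevant $\lambda_i^\star$ is continuous (e.g. $\sigma^2 K_{\max} = d_i$ exactly), so the log-LR is continuous and the piecewise monotonicity glues into global monotonicity. Strict monotonicity follows because, generically, at least one clipped-high index is present and strictly contributes whenever $K_{\max}$ is below the fully-saturated value.

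The main obstacle I anticipate is the dependence of $\bar u$ on $K_{\max}$ in case 4: unlike case 3, the low eigenvalues are $\sigma^2/u$ and $\sigma^2/(uK_{\max})$, and one must verify that the composite quantities $d_i u(K_{\max})/\sigma^2$ and $d_i u(K_{\max}) K_{\max}/\sigma^2$ still evolve monotonically toward $1$. I would handle this either by differentiating the optimality condition $\sum_i G_i'(\bar u) = 0$ implicitly to get the sign of $d\bar u/dK_{\max}$ (using the explicit piecewise forms \eqref{Eq:Giu1}--\eqref{Eq:Giu2}), or — cleaner — by an envelope/monotone-comparison argument: for $K_{\max}' > K_{\max}$ the feasible set of \eqref{Eq:OptimizationU} is unchanged but the objective $\sum_i G_i(u;K_{\max})$ changes in a way that lets one compare LR values directly without tracking $\bar u$ itself, analogously to how Lemma~\ref{Lemma1} compares consecutive ranks. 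Once monotonicity of $\lr\big(\mb R_\text{CNCML}(K_{\max}), \mb Z\big)$ in $K_{\max}$ is established, uniqueness of $\hat K_{\max}$ in \eqref{Eq:OptimalCN} is immediate: a strictly monotone continuous function crosses the level $\lr_0$ at most once, so $|\lr - \lr_0|^2$ has a single minimizer, exactly as argued after Lemma~\ref{Lemma1}. Details go in Appendix~E.
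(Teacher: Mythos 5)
Your overall architecture is the same as the paper's: use Lemma~\ref{Lemma4} to reduce to the two nontrivial regimes, write $\lr$ as a function of the eigenvalue ratios $d_i/\lambda_i^\star$ as in \eqref{Eq:SimplifiedLR}, prove piecewise monotonicity in $\Kmax$, glue across block transitions by continuity, and deduce uniqueness. For case~3 your per-eigenvalue argument (each summand $\log d_i - \log\lambda_i^\star - d_i/\lambda_i^\star$ increases exactly when $\lambda_i^\star$ moves toward $d_i$, and the only moving block $\lambda_i^\star = \sigma^2\Kmax$ moves up toward $d_i$) is complete and in fact cleaner than the paper's explicit differentiation of the denominator of $\lr(\Kmax)$.

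The gap is case~4, which is the substance of the lemma, and there your write-up stops at naming the obstacle and listing two candidate strategies without executing either. Worse, the per-eigenvalue framing you emphasize does not close this case even granted the fact (which the paper proves from the stationarity condition $\sum_i G_i'(\bar u)=0$ using \eqref{Eq:Giu1}--\eqref{Eq:Giu2}) that $\bar u$ decreases as $\Kmax$ increases: the clipped-high block $\lambda_i^\star = \sigma^2/\bar u$ then moves toward $d_i$ as required, but the clipped-low block is $\lambda_i^\star = \sigma^2/(\bar u \Kmax)$, and whether it moves toward its $d_i$ (i.e.\ downward) depends on the monotonicity of the \emph{product} $\bar u(\Kmax)\,\Kmax$, which does not follow from $d\bar u/d\Kmax<0$ alone --- the two factors move in opposite directions. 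The paper avoids this by abandoning the per-eigenvalue view in case~4: it treats $\lr$ as a function of the two variables $(u,\Kmax)$, shows by differentiating the \emph{aggregate} expression that $\lr$ is decreasing in $u$ at fixed $\Kmax$ and increasing in $\Kmax$ at fixed $u$, and then combines these with $d\bar u/d\Kmax<0$ via the chain rule so that both contributions to the total derivative are nonnegative. That aggregate two-variable step (or, alternatively, a separate proof that $\bar u\Kmax$ is nondecreasing in $\Kmax$) is precisely what your proposal is missing; your ``envelope/monotone-comparison'' alternative gestures at it but is not carried out. Until one of these is supplied, the monotonicity claim in the regime $\Kmax < \sum_{i=1}^p \bar d_i / \bigl(p - \sum_{i=\bar N+1}^N(\bar d_i - 1)\bigr)$, and hence the uniqueness of $\hat K_{\max}$, is not established.
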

\begin{proof}
We simplify $\lr(K_{\max})$ and evaluate the first derivative. Then we show its increasing property in each case in Lemma \ref{Lemma4}. See Appendix E for details.
\end{proof}

Lemma \ref{Lemma5} formally proves that the there exist only one optimal condition number and therefore we can find the optimal condition number numerically. The algorithm of finding the global optimal condition number is shown in Algorithm \ref{Alg:ConditionNumber}. We first set the initial condition number as the ML condition number obtained by \cite{Aubry12}. Then we increase or decrease the condition number to the direction where the LR value decreases. Reducing the stepsize as the direction is reversed, we find the optimal condition number as precisely as we want.

\begin{algorithm}[t]
\caption{The proposed algorithm to select condition number via EL }
\label{Alg:ConditionNumber}
    \begin{algorithmic}[1]
        \STATE Obtain the ML solution of the condition number $K_{\max_{\text{ML}}}$ by the method in \cite{Aubry12} and set the initial value of $K_{\max} = K_{\max_{\text{ML}}}$
        \STATE Set the initial step, $\Delta = K_{\max}/100$
        \STATE Evaluate $\lr(K_{\max}-\Delta)$, $\lr(K_{\max})$, $\lr(K_{\max}+\Delta)$
            \begin{itemize}
                \item if $|\lr(K_{\max_{\text{ML}}}+\Delta) - \lr_0| < |\lr(K_{\max_{\text{ML}}}) - \lr_0|$\\
                $\rightarrow$ increase $K_{\max}$ by $\Delta$ until it does not hold.\\
                $\rightarrow$ then $\Delta = -\Delta/10$
                \item elseif $|\lr(K_{\max_{\text{ML}}}+\Delta) - \lr_0| > |\lr(K_{\max_{\text{ML}}}) - \lr_0|$\\
                $\rightarrow$ decrease $K_{\max}$ by $\Delta$ until it does not hold.\\
                $\rightarrow$ then $\Delta = -\Delta/10$
            \end{itemize}
         \STATE Repeat Step 3 until $\Delta < 0.0001$.
    \end{algorithmic}
\end{algorithm}

\section{Experimental Validation}
\label{Sec:Experiments}

\subsection{Experimental setup}

We focus on structured covariance estimation techniques which incorporate rank, noise power and condition number constraints. Two data sets are used in the experiments: 1) a radar covariance simulation model and 2) the KASSPER dataset \cite{Bergin02}.

First, we consider a radar system with an $N$-element uniform linear array for the simulation model. The overall covariance which is composed of jammer and additive white noise can be modeled by
\be
\label{Eq:SimulationModel}
\mb{R}(n,m) = \sum_{i=1}^J \sigma_i^2 \sinc[0.5 \beta_i (n-m) \phi_i ] e^{j(n-m)\phi_i} + \sigma_a^2 \delta(n,m)
\ee
where $n,m \in \{1,\ldots,N\}$, $J$ is the number of jammers, $\sigma_i^2$ is the power associated with the $i$th jammer, $\phi_i$ is the jammer phase angle with respect to the antenna phase center, $\beta_i$ is the fractional bandwidth, $\sigma_a^2$ is the actual power level of the white disturbance term, and $\delta(n,m)$ has the value of 1 only when $n=m$ and 0 otherwise. This simulation model has been widely and very successfully used in previous literature \cite{Steiner00,Aubry12,Pallotta12,DeMaio09} for performance analysis.

Data from the L-band data set of KASSPER program is the other data set used in our experiments. Note that the KASSPER data set exhibits two desirable characteristics: 1) the low-rank structure of clutter and 2) the true covariance matrices for each range bin have been made available. These two characteristics facilitate comparisons via powerful figures of merit. The L-band data set consists of a data cube of 1000 range bins corresponding to the returns from a single coherent processing interval from $11$ channels and $32$ pulses. Therefore, the dimension of observations (spatio-temporal product) $N$ is $11 \times 32 = 352$. Other parameters are detailed in Table \ref{Tb:parameters}.

\begin{table}[!t]
\begin{center}
\caption{KASSPER Dataset-1 parameters}
\label{Tb:parameters}
\begin{tabular}{l l}
  \hline
  Parameter & Value\\
  \hline
  Carrier Frequency & 1240 MHz \\
  Bandwidth (BW) & 10 MHz \\
  Number of Antenna Elements & 11 \\
  Number of Pulses & 32 \\
  Pulse Repetition Frequency & 1984 Hz \\
  1000 Range Bins & 35 km to 50 km \\
  91 Azimuth Angles & $87^{\circ}$, $89^{\circ}$, $\ldots$ $267^{\circ}$ \\
  128 Doppler Frequencies & -992 Hz, -976.38 Hz, $\ldots$, 992 Hz \\
  Clutter Power & 40 dB \\
  Number of Targets & 226 (~200 detectable targets) \\
  Range of Target Dop. Freq. & -99.2 Hz to 372 Hz\\
  \hline
\end{tabular}
\end{center}
\end{table}

As a figure of merit, we use the normalized signal to interference and noise ratio (SINR). The normalized SINR measure is widely used and given by
\be
\eta = \dfrac{|\mb{s}^H\hat{\mb{R}}^{-1}\mb{s}|^2}{|\mb{s}^H\hat{\mb{R}}^{-1}\mb{R}\hat{\mb{R}}^{-1}\mb{s}||\mb{s}^H\mb{R}^{-1}\mb{s}|}
\ee
where $\mb s$ is the spatio-temporal steering vector, $\hat{\mb R}$ is the data-dependent estimate of $\mb R$, and $\mb R$ is the true covariance matrix. It is easily seen that $0<\eta<1$ and $\eta=1$ if and only if $\hat{\mb R} = \mb R$. The SINR is plotted in decibels in all our experiments, that is, $\text{SINR} \text{(dB)} = 10 \log_{10} \eta$. Therefore, $\text{SINR} \text{(dB)} \leq 0$. For the KASSPER data set, since the steering vector is a function of both azimuthal angle and Doppler frequency, we obtain plots as a function of one variable (azimuthal angle or Doppler) by marginalizing over the other variable. We evaluate and compare different covariance estimation techniques and parameter selection methodsas given by:
\begin{itemize}
    \item \textbf{Sample Covariance Matrix:} The sample covariance matrix is given by $\mb S = \frac{1}{K} \mb Z \mb Z^H$. It is well known that $\mb S$ is the unconstrained ML estimator under Gaussian disturbance statistics. We refer to this as SMI.
    \item \textbf{Fast Maximum Likelihood:} The fast maximum likelihood (FML) \cite{Steiner00} uses the structural constraint of the covariance matrix. The FML method just involves the eigenvalue decomposition of the sample covariance and perturbing eigenvalues to conform to the structure. The FML also can be considered as the RCML estimator with the rank which is the greatest index $i$ satisfying $\lambda_i > \sigma^2$ where $\lambda_i$'s are the eigenvalues of the sample covariance in descending order. Therefore, a rank can be considered as an output of the FML. The FML's success in radar STAP is widely known \cite{Rangaswamy04Sep}.
    \item \textbf{Rank Constrained ML Estimators:} The RCML estimator with the rank or the rank and the noise level obtained by the proposed methods using the expected likelihood approach. The rank is obtained by the EL approach in the case of the imperfect rank constraint and both of the rank and the noise level are obtained by the EL approach in the case of imperfect rank and noise power constraints. We refer to these as \RCMLEL \hspace{-1mm}.
    \item \textbf{Chen \emph{et al.} Rank Selection Method:} Chen \emph{et al.} \cite{Chen01} proposed a statistical procedure for detecting the multiplicity of the smallest eigenvalue of the structured covariance matrix using statistical selection theory. The rank can be estimated from their methods using pre-calculated parameters. We refer to this method as \RCMLC \hspace{-1mm}.
    \item \textbf{AIC:} Akaike \cite{Akaike74} proposed the information theoretic criteria for model selection. The Akaike's imformation criteria (AIC) selects the model that best fits the data for given a set of observations and a family of models, that is, a parameterized family of probability densities. Wax and Kailath \cite{Wax85} proposed the method to determine the number of signals from the observed data based on the AIC. Since their method only determines the rank, we compare the RCML estimator with the rank obtained by their method. We refer to this method as RCML$_{\text{AIC}}$.
    \item \textbf{Condition number constrained ML estimators:} The maximum likelihood estimation method of the covariance matrix with a condition number \cite{Aubry12} proposed by Aubry \emph{et al.} is considered for evaluating the performance with three different condition numbers. 1) \CNCML\hspace{-1mm}: the condition number obtained by the proposed method in \cite{Aubry12}, and 2) \CNCEL\hspace{-1mm}: the condition number obtained by the expected likelihood approach.
\end{itemize}

\subsection{Rank constraint}
\label{Sec:ResultRank}

\begin{figure}
\centering
\includegraphics[scale=0.5]{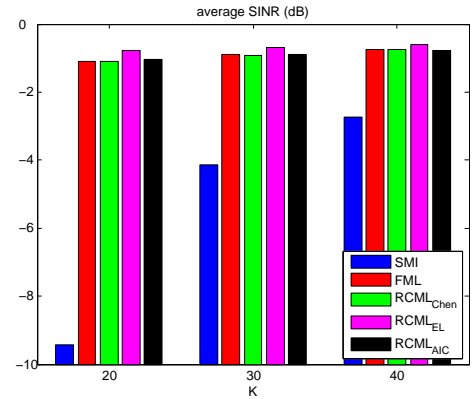}\label{Fig:SINR_Simulation}
\caption{Normalized SINR in dB versus number of training samples $K$ $(N=20)$ for the simulation model.}
\label{Fig:Simulation_rank}
\end{figure}

\begin{table}[!t]
\begin{center}
\caption{Ranges of ranks estimated by various methods}
\label{Tb:RankEstimated}
\begin{tabular}{|c|c|c|c|c|c|}
  \hline
  & SMI & FML & \RCMLC & \RCMLEL & \RCMLAIC\\
  \hline
  Simulation & 20 & 11-13 & 18-20 & 3-5 & 4-7\\
  \hline
  KASSPER & 352 & 200-210 & 300-350 & 41-45 & 47-60\\
  \hline
  Corrupted & 352 & 200-210 & 300-350 & 41-45 & 47-70\\
  \hline
\end{tabular}
\end{center}
\end{table}

\begin{figure*}[!t]
\begin{center}
\subfloat[]{\includegraphics[scale=0.5]{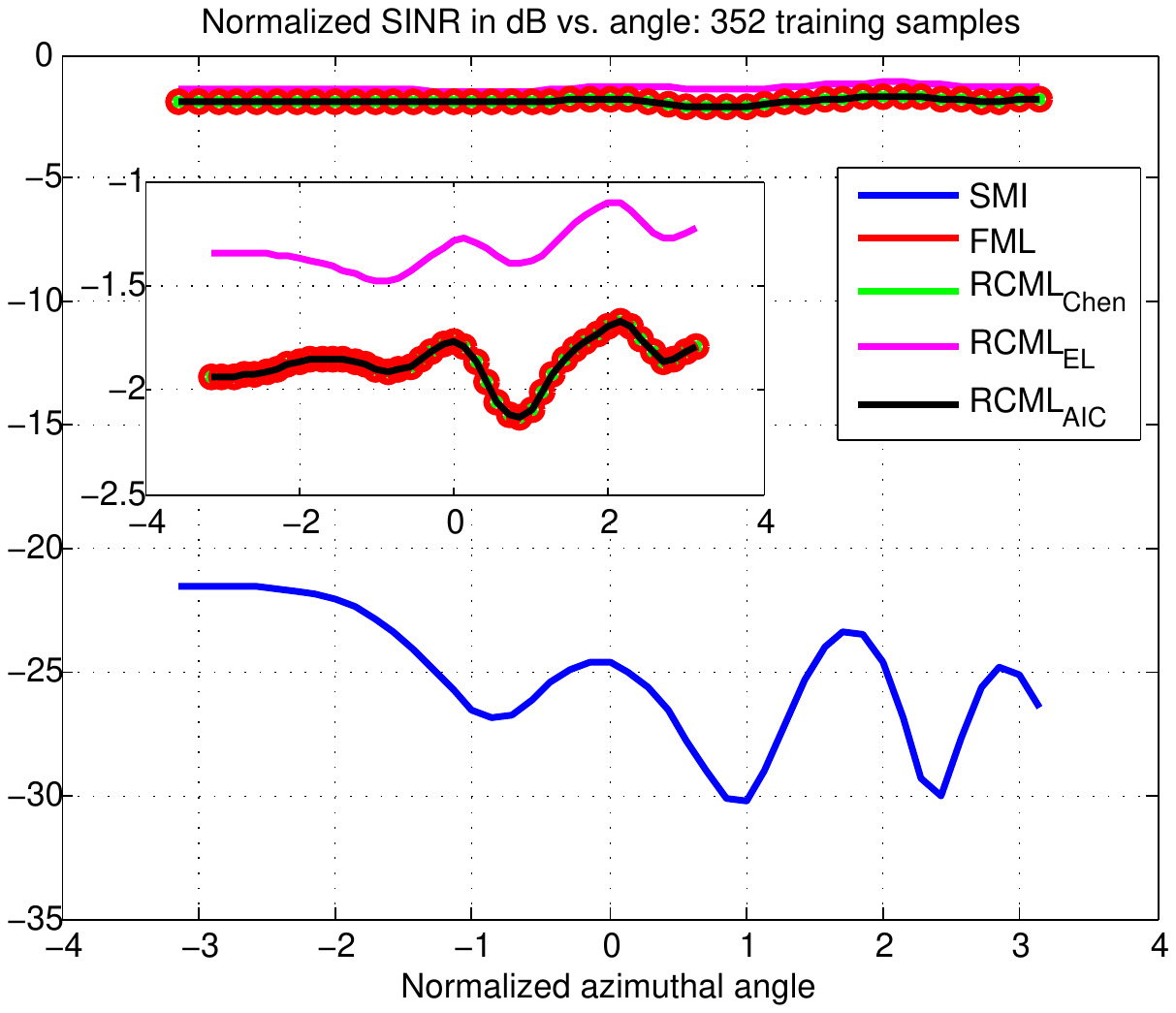}\label{Fig:KASSPER_rank_angle_352}}
\hfil
\subfloat[]{\includegraphics[scale=0.5]{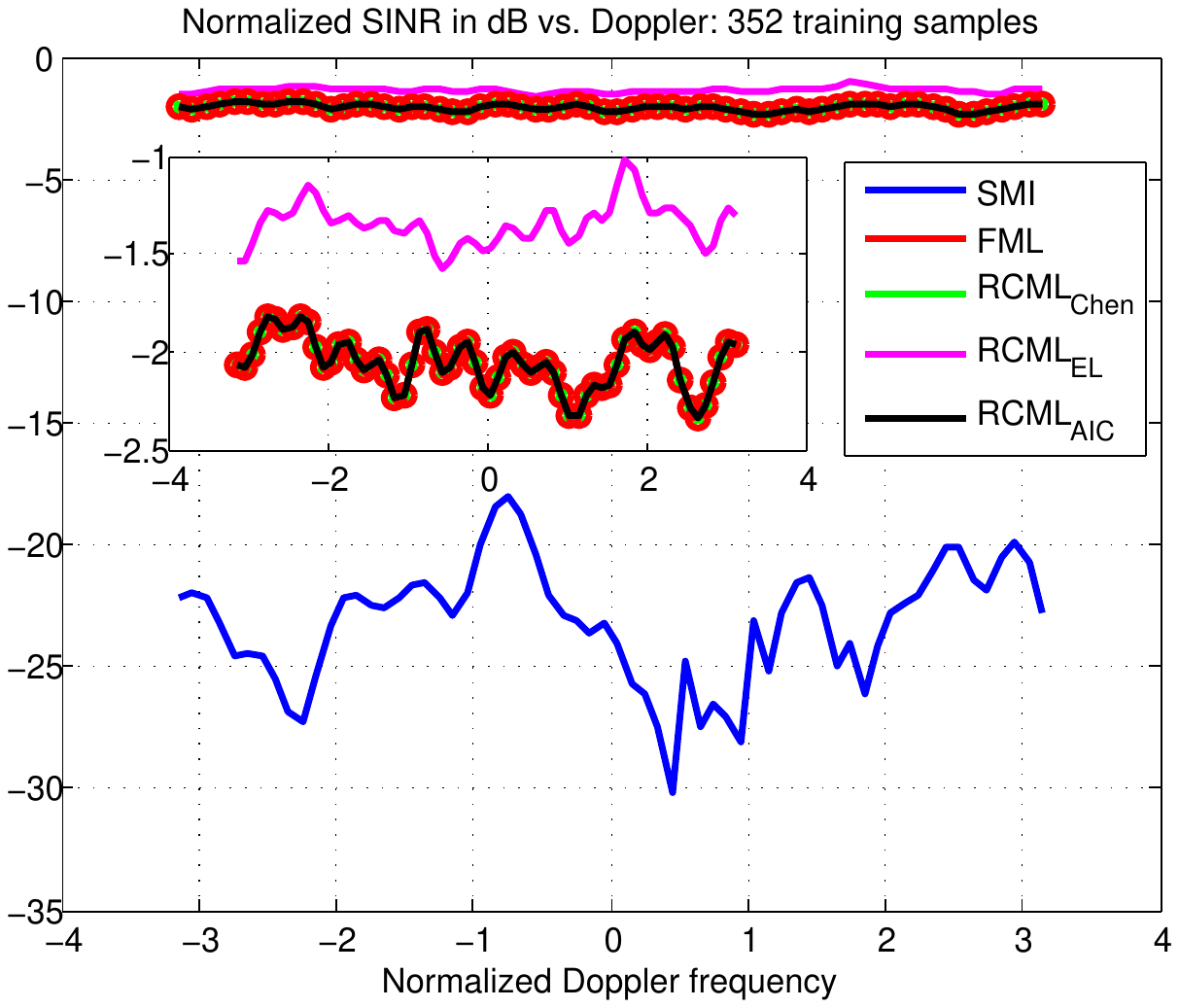}\label{Fig:KASSPER_rank_dop_352}}\\
\subfloat[]{\includegraphics[scale=0.5]{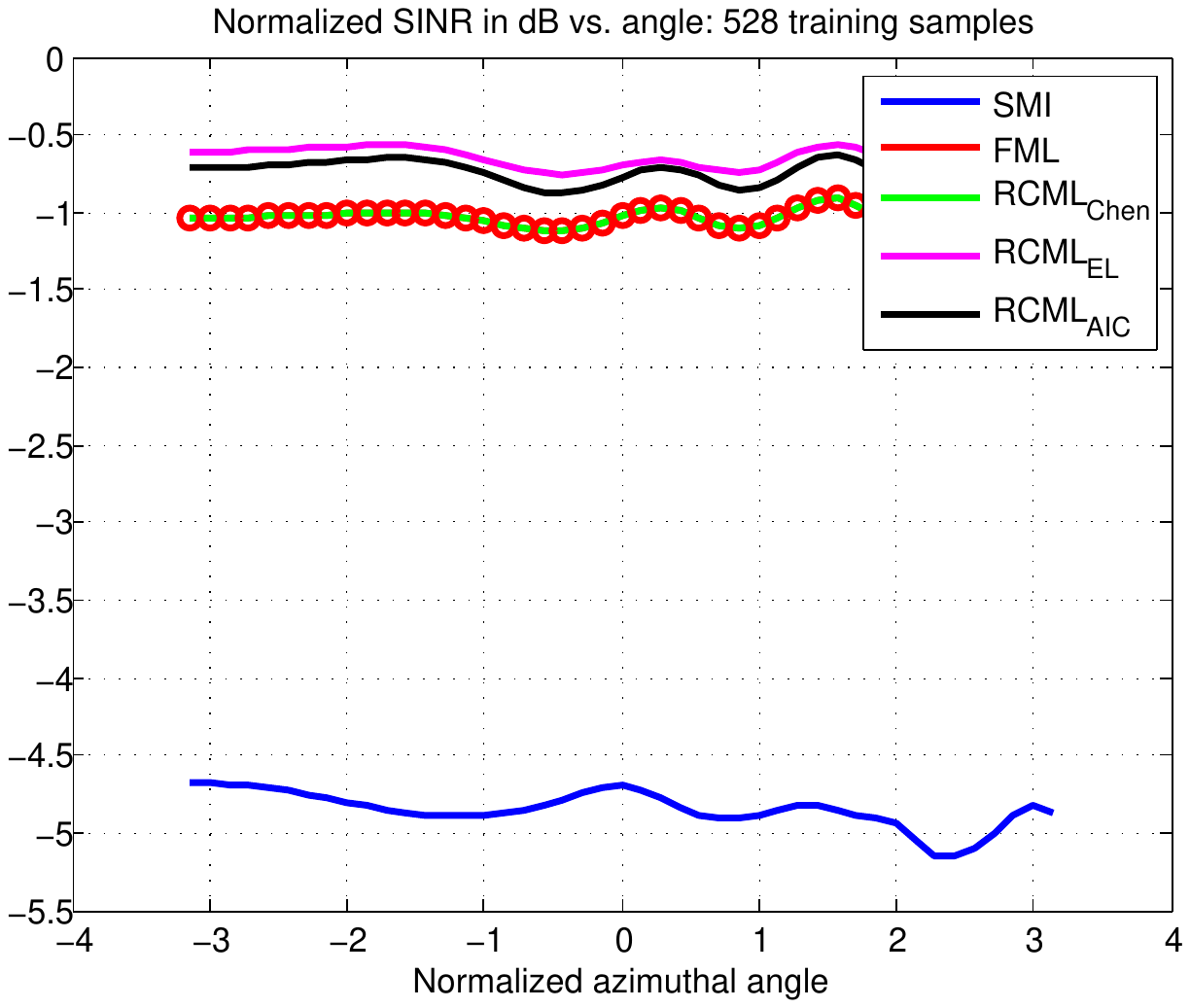}\label{Fig:KASSPER_rank_angle_528}}
\hfil
\subfloat[]{\includegraphics[scale=0.5]{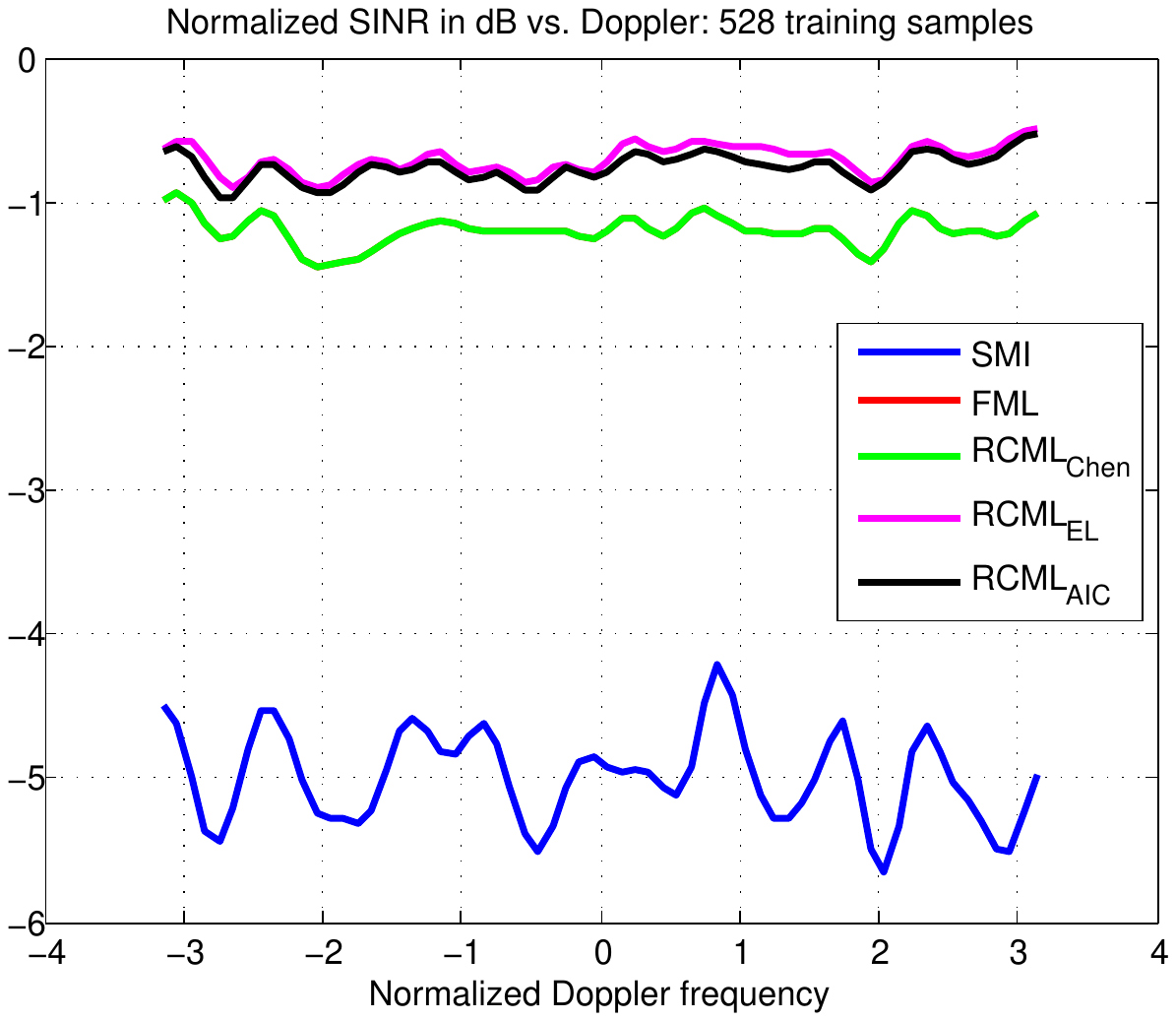}\label{Fig:KASSPER_rank_dop_528}}\\
\subfloat[]{\includegraphics[scale=0.5]{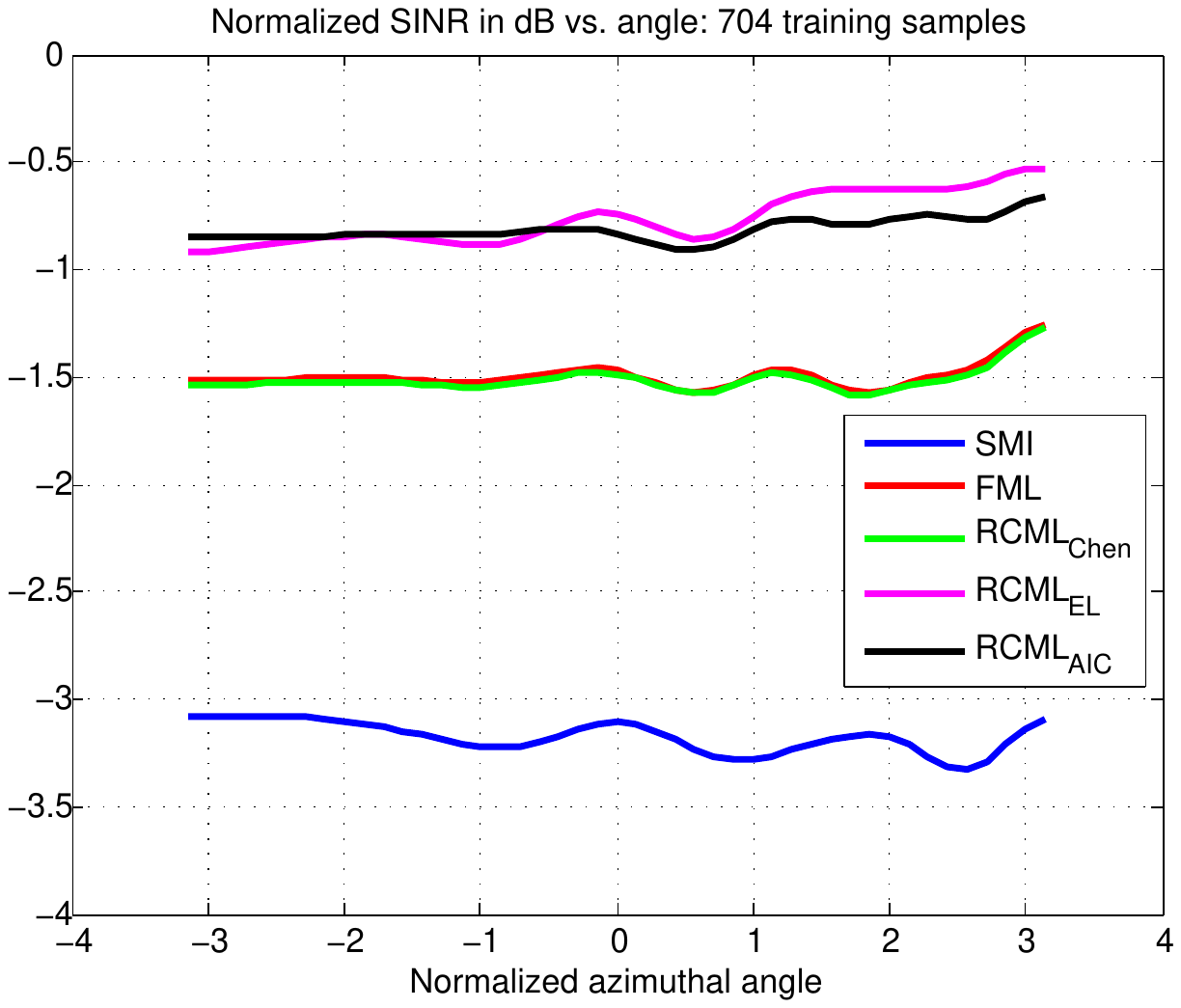}\label{Fig:KASSPER_rank_angle_704}}
\hfil
\subfloat[]{\includegraphics[scale=0.5]{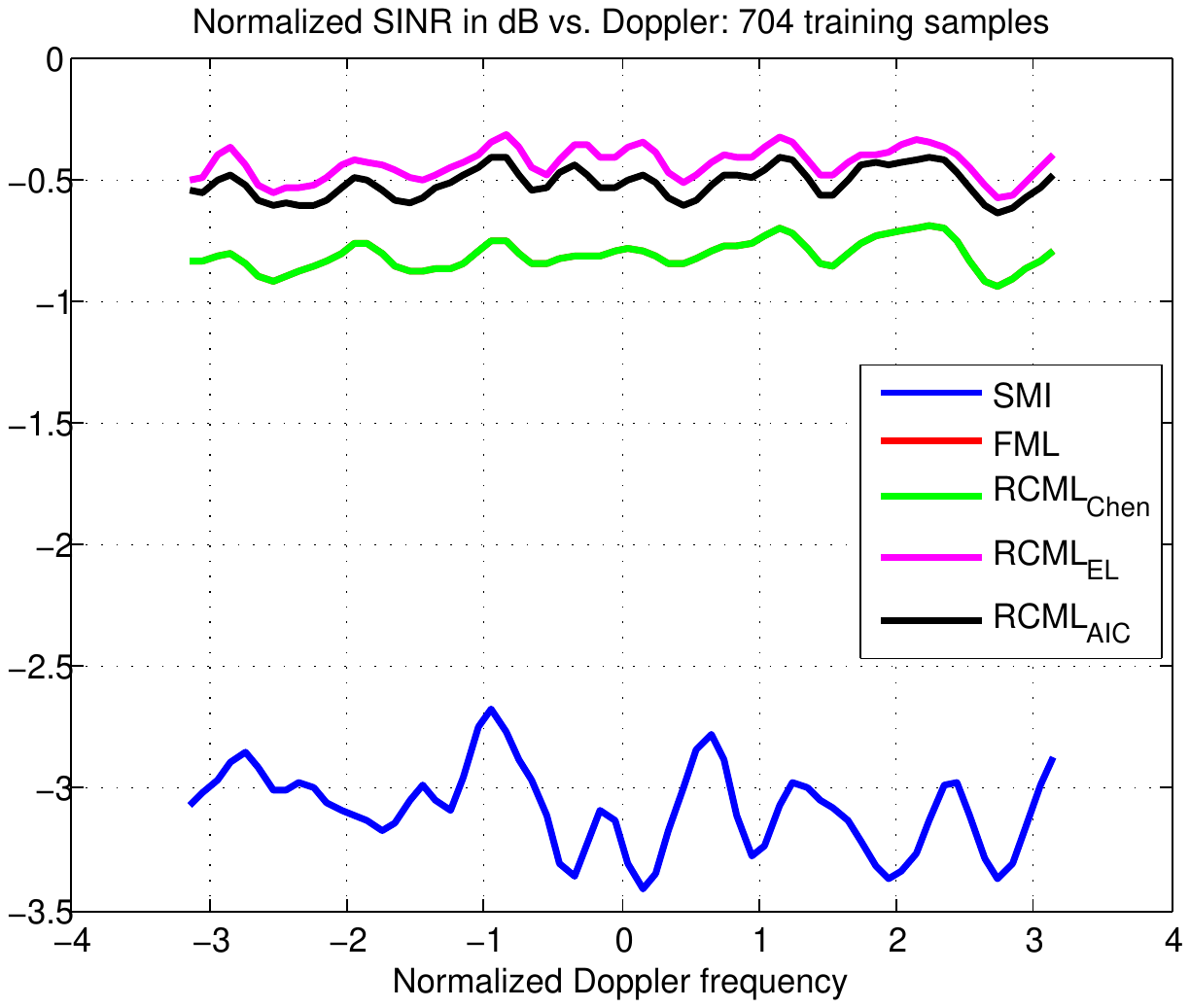}\label{Fig:KASSPER_rank_dop_704}}
\end{center}
\caption{Normalized SINR versus azimuthal angle and Doppler frequency for the KASSPER data set. (a) and (b) for $K=N=352$, (c) and (d) for $K=1.5N=528$, and (e) and (f) for $K=2N=704$.}
\label{Fig:KASSPER_rank}
\end{figure*}

\begin{figure*}[!t]
\begin{center}
\subfloat[]{\includegraphics[scale=0.5]{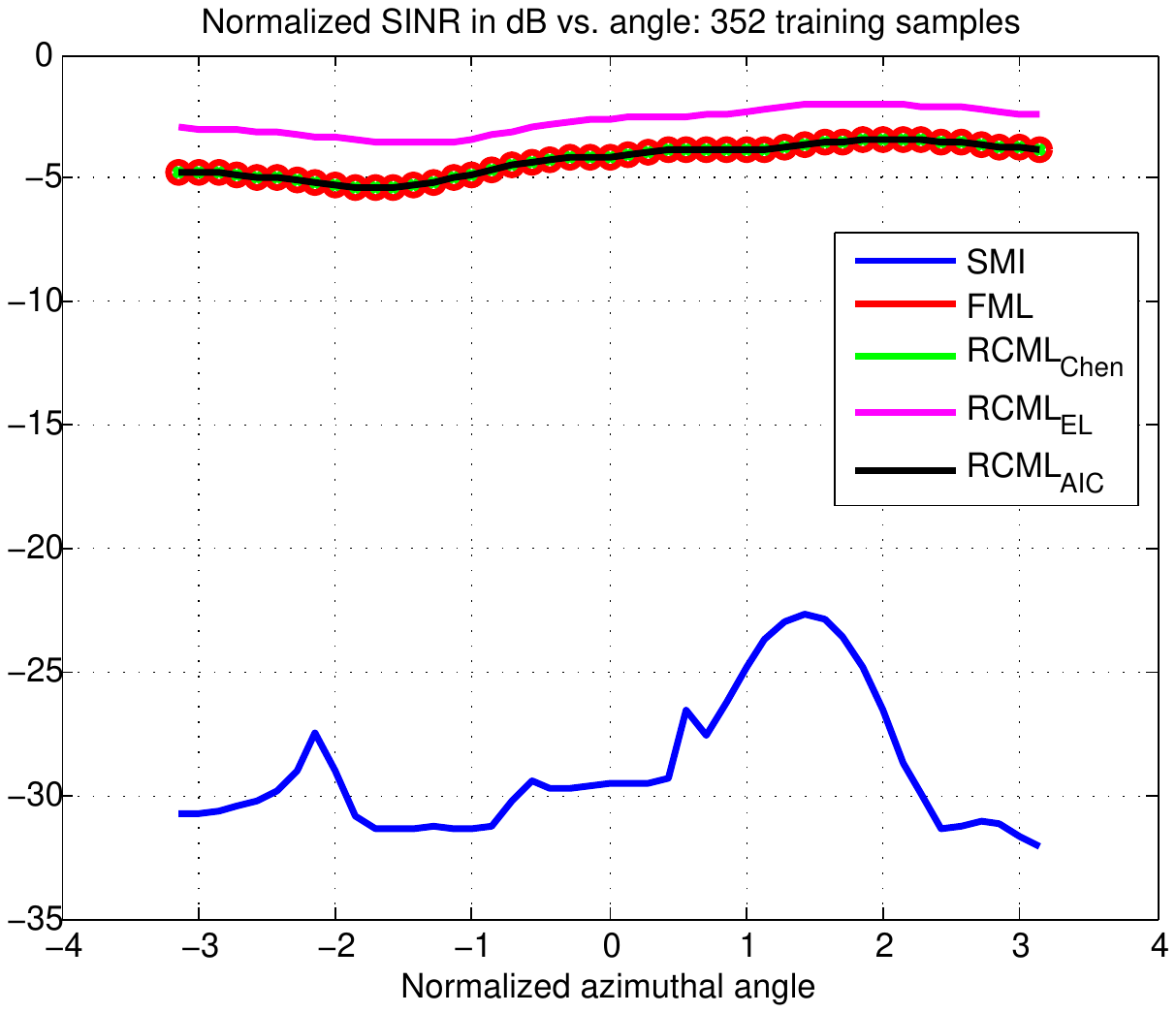}\label{Fig:KASSPER_angle_nonhomo_352}}
\hfil
\subfloat[]{\includegraphics[scale=0.5]{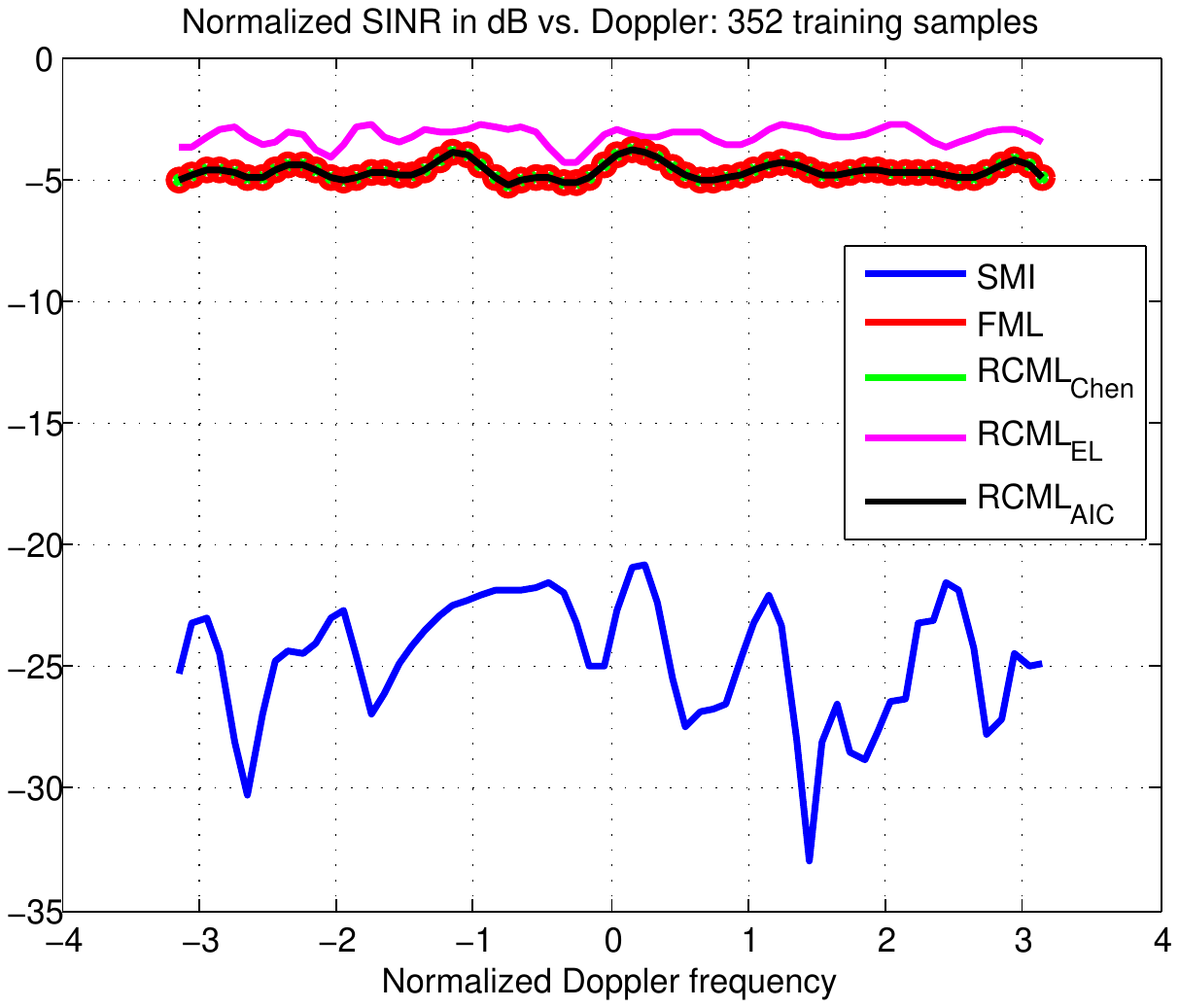}\label{Fig:KASSPER_dop_nonhomo_352}}\\
\subfloat[]{\includegraphics[scale=0.5]{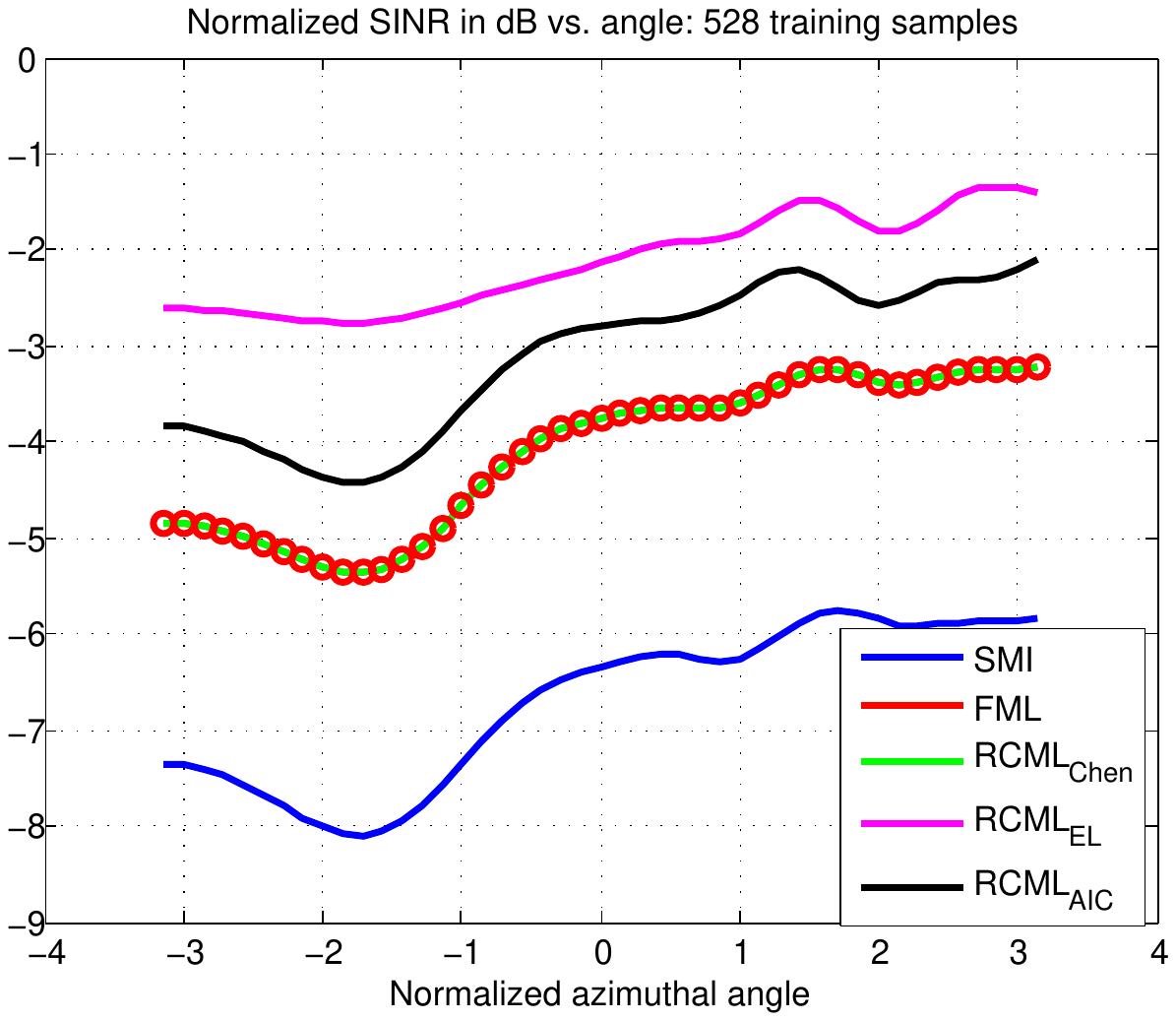}\label{Fig:KASSPER_angle_nonhomo_528}}
\hfil
\subfloat[]{\includegraphics[scale=0.5]{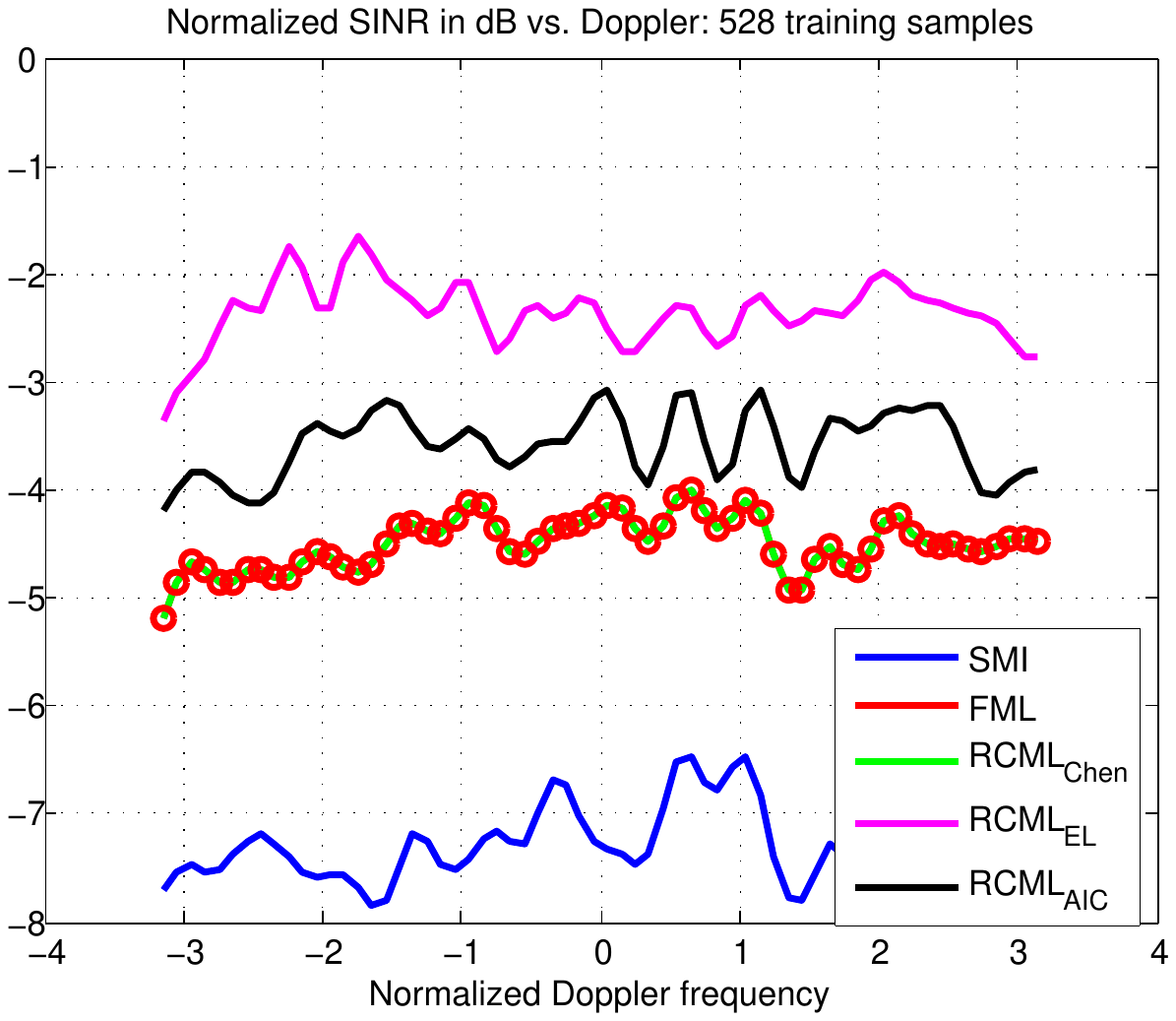}\label{Fig:KASSPER_dop_nonhomo_528}}\\
\subfloat[]{\includegraphics[scale=0.5]{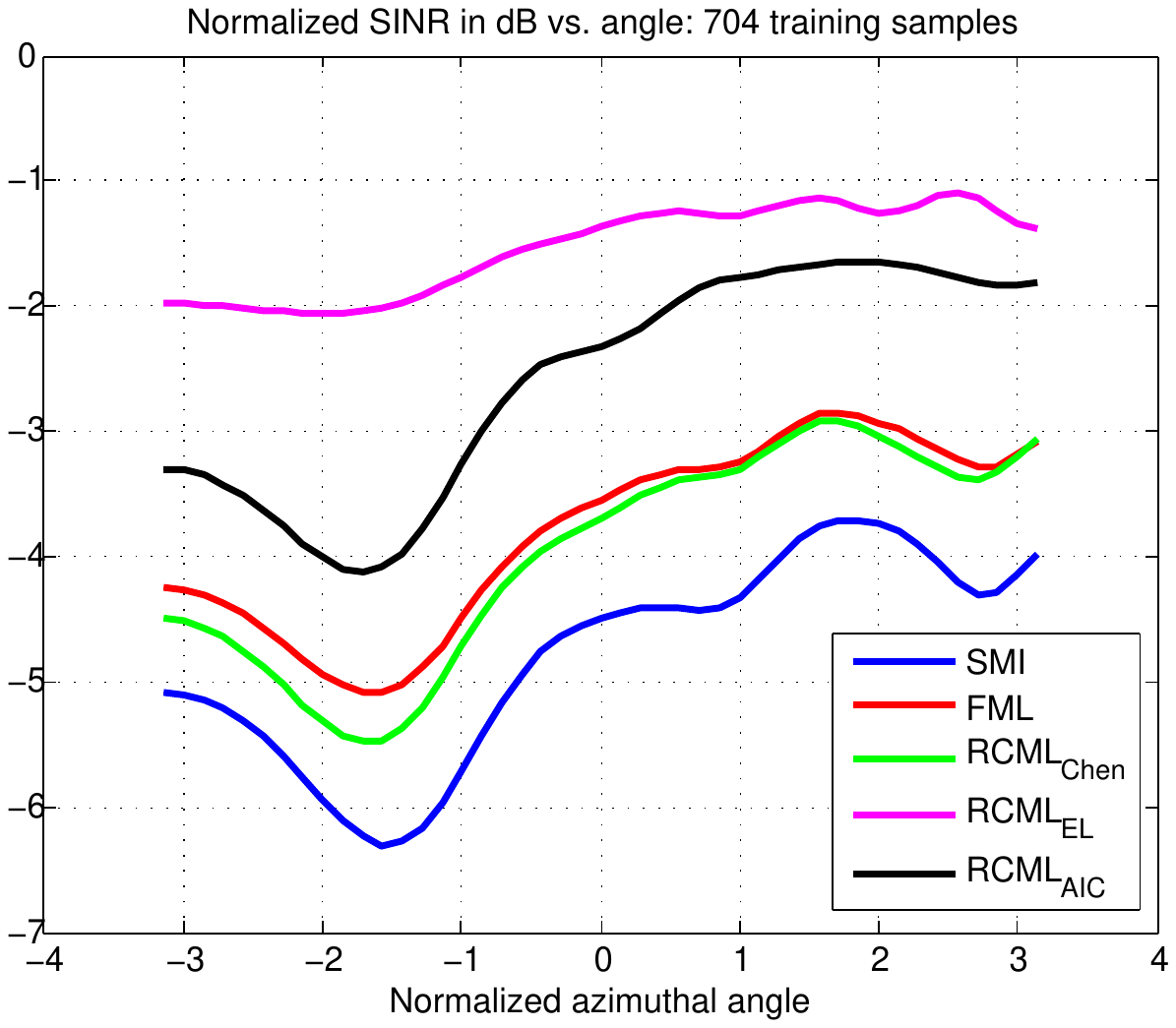}\label{Fig:KASSPER_angle_nonhomo_704}}
\hfil
\subfloat[]{\includegraphics[scale=0.5]{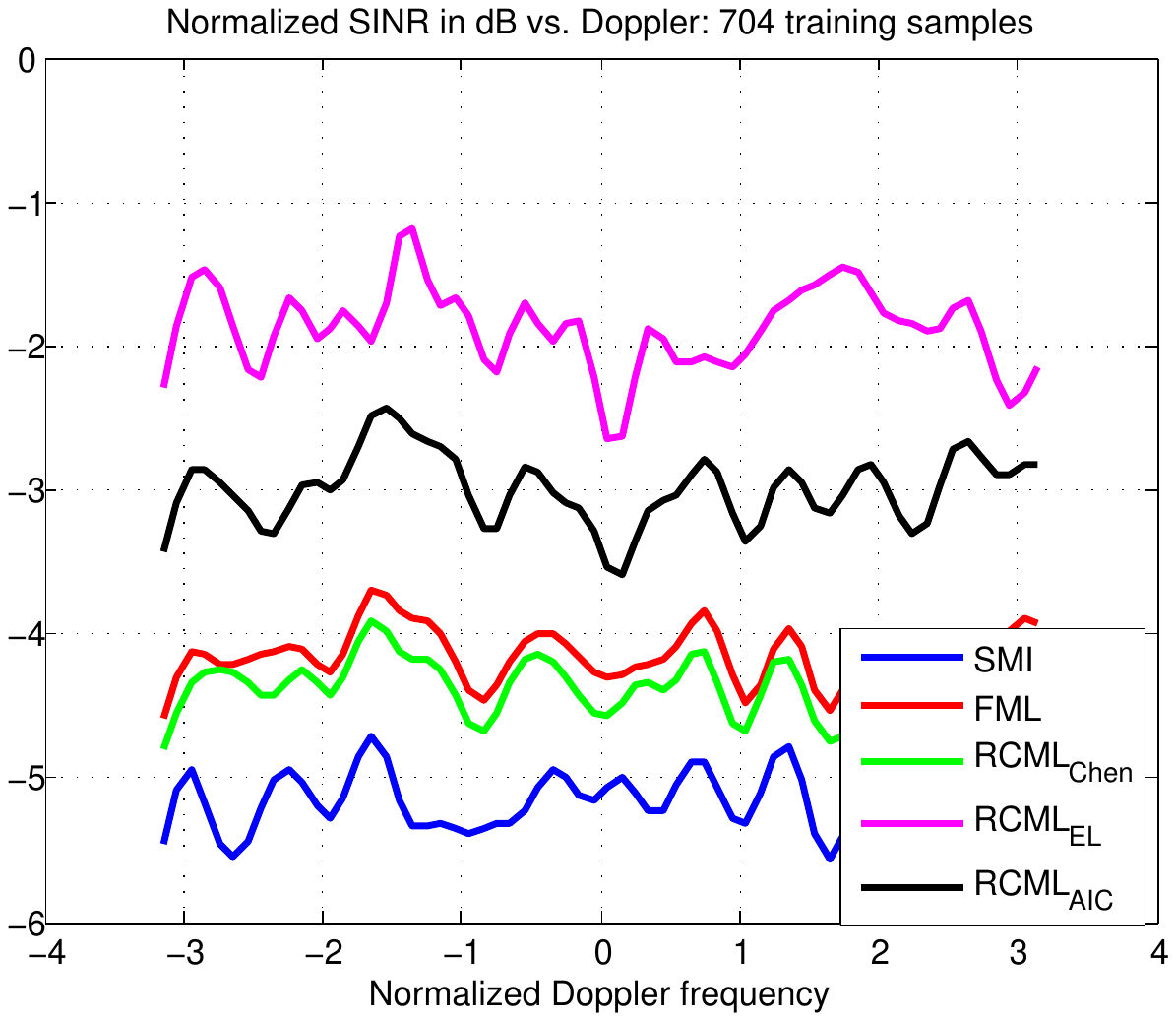}\label{Fig:KASSPER_dop_nonhomo_704}}
\end{center}
\caption{Normalized SINR versus azimuthal angle and Doppler frequency for the KASSPER data set. The case of 50\% of corrupted training data. (a) and (b) for $K=N=352$, (c) and (d) for $K=1.5N=528$, and (e) and (f) for $K=2N=704$}
\label{Fig:KASSPER_rank_nonhomogeneous}
\end{figure*}

First, we compare the rank estimation method proposed in Section \ref{Sec:Rankonly} with alternative algorithms including SMI, FML, AIC, and Chen's algorithm. We plot the normalized SINR (in dB) versus the number of training samples, 20, 30, and 40 in Fig. \ref{Fig:Simulation_rank} for the simulation model. For this experiment, the parameters used are $J=3$, $\beta_i = [0.2, 0, 0.3]$, $\sigma_i  = [10, 100, 1000]$, $\phi_i = [20^\circ, 40^\circ, 60^\circ]$, and $\sigma_a = 1$. The initial rank for Algorithm 1 is the number of jammers ($J=3$). The SINR values are obtained by averaging SINR values from 500 Monte Carlo trials. It is shown that the SINR values increases monotonically as $K$ increases. Fig.\ \ref{Fig:Simulation_rank} reveals that \RCMLEL exhibits the best performance in all training regimes. Particularly, the difference between \RCMLEL and other methods increases when training samples are limited. Table \ref{Tb:RankEstimated} shows the values of the rank estimated by the compared methods. Note that the ranks of SMI and FML are just output of the covariance estimate since they do not estimate the rank. In our simulation model, the true rank is 5 and the rank estimated by \RCMLEL is closer to the true rank.

Fig. \ref{Fig:KASSPER_rank} shows the normalized SINR values for various number of training samples for the KASSPER data set. We plot the averaged SINR values in decibel over either azimuth angle or Doppler frequency domain. The left and right column show the results for angle and Doppler, respectively. We use the rank given by Brennan rule, i.e.\ $M + P - 1 = 42$,  as the initial guess for Algorithm 1. Similar to the results for the simulation model, \RCMLEL outperforms competing methods in all training regimes. Table \ref{Tb:RankEstimated} confirms that the rank predicted via \RCMLEL is closer to the true rank (43 in this case).

\textbf{Realistic case of contaminated observations:} In practice, homogeneous training samples are hard to obtain and a subset of the received signals is often corrupted by outliers resembling a target of interest. Therefore, it is meaningful to compare the performance for nonhomogeneous observation to investigate which algorithm indeed works well and is robust in practice. In this case, the training observations are given by
\be
\left\{\begin{array}{ll}
\mb z = \alpha \mb s + \mb d & \text{when corrupted}\\
\mb z = \mb d & \text{otherwise}
\end{array}\right.
\ee
where $\mb s$ and $\mb d$ represent a target component and the disturbance vector, respectively. Fig. \ref{Fig:KASSPER_rank_nonhomogeneous} shows the normalized SINR values when a half of the training samples contain $\mb s$ with $\alpha = 50$. The gaps between \RCMLEL and the others are bigger than those in Fig. \ref{Fig:KASSPER_rank}. Unsurprisingly, all methods fare worse in the case of corrupted data. However, the drop in \RCMLEL is much smaller than that of competing methods. Notably, in this realistic case of heterogenous or corrupted training, the \RCMLEL now offers a clear advantage over $\text{RCML}_\text{AIC}$. This is further corraborated by the results in Table \ref{Tb:RankEstimated}, which shows that the AIC significantly over-estimates the clutter rank in heterogeneous data than in the homogeneous case leading to the performance degradation.

\subsection{Rank and noise power constraints}
\label{Sec:ResultBoth}

\begin{figure}
\centering
\includegraphics[scale=0.5]{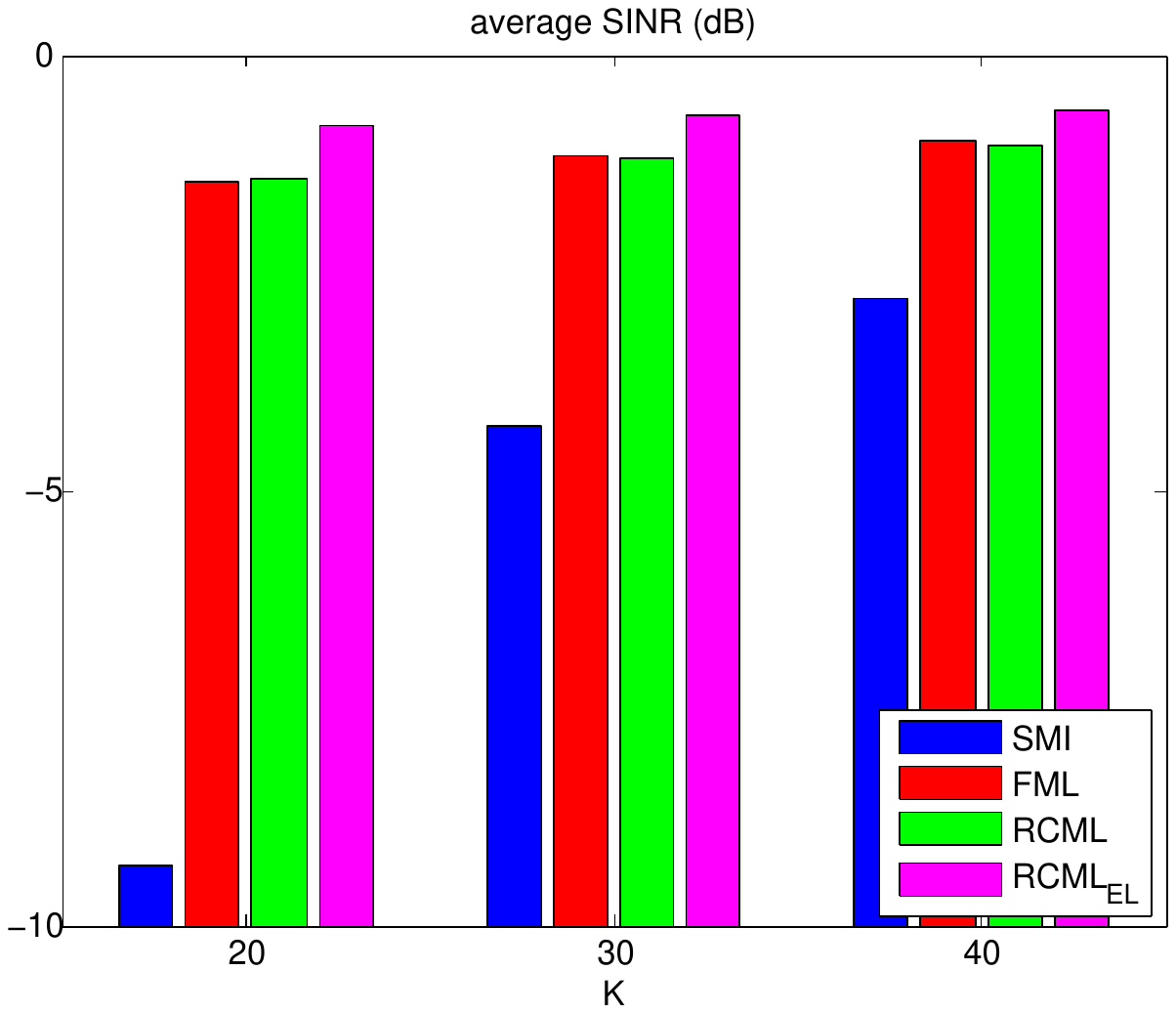}\label{Fig:SINR_Simulation_both}
\caption{Normalized SINR in dB versus number of training samples $K$ $(N=20)$ for the simulation model.}
\label{Fig:Simulation_both}
\end{figure}

\begin{figure*}[!t]
\begin{center}
\subfloat[]{\includegraphics[scale=0.5]{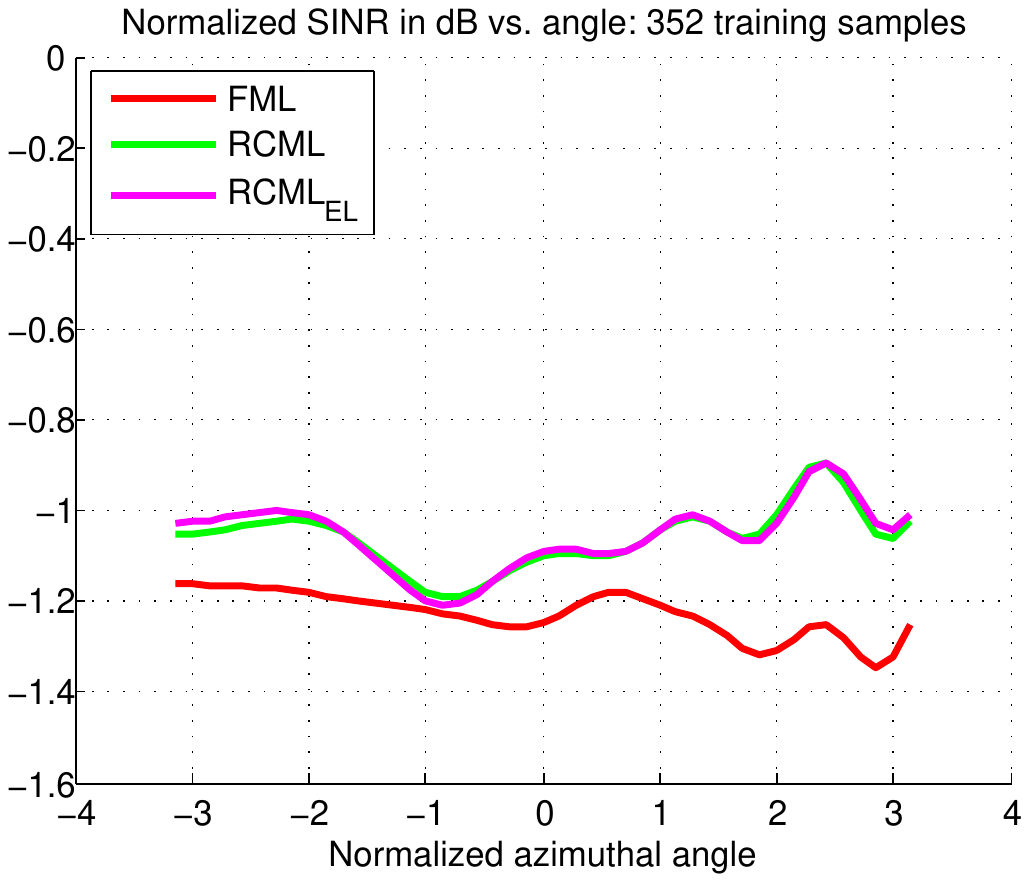}\label{Fig:SINR_angle_352_both}}
\hfil
\subfloat[]{\includegraphics[scale=0.5]{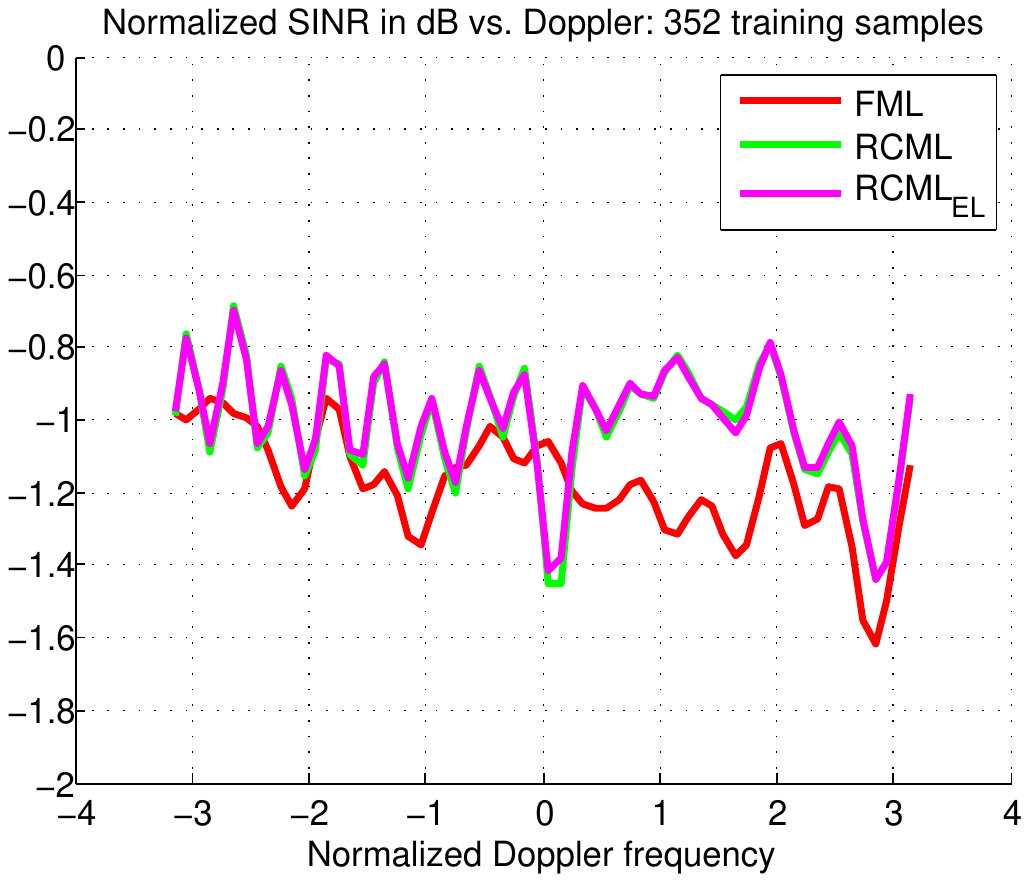}\label{Fig:SINR_dop_352_both}}\\
\subfloat[]{\includegraphics[scale=0.5]{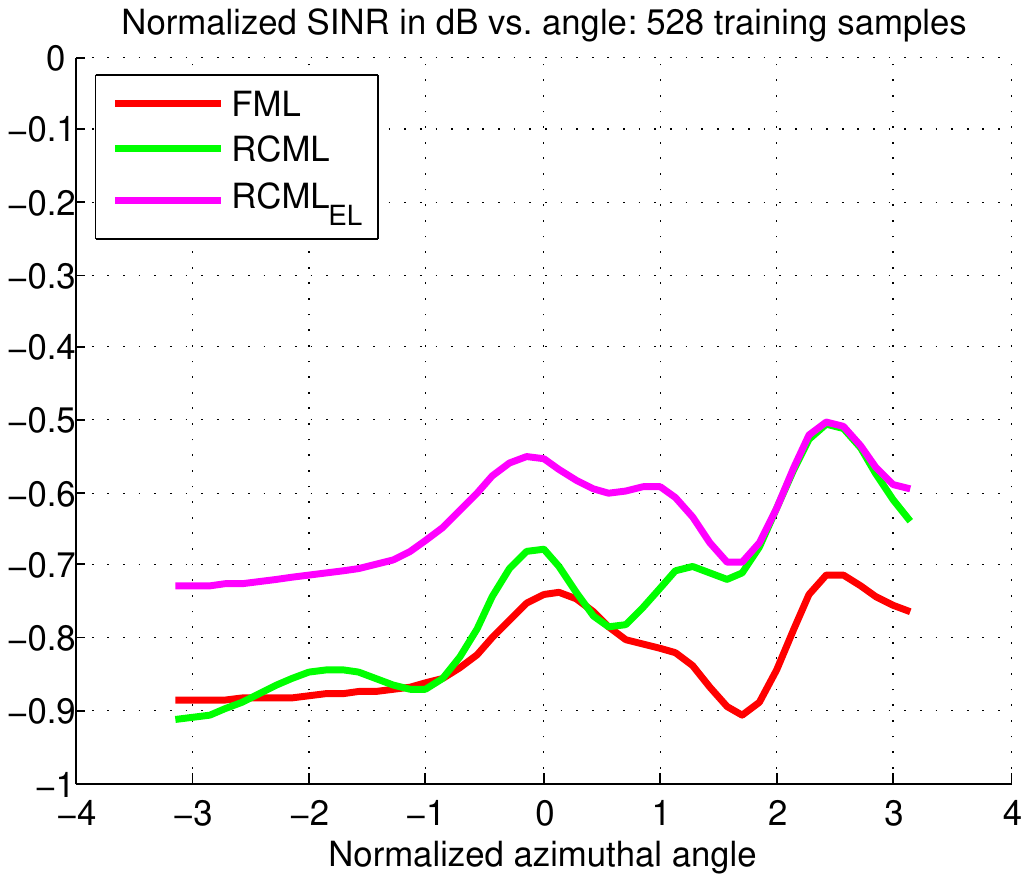}\label{Fig:SINR_angle_528_both}}
\hfil
\subfloat[]{\includegraphics[scale=0.5]{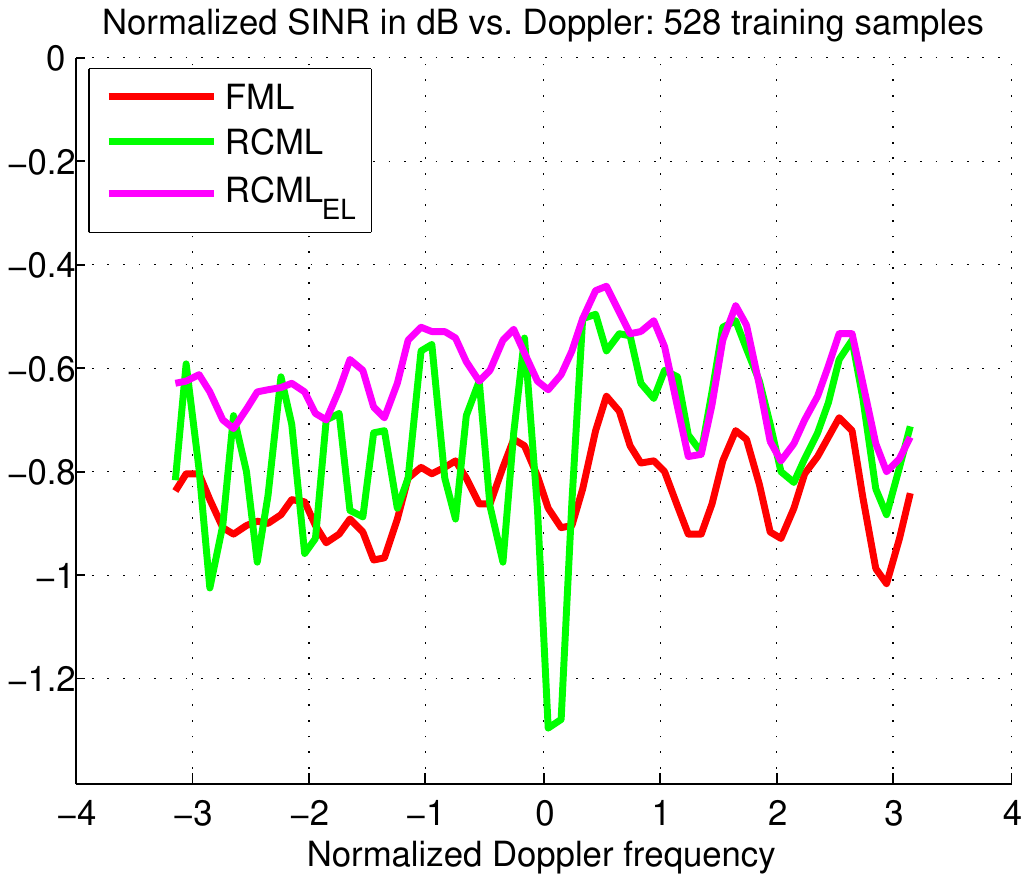}\label{Fig:SINR_dop_528_both}}\\
\subfloat[]{\includegraphics[scale=0.5]{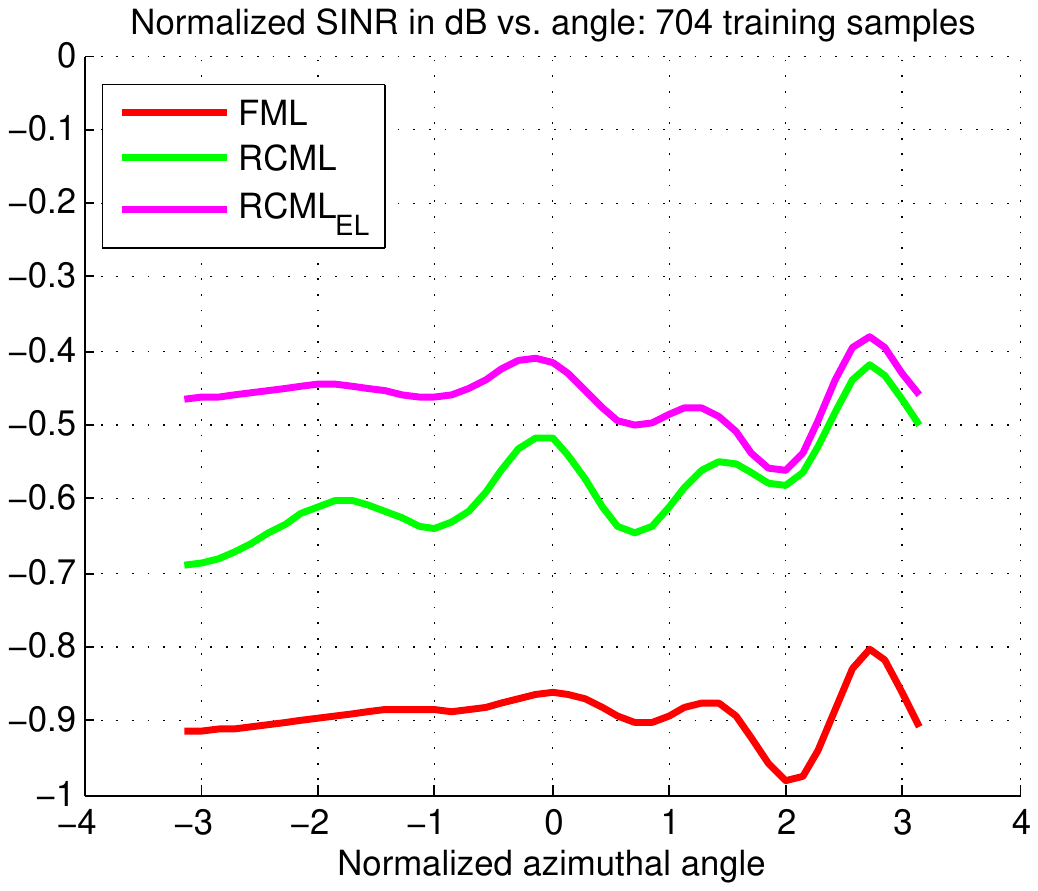}\label{Fig:SINR_angle_704_both}}
\hfil
\subfloat[]{\includegraphics[scale=0.5]{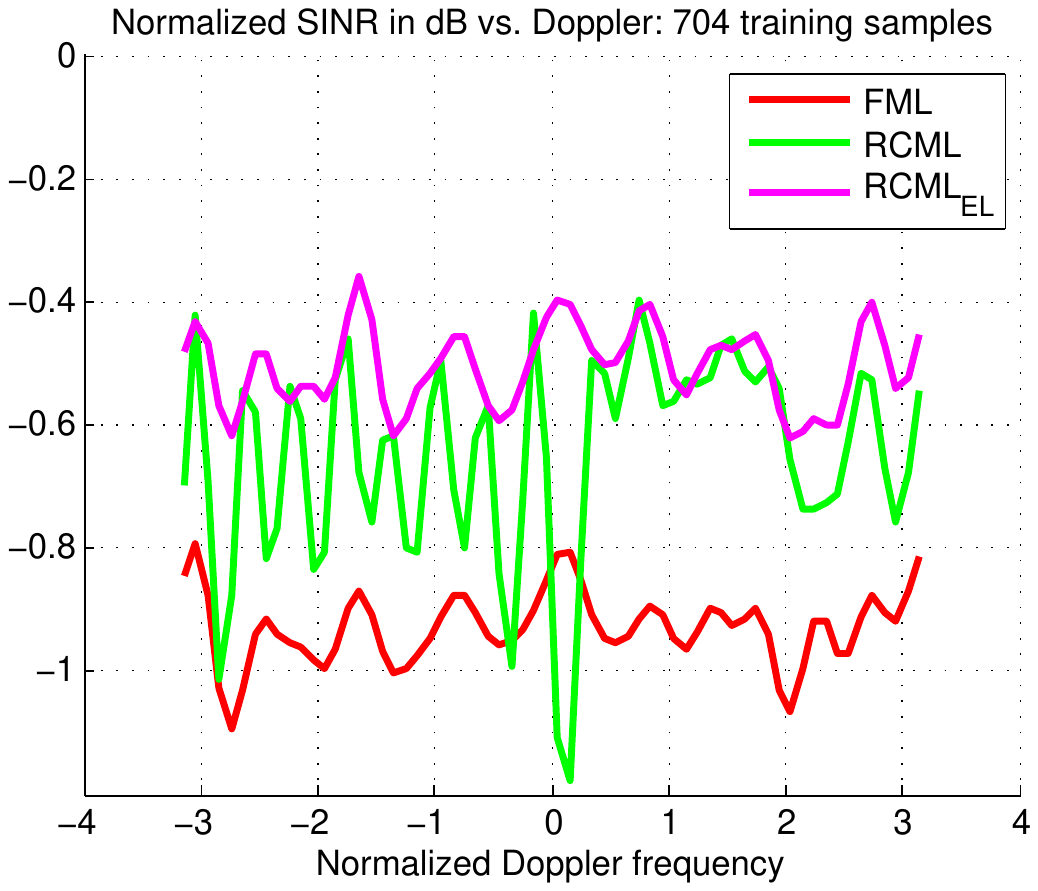}\label{Fig:SINR_dop_704_both}}
\end{center}
\caption{Normalized SINR versus azimuthal angle and Doppler frequency for the KASSPER data set. (a) and (b) for $K=N=352$, (c) and (d) for $K=1.5N=528$, and (e) and (f) for $K=2N=704$.}
\label{Fig:KASSPER_both}
\end{figure*}

In this case, we assume that both the rank and the noise power are unknown for both the simulation model and the KASSPER data set. Since the previous works such as AIC and Chen's algorithm are for only estimating the rank and can not be extended to estimate both the rank and the noise power, we compare the proposed EL method with the sample covariance, FML, and the RCML estimator with a prior knowledge of the rank. For the RCML estimator, we employ the number of jammers $(r=3)$ and the Brennan rule $(r=42)$ as the clutter rank for the simulation model and the KASSPER data set, respectively. In addition, since the FML method requires a prior knowledge of the noise power, we calculate and use the maximum likelihood estimate of the noise power for a rank given by a prior knowledge for the FML.

Fig. \ref{Fig:Simulation_both} shows the performance of various estimators in the sense of the normalized SINR values for the simulation model. Similar to the case of only rank estimation, the \RCMLEL show the best performance in all training regimes.

Fig. \ref{Fig:KASSPER_both} shows the performance of the methods in terms of the normalized output SINR for the KASSPER data set. \RCMLEL is slightly better than the RCML estimator using the rank by Brennan rule. This is expected because for the KASSPER data set Brennan rule predicts a rank very close to the true rank.

\subsection{Condition number constraint}
\label{Sec:ResultConditionNumber}

\begin{table*}[!t]
\begin{center}
\subfloat[]{             \begin{tabular}{|c|c|c|c|c|c|}
               \hline
               $\sigma^2$ & K & SMI & FML & CNCML & $\text{CNCML}_\text{EL}$\\
               \hline
               & 20 & -9.3785 & -0.5195 & -0.5212 & \textbf{-0.4822}\\
               -5 &30 & -4.2579 & \textbf{-0.4242} & -0.4257 & -0.4256\\
               & 40 & -2.7424 & \textbf{-0.3460} & -0.3476 & -0.3476\\
               \hline
                \hline
               & 20 & -9.3196 & -0.5511 & -0.5521 & \textbf{-0.5141}\\
               0 & 30 & -4.2276 & \textbf{-0.4202} & -0.4221 & -0.4220\\
               & 40 & -2.7649 & \textbf{-0.3513} & -0.3530 & -0.3528\\
               \hline
              \hline
               & 20 & -9.0922 & -0.5269 & -0.5279 & \textbf{-0.4875}\\
               5 & 30 & -4.2172 & \textbf{-0.4348} & -0.4364 & -0.4362\\
               & 40 & -2.7300 & \textbf{-0.3484} & -0.3503 & -0.3505\\
               \hline
               \hline
               & 20 & -9.3511 & -0.5355 & -0.5305 & \textbf{-0.4998}\\
               10 & 30 & -4.1955 & \textbf{-0.4164} & -0.4180 & -0.4175\\
               & 40 & -2.7491 & \textbf{-0.3501} & -0.3515 & -0.3518\\
               \hline
             \end{tabular}\label{Tb:CN_simulation_a}}
             \hfill
\subfloat[]{              \begin{tabular}{|c|c|c|c|c|c|}
               \hline
               $\sigma^2$ & K & SMI & FML & CNCML & $\text{CNCML}_\text{EL}$\\
               \hline
               & 20 & -9.3069 & -1.7371 & \textbf{-1.7322} & -1.7358\\
               -5 & 30 & -4.1795 & -1.2399 & -1.2388 & \textbf{-1.2347}\\
               & 40 & -2.7535 & -0.9496 & -0.9492 & \textbf{-0.9456}\\
               \hline
                \hline
               & 20 & -9.1354 & -1.6944 & \textbf{-1.6928} & -1.7027\\
               0 & 30 & -4.2345 & -1.2986 & -1.2987 & \textbf{-1.2955}\\
               & 40 & -2.7545 & -1.0041 & -1.0043 & \textbf{-1.0023}\\
               \hline
              \hline
               & 20 & -9.2524 & -1.3976 & -1.4016 & \textbf{-1.3244}\\
               5 & 30 & -4.2309 & -1.0737 & -1.0784 & \textbf{-1.0666}\\
               & 40 & -2.7523 & -0.8848 & -0.8876 & \textbf{-0.8818}\\
               \hline
               \hline
               & 20 & -9.3660 & -1.2567 & -1.2569 & \textbf{-1.2115}\\
               10 & 30 & -4.3013 & -0.9526 & -0.9545 & \textbf{-0.9450}\\
               & 40 & -2.7350 & -0.7171 & -0.7197 & \textbf{-0.7139}\\
               \hline
             \end{tabular}  \label{Tb:CN_simulation_b} }\\
             \subfloat[]{
\begin{tabular}{|c|c|c|c|c|c|c|}
               \hline
               $\sigma^2$ & K & SMI & FML & CNCML & $\text{CNCML}_\text{EL}$\\
               \hline
               & 20 & -9.3702 & -0.5340 & -0.5349 & \textbf{-0.4925}\\
               -5 & 30 & -4.2791 & \textbf{-0.4302} & -0.4316 & -0.4315\\
               & 40 & -2.7856 & \textbf{-0.3493} & -0.3510 & -0.3509\\
               \hline
                \hline
               & 20 & -9.2898 & -0.5485 & -0.5501 & \textbf{-0.5104}\\
               0 & 30 & -4.2648 & \textbf{-0.4202} & -0.4219 & -0.4220\\
               & 40 & -2.7274 & \textbf{-0.3604} & -0.3621 & -0.3621\\
               \hline
              \hline
               & 20 & -9.0582 & -0.5318 & -0.5328 & \textbf{-0.4899}\\
               5 & 30 & -4.1548 & \textbf{-0.4142} & -0.4155 & -0.4152\\
               & 40 & -2.7655 & \textbf{-0.3515} & -0.3531 & -0.3533\\
               \hline
               \hline
               & 20 & -9.3632 & -0.5352 & -0.5363 & \textbf{-0.4974}\\
               10 & 30 & -4.2728 & \textbf{-0.4328} & -0.4348 & -0.4349\\
               & 40 & -2.7577 & \textbf{-0.3538} & -0.3554 & -0.3547\\
               \hline
             \end{tabular}\label{Tb:CN_simulation_c}}\hfill
\subfloat[]{\begin{tabular}{|c|c|c|c|c|c|}
               \hline
               $\sigma^2$ & K & SMI & FML & CNCML & $\text{CNCML}_\text{EL}$\\
               \hline
               & 20 & -9.0316 & -1.7161 & \textbf{-1.7131} & -1.7634\\
               -5 & 30 & -4.1465 & -1.1704 & -1.1691 & \textbf{-1.1659}\\
               & 40 & -2.7727 & -0.9390 & -0.9384 & \textbf{-0.9351}\\
               \hline
                \hline
               & 20 & -9.2091 & -1.6706 & -1.6701 & \textbf{-1.6674}\\
               0 & 30 & -4.2004 & -1.2681 & -1.2682 & \textbf{-1.2633}\\
               & 40 & -2.7423 & -1.0102 & -1.0117 & \textbf{-1.1009}\\
               \hline
              \hline
               & 20 & -9.3538 & -1.3980 & -1.4028 & \textbf{-1.3216}\\
               5 & 30 & -4.2203 & -1.0869 & -1.0910 & \textbf{-1.0785}\\
               & 40 & -2.7079 & -0.8694 & -0.8721 & \textbf{-0.8666}\\
               \hline
               \hline
               & 20 & -9.221 & -1.2446 & -1.2455 & \textbf{-1.1982}\\
               10 & 30 & -4.2116 & -0.9428 & -0.9460 & \textbf{-0.9382}\\
               & 40 & -2.7563 & -0.7235 & -0.7264 & \textbf{-0.7226}\\
               \hline
             \end{tabular}\label{Tb:CN_simulation_d}}\\
             \subfloat[]{            \begin{tabular}{|c|c|c|c|c|c|}
               \hline
               $\sigma^2$ & K & SMI & FML & CNCML & $\text{CNCML}_\text{EL}$\\
               \hline
               & 20 & -9.2679 & -1.1593 & -1.1616 & \textbf{-1.1150}\\
               -5 & 30 & -4.2234 & -0.9262 & -0.9286 & \textbf{-0.9242}\\
               & 40 & -2.8271 & \textbf{-0.7705} & -0.7729 & -0.7712\\
               \hline
                \hline
               & 20 & -9.2934 & -0.9052 & -0.9051 & \textbf{-0.8422}\\
               0 & 30 & -4.1617 & -0.6909 & -0.6920 & \textbf{-0.6862}\\
               & 40 & -2.7387 & -0.5711 & -0.5724 & \textbf{-0.5676}\\
               \hline
              \hline
               & 20 & -9.4154 & -0.8398 & -0.8334 & \textbf{-0.7909}\\
               5 & 30 & -4.2284 & -0.6273 & -0.6231 & \textbf{-0.6070}\\
               & 40 & -2.7208 & -0.5034 & -0.5022 & \textbf{-0.4945}\\
               \hline
               \hline
               & 20 & -9.1447 & -0.7388 & -0.7225 & \textbf{-0.6815}\\
               10 & 30 & -4.2046 & -0.5931 & -0.5803 & \textbf{-0.5535}\\
               & 40 & -2.7241 & -0.4821 & -0.4738 & \textbf{-0.4576}\\
               \hline
             \end{tabular}\label{Tb:CN_simulation_e}}\hfill
             \renewcommand{\arraystretch}{1.3}
             \subfloat[]{\begin{tabular}{|c|c|c|c|c|}
               \hline
               & $J$ & $\sigma_J^2$ & $\phi$ & $B_f$\\
               \hline
               (a) & 1 & 30 & 20$^{\circ}$ & 0\\
                \hline
               (b) & 1 & 30 & 20$^{\circ}$ & 0.3\\
              \hline
               (c) & 3 & [30 30 30] & [20$^{\circ}$ 40$^{\circ}$ 60$^{\circ}$] & [0 0 0]\\
               \hline
               (d) & 3 & [30 30 30] & [20$^{\circ}$ 40$^{\circ}$ 60$^{\circ}$] & [0.3 0.3 0.3]\\
               \hline
               (e) & 3 & [10 20 30] & [20$^{\circ}$ 40$^{\circ}$ 60$^{\circ}$] & [0.2 0 0.3]\\
               \hline
             \end{tabular}\label{Tb:CN_parameter}}
        \end{center}
        \caption{Normalized SINR for various values of parameters for the simulation model.}
        \label{Tb:CN_simulation}
\end{table*}

\begin{figure*}[!t]
\begin{center}
\subfloat[]{\includegraphics[scale=0.5]{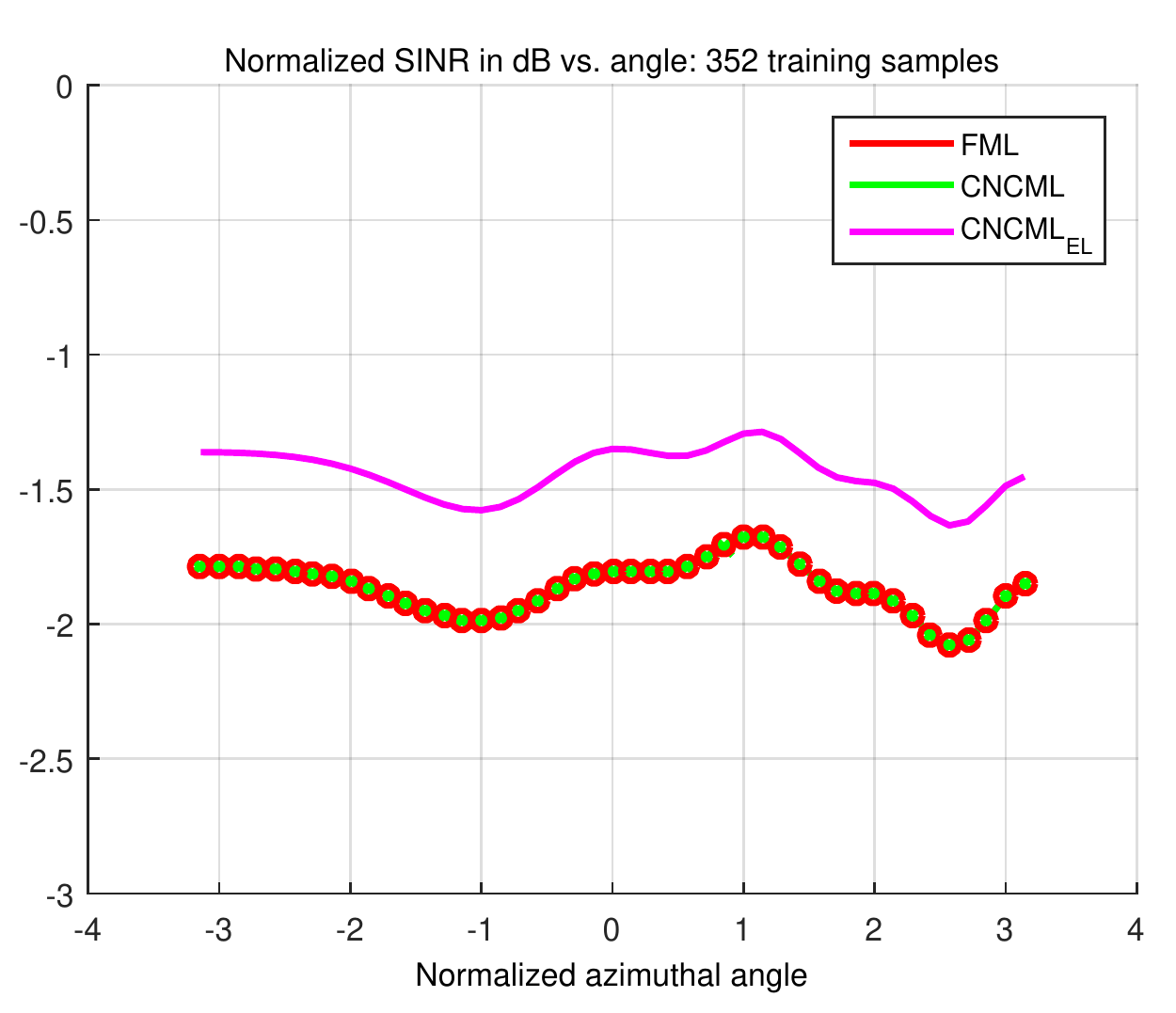}\label{Fig:SINR_angle_352_CN}}
\hfil
\subfloat[]{\includegraphics[scale=0.5]{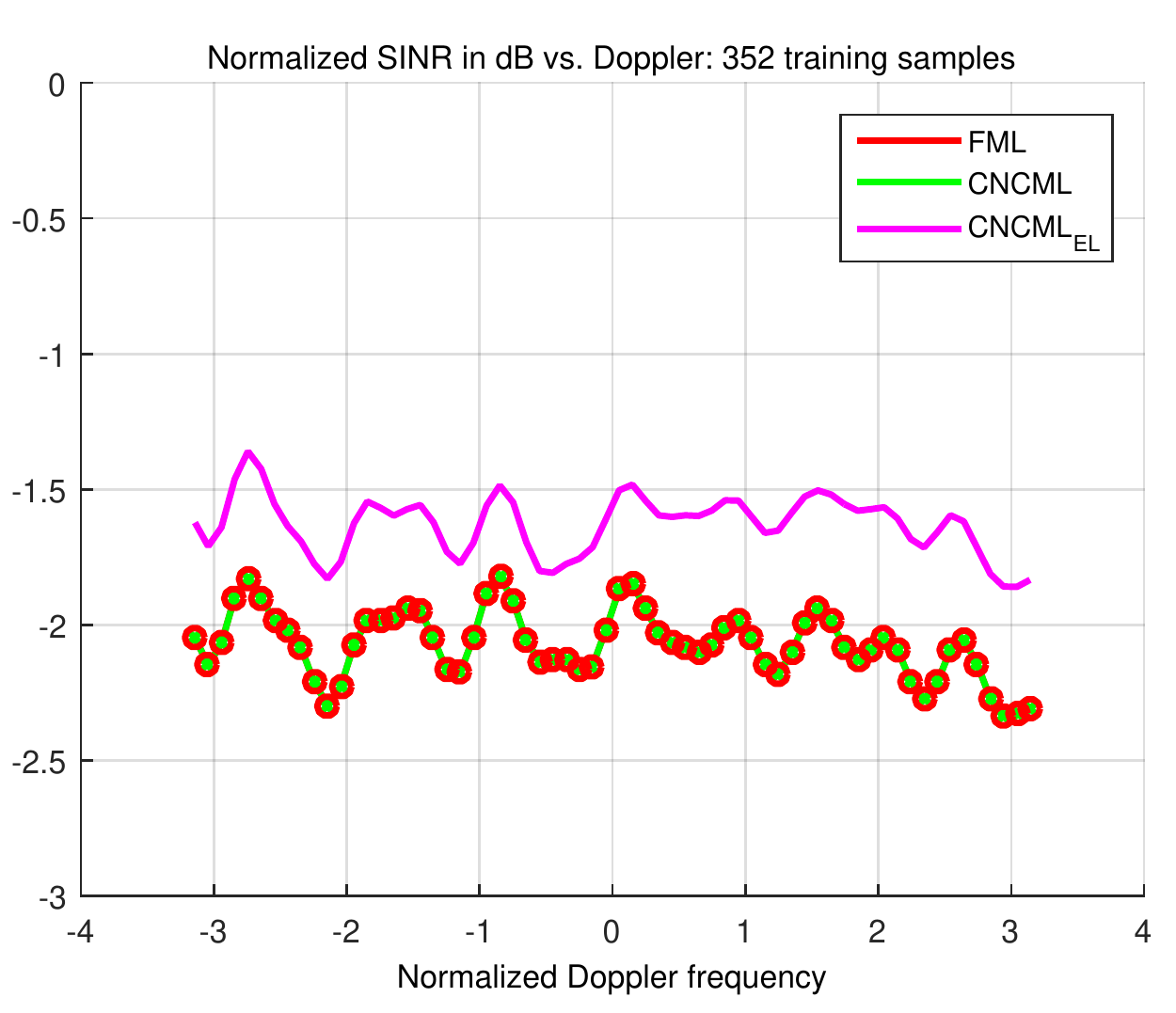}\label{Fig:SINR_dop_352_CN}}\\
\subfloat[]{\includegraphics[scale=0.5]{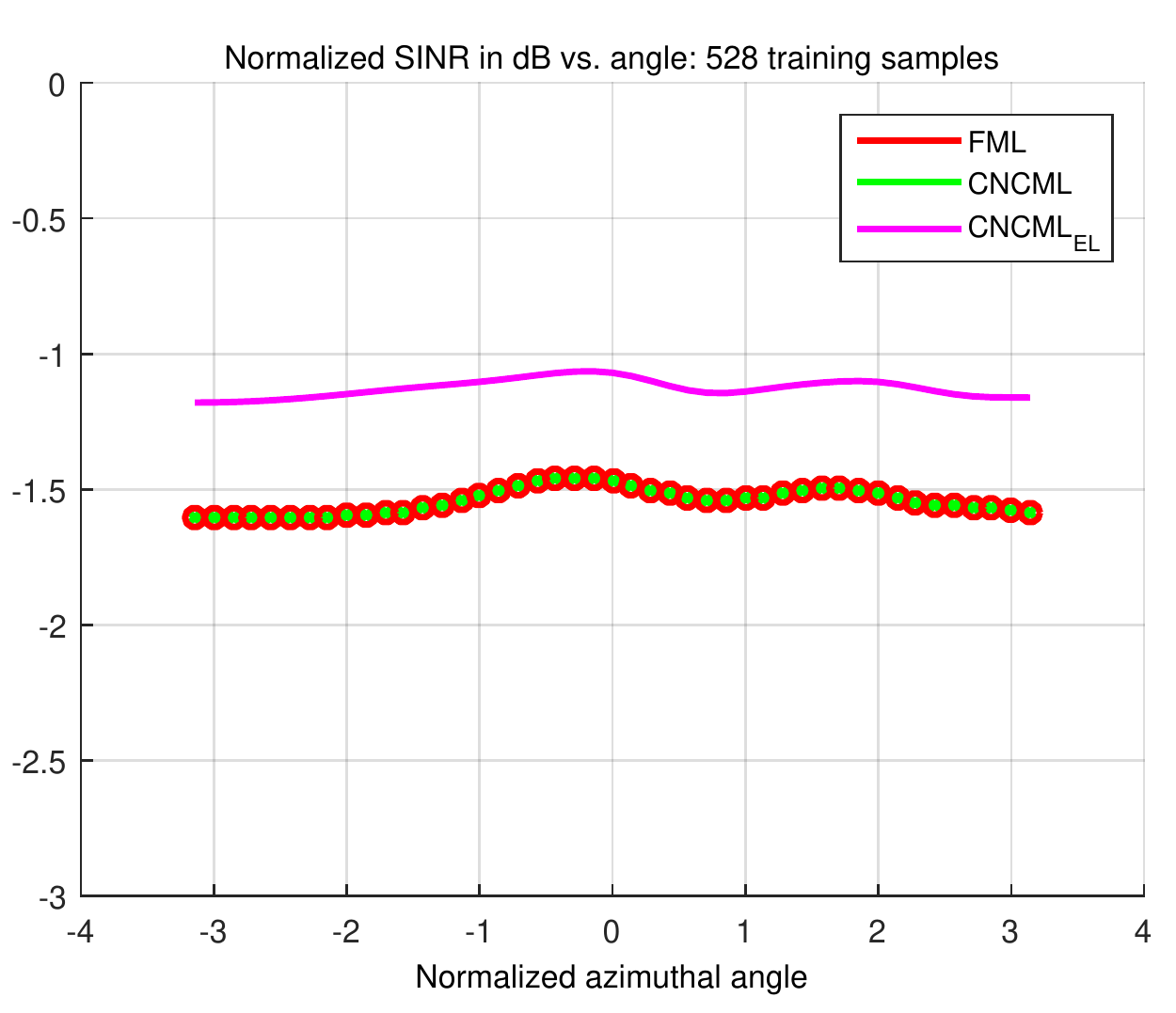}\label{Fig:SINR_angle_528_CN}}
\hfil
\subfloat[]{\includegraphics[scale=0.5]{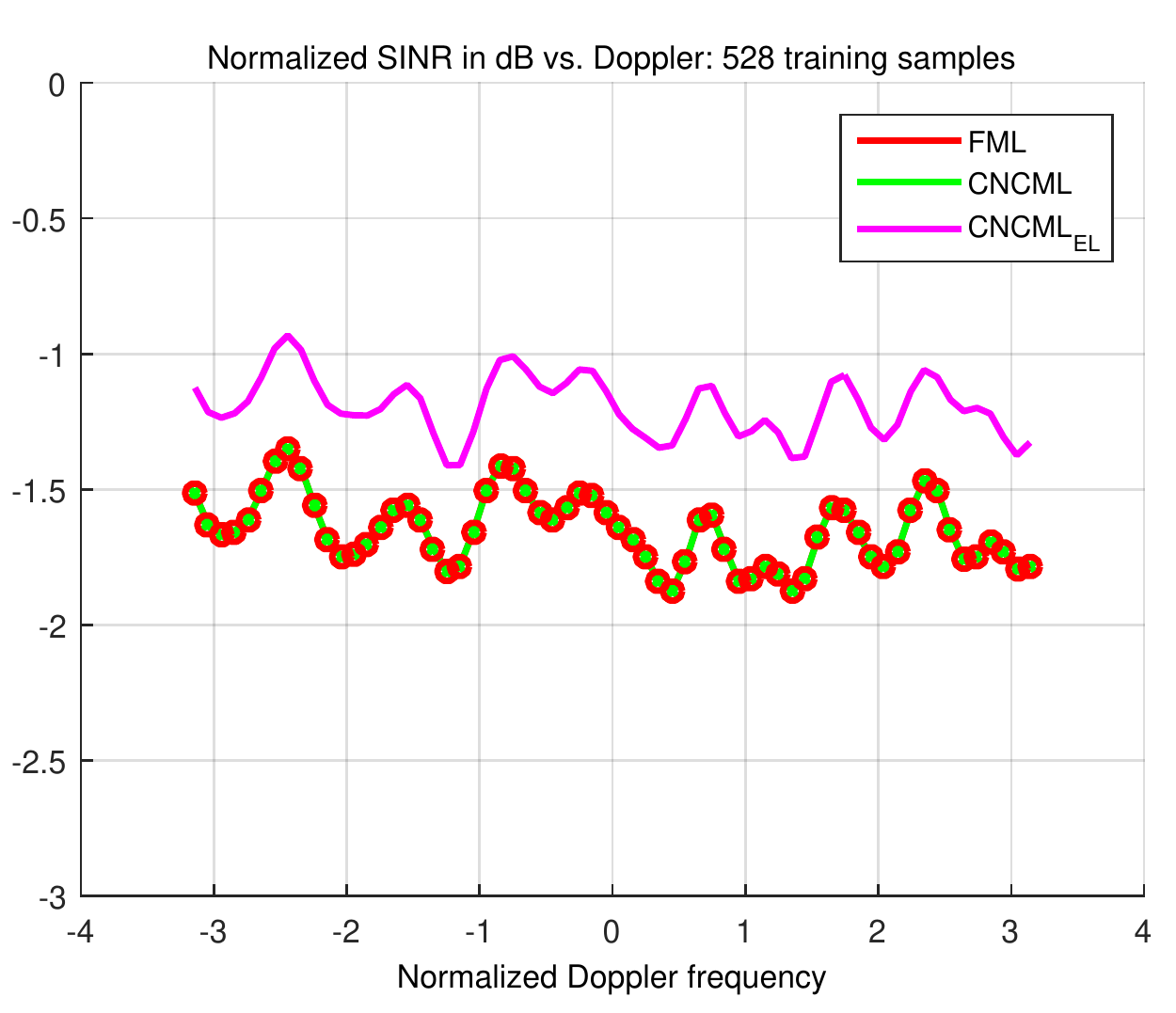}\label{Fig:SINR_dop_528_CN}}\\
\subfloat[]{\includegraphics[scale=0.5]{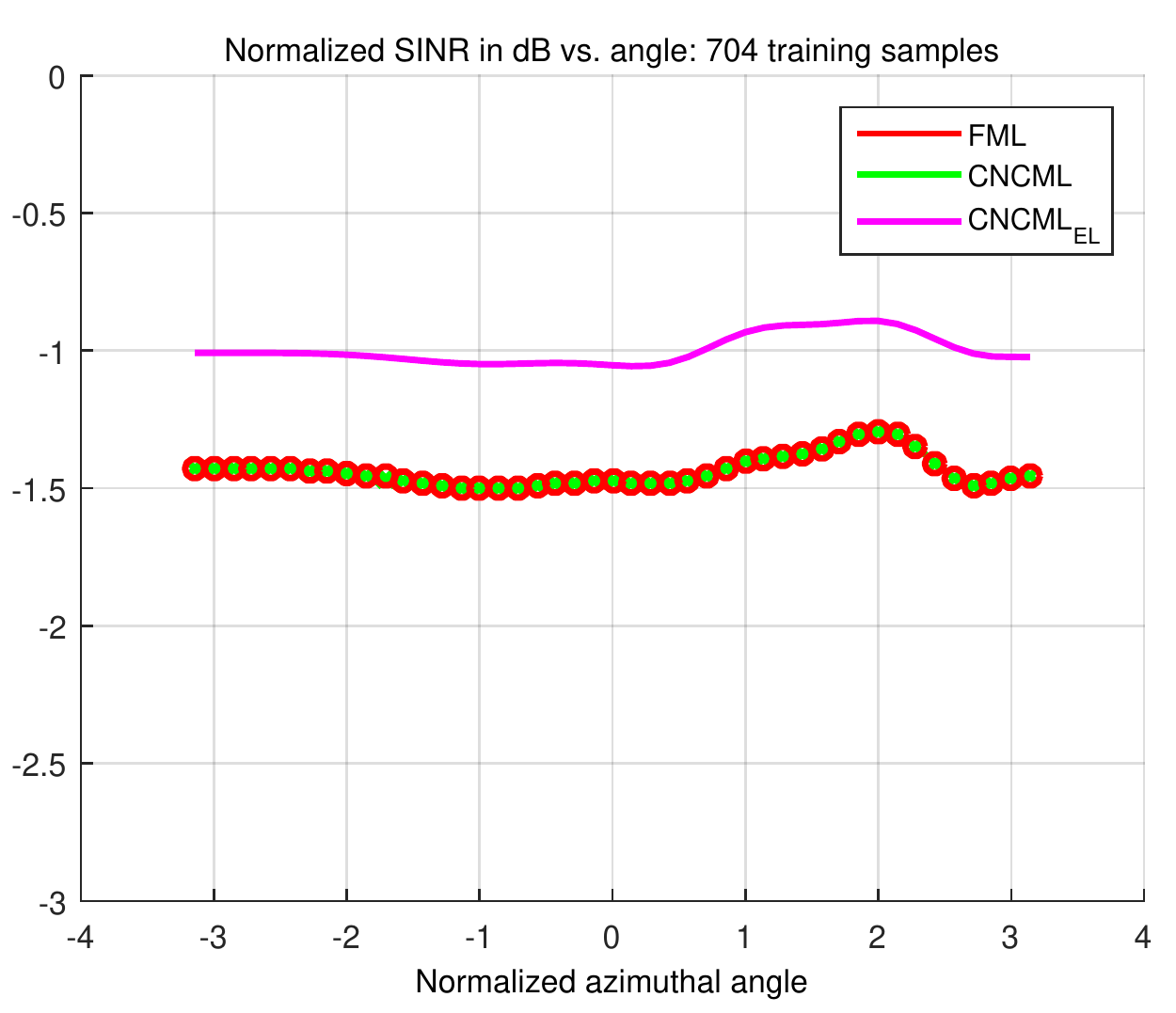}\label{Fig:SINR_angle_704_CN}}
\hfil
\subfloat[]{\includegraphics[scale=0.5]{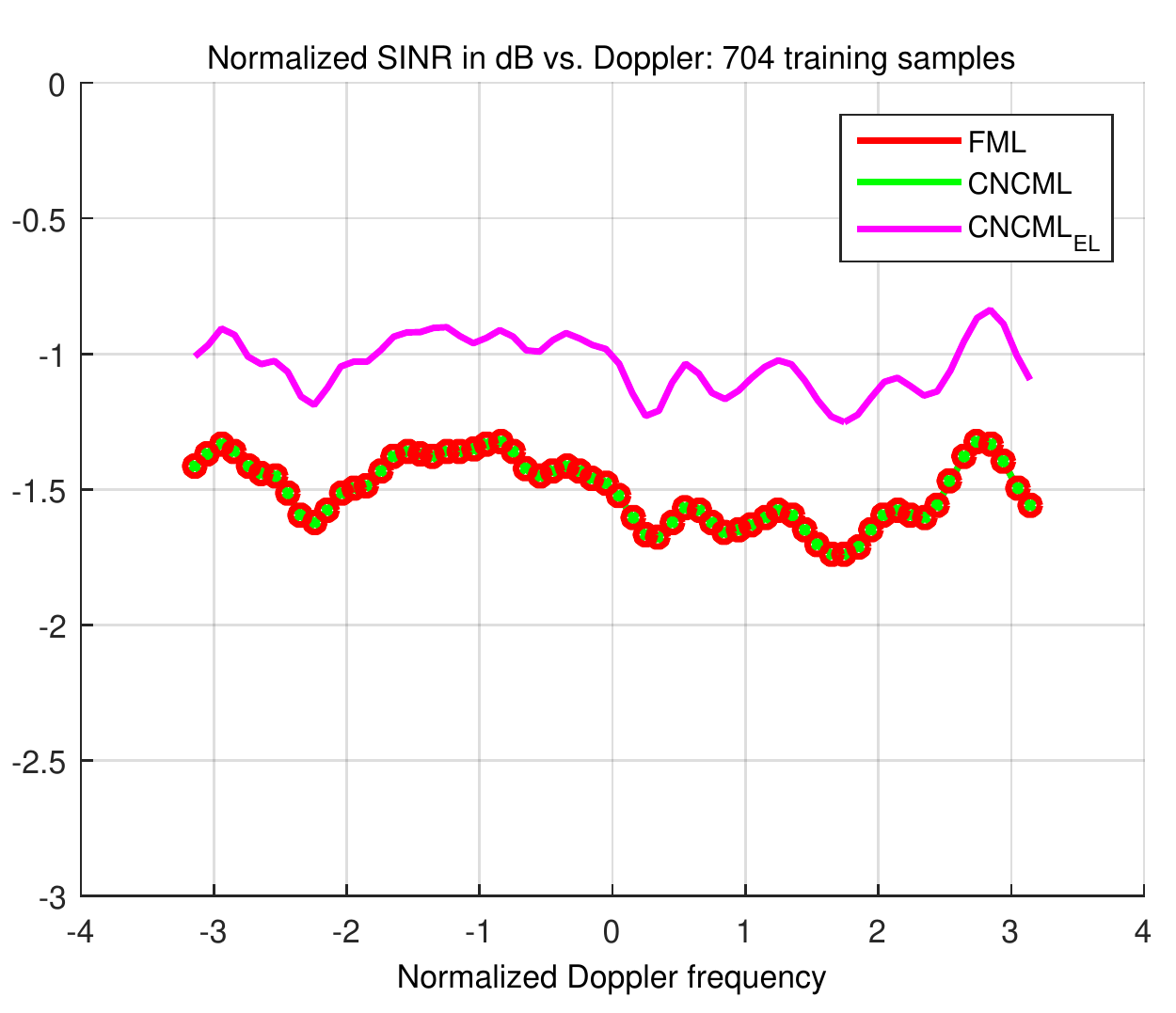}\label{Fig:SINR_dop_704_CN}}
\end{center}
\caption{Normalized SINR versus azimuthal angle and Doppler frequency for the KASSPER data set. (a) and (b) for $K=N=352$, (c) and (d) for $K=1.5N=528$, and (e) and (f) for $K=2N=704$}
\label{Fig:KASSPER_CN}
\end{figure*}

Now we show experimental results for the condition number estimation method proposed in Section \ref{Sec:ConditionNumber}. We compare the proposed method, denoted by \CNCEL\hspace{-1mm}, with three different covariance estimation methods, the sample covariance matrix (SMI), FML, and \CNCML proposed by Aubry \emph{et al.} \cite{Aubry12}.

Table \ref{Tb:CN_simulation} shows the normalized SINR values for the simulation model. We analyze five different scenarios with different parameters of the simulated covariance model given by Eq. \eqref{Eq:SimulationModel}. We use the same parameters as those used in \cite{Aubry12} to evaluate the performances and they are shown in Table \ref{Tb:CN_parameter}.

For the narrowband scenarios ($B_f = 0$) in Table \ref{Tb:CN_simulation_a} and Table \ref{Tb:CN_simulation_c}, \CNCEL outperforms the alternatives for the limited training regime and FML is the best in other training regimes. Note that the gap between \CNCEL and FML (at most 0.002) is much smaller than that of the limited training regime (at least 0.3). On the other hand, for the wideband scenarios in Table \ref{Tb:CN_simulation_b}, Table \ref{Tb:CN_simulation_d}, and Table \ref{Tb:CN_simulation_e}, \CNCEL shows the best performance in most cases.

The experimental results for the KASSPER data set are shown in Fig. \ref{Fig:KASSPER_CN}. We do not plot the sample covariance matrix to clarify the difference among the estimators. In every case, FML and \CNCML  are very close to each other and \CNCEL  is the best estimator.


\section{Conclusion}
\label{Sec:Conclusion}

We propose robust covariance estimation algorithms which automatically determine the optimal values of practical constraints via the expected likelihood criterion for radar STAP. Three different cases of practical constraints which is exploited in recent works including the rank constrained ML estimation and the condition number constrained ML estimation are investigated. New analytical results are derived for each case. Uniqueness of the optimal values of the rank constraint and the condition number constraint is formally proved and a closed form solution of the noise level is obtained for a fixed rank. Experimental results show that the estimators with the constraints obtained by the expected likelihood approach outperform state of the art alternatives including those based on maximum likelihood solution of the constraints.

\section*{Appendix}
\label{Sec:Appendix}

\subsection{Proof of Lemma \ref{Lemma1}}
\label{Sec:AppendixA}

First, let $r$ be the largest $i$ such that $d_{i+1} \geq \sigma^2$. Then, from the closed form solution of the RCML estimator, the eigenvalues of the RCML estimator with rank $i$ and $i+1$ for given $i  < r$ will be
\begin{itemize}
  \item $\hat{\mb R}_\text{RCML}(i)$ : $d_1, \; \; d_2,  \; \; \ldots,  \; \; d_i,  \; \; \sigma^2,  \; \; \ldots,  \; \; \sigma^2$
  \item $\hat{\mb R}_\text{RCML}(i+1)$ : $d_1, \; \; d_2,  \; \; \ldots,  \; \; d_i,  \; \; d_{i+1}, \; \; \sigma^2,  \; \; \ldots,  \; \; \sigma^2$
\end{itemize}
since $d_{i+1} \geq \sigma^2$.
Then $\dfrac{d_i}{\lambda_i}$ should be
\begin{itemize}
  \item $\hat{\mb R}_\text{RCML}(i)$ : $1, \; \; 1,  \; \; \ldots,  \; \; 1_i,  \; \; \dfrac{d_{i+1}}{\sigma^2},  \; \; \ldots,  \; \; \dfrac{d_N}{\sigma^2}$
  \item $\hat{\mb R}_\text{RCML}(i+1)$ : $1, \; \; 1,  \; \; \ldots,  \; \; 1_i,  \; \; 1_{i+1}, \; \; \dfrac{d_{i+2}}{\sigma^2},  \; \; \ldots,  \; \; \dfrac{d_N}{\sigma^2}$
\end{itemize}
From Eq. \eqref{Eq:SimplifiedLR}, the LR values of the RCML estimators with the ranks $i$ and $i+1$ are
\be
\label{Eq:LR_r}
\lr(i) = \frac{\dfrac{\exp N}{\sigma^{2(N-i)}}\ds\prod_{k=i+1}^N d_k }{\exp (i + \dfrac{1}{\sigma^2}\ds\sum_{k=i+1}^N d_k)}
\ee
\be
\label{Eq:LR_r+1}
\lr(i+1) = \frac{\dfrac{\exp N}{\sigma^{2(N-i-1)}}\ds\prod_{k=i+2}^N d_k }{\exp (i + 1 +\dfrac{1}{\sigma^2}\ds\sum_{k=i+2}^N d_k)}
\ee

From Eq. \eqref{Eq:LR_r} and Eq. \eqref{Eq:LR_r+1}, we obtain
\bea
\lr(i+1) & = & \frac{\dfrac{\exp N}{\sigma^{2(N-i-1)}}\ds\prod_{k=i+2}^N d_k }{\exp (i + 1 +\dfrac{1}{\sigma^2}\ds\sum_{k=i+2}^N d_k)}\\
& = & \frac{\dfrac{\exp N}{\sigma^{2(N-i)}}\ds\prod_{k=i+1}^N d_k \cdot \frac{\sigma^2}{d_{i+1}}}{\exp (i +\dfrac{1}{\sigma^2}\ds\sum_{k=i+1}^N d_k) \exp(1-\dfrac{d_{i+1}}{\sigma^2})}\\
& = & \lr(i) \cdot \frac{\sigma^2}{d_{i+1}} \cdot \exp(\frac{d_{i+1}}{\sigma^2}-1)\label{Eq:LRrelation}
\eea
Eq. \eqref{Eq:LRrelation} tells us $\lr (i+1)$ can be calculated by multiplying $\lr (i)$ by the coefficient $\dfrac{\sigma^2}{d_{i+1}} \cdot \exp(\dfrac{d_{i+1}}{\sigma^2}-1)$. Fig. \ref{Fig:increase} shows that
\be
\dfrac{\sigma^2}{d_{i+1}} \cdot \exp(\dfrac{d_{i+1}}{\sigma^2}-1) \geq 1
\ee
for all values of $\dfrac{\sigma^2}{d_{i+1}}$. Therefore, it is obvious that
\be
\lr(i+1) \geq \lr(i),
\ee
which means the LR value monotonically increases with respect to $i$.

Now, let's consider the other case, $i \geq r$. In this case, since $d_{i+1} < \sigma^2$, it is easily shown that
\be
\mb R_\text{RCML}(i) = \mb R_\text{RCML}(i+1)
\ee
Therefore,
\be
\lr(i+1) = \lr(i)
\ee

This proves that $\lr(i)$ monotonically increases for all $1 \leq i \leq N$.

\begin{figure}
\centering
\includegraphics[scale=0.5]{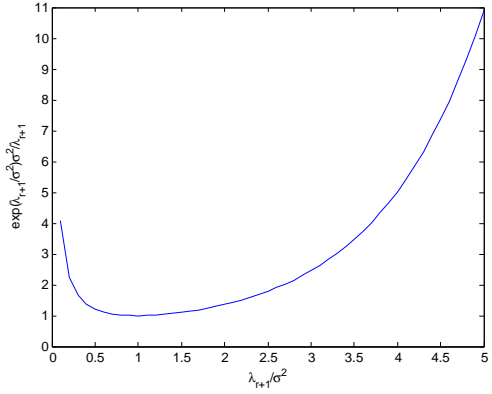}
\caption{The value of the coefficient $\dfrac{\sigma^2}{\lambda_{r+1}} \cdot \exp(\dfrac{\lambda_{r+1}}{\sigma^2}-1)$}
\label{Fig:increase}
\end{figure}

\subsection{Proof of Lemma \ref{Lemma2}}
\label{Sec:AppendixB}

In this section, I investigate the LR values for varying noise level $\sigma^2$ and a given rank $r$. From Eq. \eqref{Eq:LR_r} we obtain the LR value when the rank is $r$,
\be
\label{Eq:LR_sigma}
\lr(\sigma^2) = \frac{\dfrac{\exp N}{\sigma^{2(N-r)}}\ds\prod_{k=r+1}^N d_k }{\exp (r + \dfrac{1}{\sigma^2}\ds\sum_{k=r+1}^N d_k)}
\ee
For simplicity, let $\sigma^2 = t$ then Eq. \eqref{Eq:LR_sigma} can be simplified as
\be
\lr(t) = \dfrac{e^{N-r}\ds\prod_{k=r+1}^N d_k }{t^{N-r} e^{\dfrac{\sum_{k=r+1}^N d_k}{t}}}
\ee
Now let $\ds\sum_{k=r+1}^N d_k = d_s$ and $\ds\prod_{k=r+1}^N d_k = d_p$, then
\bea
\lr(t) & = & \dfrac{e^{N-r}d_p }{t^{N-r} e^{\frac{d_s}{t}}}\\
& = & d_p e^{N-r} t^{r-N} e^{-\frac{d_s}{t}}\label{Eq:LR_t}
\eea
To analyze increasing or decreasing property Eq. \eqref{Eq:LR_t}, I calculate its first derivative. Since $d_p e^{N-r}$ is a positive constant, it does not affect increasing or decreasing of the function. Therefore,
\bea
\lefteqn{(t^{r-N} e^{-\frac{d_s}{t}})^\prime}\nonumber\\
& = & (r-N)t^{r-N-1}e^{-d_s/t} + t^{r-N} e^{-d_s/t} \dfrac{d_s}{t^2}\\
& = & (r-N)t^{r-N-1}e^{-d_s/t} + t^{r-N-2} e^{-d_s/t} d_s\\
& = & t^{r-N-2}\big((r-N)t + d_s\big) e^{-d_s/t}
\eea
Since $t^{r-N-2}$ and $e^{-d_s/t}$ are always positive, the first derivative $(t^{r-N} e^{-\frac{d_s}{t}})^\prime = 0$ if and only if
\be
t = \dfrac{d_s}{N-r} = \dfrac{\sum_{k=r+1}^N d_k}{N-r}
\ee
and it is positive when $t<\dfrac{\sum_{k=r+1}^N d_k}{N-r}$ and negative otherwise. This means that $\lr(\sigma^2)$ increases for $\sigma^2<\dfrac{\sum_{k=r+1}^N d_k}{N-r}$ and decreases for $\sigma^2>\dfrac{\sum_{k=r+1}^N d_k}{N-r}$. The LR value is maximized when $\sigma^2 = \dfrac{\sum_{k=r+1}^N d_k}{N-r}$. Note that $\dfrac{\sum_{k=r+1}^N d_k}{N-r}$ is the average value of $N-r$ smallest eigenvalues of the sample covariance matrix and in fact a maximum likelihood solution of $\sigma^2$ as shown in the RCML estimator \cite{Kang14}.

\subsection{Proof of Lemma \ref{Lemma3}}
\label{Sec:AppendixC}

For a given rank $r$, the optimal solution of the noise power via the EL approach, $\hat t (=\hat{\sigma}_{\text{EL}}^2)$, is the solution of $\lr(t) = \lr_0$. From Eq. \eqref{Eq:LR_t}, that is, $\hat t$ is the solution of the equation given by
\be
d_p e^{N-r} t^{r-N} e^{-\frac{d_s}{t}} = \lr_0
\ee
Taking $\log$ on both side leads
\be
\label{Eq:OptimalT}
\log d_p + N-r + (r-N)\log t -\frac{d_s}{t} = \log \lr_0
\ee
For simplification, we take substitutions of variables,
\be
\left\{ \begin{array}{l}
a = r - N\\
b = \sum_{k=r+1}^N d_k\\
c = \log \lr_0 - \log\Big (\prod_{k=r+1}^N d_k\Big) + a
\end{array} \right.
\ee
Then, Eq. \eqref{Eq:OptimalT} is simplified to an equation of $t$,
\be
a \log t - \frac{b}{t} = c
\ee
Again, let $u = \log t$. Then, since $t = e^u$, we obtain
\bea
au - be^{-u} = c\\
e^{-u} = \frac{a}{b} u - \frac{c}{b}
\eea
Now let $s=u-\frac{c}{a}$. Then, the equation is
\bea
e^{-s-\frac{c}{a}} = \frac{a}{b} s\\
s e^s = \frac{b}{a} e^{-\frac{c}{a}}\label{Eq:S}
\eea
The solution of Eq. \eqref{Eq:S} is known to be obtained using Lambert $W$ function \cite{Corless96}. That is,
\be
s = W\bigg(\frac{b}{a} e^{-\frac{c}{a}}\bigg)
\ee
where $W(\cdot)$ is a Lambert $W$ function which is defined to be the function satisfying
\be
W(z) e^{W(z)} = z
\ee
Finally, we obtain
\be
u = W\bigg(\frac{b}{a} e^{-\frac{c}{a}}\bigg) + \frac{c}{a}
\ee
and
\be
\hat{\sigma}_{\text{EL}}^2 = \hat t = \exp\Bigg(W\bigg(\frac{b}{a} e^{-\frac{c}{a}}\bigg) + \frac{c}{a}\Bigg)
\ee

\subsection{Proof of Lemma \ref{Lemma4}}
\label{Sec:AppendixD}

We consider 5 cases provided in \cite{Aubry12}.
\begin{enumerate}
    \item $d_1 \leq \sigma^2 \leq \sigma^2 \Kmax$\\
    Since $u^\star = \frac{1}{\Kmax}$,
    \bea
    \lambda_i^\star & = & \min (\min (\Kmax  u^\star,1),\max(u^\star,\frac{1}{\bar d_i}))\\
    & = & \min (\min (1,1),\max(\frac{1}{\Kmax},\frac{1}{\bar d_i}))\\
    & = & \min (1,\frac{1}{\bar d_i}) = 1
    \eea
    Therefore,
    \be
    \hat{\mb R}_{\text{CN}} = \sigma^2 \mb I
    \ee
    and the condition number is $1$.
    \item $\sigma^2 < d_1 \leq \Kmax$\\
    Since $u^\star = \frac{1}{\bar d_1}$,
    \bea
    \lambda_i^\star & = & \min (\min (\Kmax  u^\star,1),\max(u^\star,\frac{1}{\bar d_i}))\\
    & = & \min (\min (\frac{\Kmax}{\bar d_1},1),\max(\frac{1}{\bar d_1},\frac{1}{\bar d_i}))\\
    & = & \min (1,\frac{1}{\bar d_i})\\
    & = & \left\{ \begin{array}{cc} \frac{1}{\bar d_i} & \bar d_i \geq 1\\ 1 & \bar d_i < 1 \end{array} \right.
    \eea
    Therefore,
    \be
    \hat{\mb R}_{\text{CN}} = \hat{\mb R}_{\text{FML}}
    \ee
    and the condition number is $\frac{d_1}{\sigma^2}$.

    \item $d_1 > \sigma^2 \Kmax$ and $u^\star = \frac{1}{\bar d_1}$\\
    Since $u^\star$ is the optimal solution of the optimization problem \eqref{Eq:OptimizationU}, $\frac{dG(u)}{du}|_{u=\frac{1}{\bar d_1}}$ must be zero if $u^\star = \frac{1}{\bar d_1}$. From, Eq. \eqref{Eq:Giu1} and Eq. \eqref{Eq:Giu2}, the first derivative of $G_i(u)$ is given by
    \be
    \label{Eq:Gpiu1}
    G_i^\prime(u) = \left\{ \begin{array}{ll}
    -\frac{1}{u} + \Kmax \bar d_i & \text{if} \quad 0 < u \leq \frac{1}{\Kmax}\\
    0 & \text{if} \quad \frac{1}{\Kmax} \leq u \leq 1 \end{array} \right.
    \ee
    for $\bar d_i \leq 1$, and
    \be
    \label{Eq:Gpiu2}
    G_i^\prime(u) = \left\{ \begin{array}{ll}
    -\frac{1}{u} + \Kmax \bar d_i & \text{if} \quad 0 < u \leq \frac{1}{\Kmax \bar d_i}\\
    0 & \text{if} \quad \frac{1}{\Kmax \bar d_i} < u \leq \frac{1}{\bar d_i}\\
    -\frac{1}{u} + \bar d_i & \text{if} \quad \frac{1}{\bar d_i} \leq u \leq 1 \end{array} \right.
    \ee
    for $\bar d_i > 1$. Therefore,
    \be
   \frac{dG(u)}{du}|_{u=\frac{1}{\bar d_1}} = \sum_{i=\bar N +1}^N (\Kmax \bar d_i - \bar d_1) + \sum_{i=p}^{\bar N} (\Kmax \bar d_i - \bar d_1)
    \ee
    where $p$ is the greatest index such that $\frac{1}{\bar d_1} < \frac{1}{\Kmax \bar d_p}$. For $i=\bar N, \ldots, N$, since $\bar d_i \leq 1$,
    \be
    \Kmax \bar d_i - \bar d_1 < \Kmax - \bar d_1 < 0
    \ee
    and for $i=p, \ldots, \bar N -1$, since $\bar d_1 > \Kmax \bar d_i$, $\Kmax \bar d_i - \bar d_1 < 0$. Therefore, in this case, it is obvious that
    \be
    \frac{dG(u)}{du}|_{u=\frac{1}{\bar d_1}} < 0
    \ee
    which implies $u = \frac{1}{\bar d_1}$ can not be the optimal solution of \eqref{Eq:OptimizationU}.

    \item $d_1 > \sigma^2 \Kmax$ and $u^\star = \frac{1}{\Kmax}$\\
    Aubry \emph{et al.} \cite{Aubry12} showed that $u^\star = \frac{1}{\Kmax}$ if $\frac{dG(u)}{du}|_{u=\frac{1}{\Kmax}} \leq 0$. From Eq. \eqref{Eq:Gpiu1} and Eq. \eqref{Eq:Gpiu2},
    \be
    \frac{dG(u)}{du}|_{u=\frac{1}{\Kmax}} = \sum_{i=\bar N +1}^N \Kmax (\bar d_i - 1) + \sum_{i=1}^p (\bar d_i - \Kmax)
    \ee
    where $p$ is the greatest index such that $\bar d_p > \Kmax$. Therefore,
   \bea
    & \frac{dG(u)}{du}|_{u=\frac{1}{\Kmax}} \leq 0\\
    \Leftrightarrow & \ds\sum_{i=\bar N +1}^N \Kmax (\bar d_i - 1) + \ds\sum_{i=1}^p (\bar d_i - \Kmax) \leq 0\\
    \Leftrightarrow & \Kmax(\sum_{i=\bar N +1}^N (\bar d_i - 1) - p) + \sum_{i=1}^p \bar d_i \leq 0\\
    \Leftrightarrow & \Kmax(\sum_{i=\bar N +1}^N (\bar d_i - 1) - p) \leq -\sum_{i=1}^p \bar d_i\\
    \Leftrightarrow & \Kmax \geq \frac{\sum_{i=1}^p \bar d_i}{p-\sum_{i=\bar N +1}^N (\bar d_i - 1)}
    \eea
    In this case,
    \bea
    \lambda_i^\star & = & \min (\min (\Kmax  u^\star,1),\max(u^\star,\frac{1}{\bar d_i}))\\
    & = & \min (\min (1,1),\max(\frac{1}{\Kmax},\frac{1}{\bar d_i}))\\
    & = & \min (1,\max(\frac{1}{\Kmax},\frac{1}{\bar d_i}))\\
     & = & \left\{ \begin{array}{cc} \min(1,\frac{1}{\Kmax}) & \bar d_i \geq \Kmax\\ \min(1,\frac{1}{\bar d_i}) & \bar d_i < \Kmax \end{array} \right.\\
     & = & \left\{ \begin{array}{cc} \frac{1}{\Kmax} & \bar d_i \geq \Kmax\\ \frac{1}{\bar d_i} & \bar 1 \leq \bar d_i < \Kmax\\ 1 & \bar d_i < 1 \end{array} \right.
    \eea
    Finally we obtain
    \be
    \bs\lambda^\star = \big[ \sigma^2 K_{\max}, \ldots, \sigma^2 K_{\max}, d_{p+1}, \ldots, d_{\bar N}, \sigma^2,\ldots,\sigma^2  \big],
    \ee
     where $p$ and $\bar N$ are the largest indices so that $d_p > \sigma^2 K_{\max}$ and $d_{\bar N} \geq \sigma^2$, respectively.

     \item $d_1 > \sigma^2 K_{\max}$ and $ \Kmax < \frac{\sum_{i=1}^p \bar d_i}{p-\sum_{i=\bar N +1}^N (\bar d_i - 1)}$\\
     In this case, since $\frac{1}{\bar d_1} < u^\star < \frac{1}{\Kmax}$,
      \bea
    \lambda_i^\star & = & \min (\min (\Kmax  u^\star,1),\max(u^\star,\frac{1}{\bar d_i}))\\
    & = & \min (\Kmax  u^\star,\max(u^\star,\frac{1}{\bar d_i}))\\
     & = & \left\{ \begin{array}{cc} \min(\Kmax  u^\star,u^\star) & \bar d_i \geq \frac{1}{u^\star}\\ \min(\Kmax  u^\star,\frac{1}{\bar d_i}) & \bar d_i < \frac{1}{u^\star} \end{array} \right.\\
     & = & \left\{ \begin{array}{cc} u^\star & \bar d_i \geq \frac{1}{u^\star}\\ \frac{1}{\bar d_i} & \frac{1}{\Kmax u^\star} \leq \bar d_i \leq \frac{1}{u^\star}\\ \Kmax u^\star & \bar d_i < \frac{1}{\Kmax u^\star} \end{array} \right.
    \eea
    Therefore, we obtain
     \be
    \bs\lambda^\star = \big[ \frac{\sigma^2}{u^\star}, \ldots, \frac{\sigma^2}{u^\star}, d_{p+1}, \ldots, d_q,\frac{\sigma^2}{u^\star K_{\max}},\ldots,\frac{\sigma^2}{u^\star K_{\max}}  \big]
    \ee
    where $p$ and $q$ are the largest indices so that $d_p > \frac{\sigma^2}{u}$ and $d_q > \frac{\sigma^2}{u \Kmax}$, respectively.
\end{enumerate}

\subsection{Proof of Lemma \ref{Lemma5}}
\label{Sec:AppendixE}

\begin{enumerate}
   \item $d_1\leq \sigma^2$
    \be
    \hat{\mb R}_{\text{CN}} = \sigma^2 \mb I
    \ee
    In this case, $\hat{\mb R}_{\text{CN}}$ does not change, so $\lr(\Kmax)$ is a constant.

    \item $\sigma^2 \leq d_1 \leq \sigma^2 K_{\max}$
    \be
    \hat{\mb R}_{\text{CN}} = \hat{\mb R}_{\text{FML}}
    \ee
    In this case, $\hat{\mb R}_{\text{CN}}$ does not change, so $\lr(\Kmax)$ is a constant.

    \item $d_1 > \sigma^2 K_{\max}$ and $K_{\max} \geq \frac{\sum_{i=1}^p d_i}{c - \sum_{\bar N + 1}^N (d_i -1)}$
    \be
    \hat{\mb R}_{\text{CN}} = \mb\Phi \diag(\bs\lambda^*) \mb\Phi^H
    \ee
    where
    \be
    \bs\lambda^\star = \big[ \sigma^2 K_{\max}, \ldots, \sigma^2 K_{\max}, d_{p+1}, \ldots, d_{\bar N}, \sigma^2,\ldots,\sigma^2  \big],
    \ee
     $p$ and $\bar N$ are the largest indices so that $d_p > \sigma^2 K_{\max}$ and $d_{\bar N} \geq \sigma^2$, respectively.
     \begin{IEEEeqnarray}{rCl}
     \lefteqn{\lr(\Kmax)}\nonumber\\ & = & \frac{\prod_{i=1}^N \frac{d_i}{\lambda_i} e^N }{ \exp (\sum_{i=1}^N \frac{d_i}{\lambda_i}) }\\
     & = & \frac{\ds\prod_{i=1}^p \frac{d_i}{\sigma^2 \Kmax} \cdot \ds\prod_{i=p+1}^{\bar N} 1\cdot \ds\prod_{i=\bar N + 1}^N \frac{d_i}{\sigma^2} \cdot e^N}{ \exp(\ds\sum_{i=1}^p \frac{d_i}{\sigma^2 \Kmax} + \ds\sum_{i=p+1}^{\bar N} 1 + \ds\sum_{i=\bar N + 1}^N \frac{d_i}{\sigma^2} )}\\
     & = & \frac{\prod_{i=1}^p \frac{d_i}{\sigma^2 \Kmax} \cdot \prod_{i=\bar N + 1}^N \frac{d_i}{\sigma^2} \cdot e^N}{ \exp(\ds\sum_{i=1}^p \frac{d_i}{\sigma^2 \Kmax}) \cdot e^{\bar N - p}  \cdot \exp( \ds\sum_{i=\bar N + 1}^N \frac{d_i}{\sigma^2} )}
     \end{IEEEeqnarray}

     \begin{enumerate}
        \item within the range where $p$ remains same
        \begin{IEEEeqnarray}{rCl}
        \lefteqn{\lr(\Kmax)}\nonumber\\  & = & \frac{\prod_{i=1}^p \frac{d_i}{\sigma^2 \Kmax} \cdot \prod_{i=\bar N + 1}^N \frac{d_i}{\sigma^2} \cdot e^N}{ \exp(\ds\sum_{i=1}^p \frac{d_i}{\sigma^2 \Kmax}) \cdot e^{\bar N - p}  \cdot \exp( \ds\sum_{i=\bar N + 1}^N \frac{d_i}{\sigma^2} )}\IEEEeqnarraynumspace\\
         & = & c_1 \frac{\prod_{i=1}^p \frac{d_i}{\sigma^2 \Kmax}}{\exp(\sum_{i=1}^p \frac{d_i}{\sigma^2 \Kmax})}\\
        & = & c_1 \frac{\frac{1}{(\sigma^2 \Kmax)^p} \prod_{i=1}^p d_i}{\exp( \frac{1}{\sigma^2 \Kmax} \sum_{i=1}^p d_i)}\\
        & = & c_1 \frac{\frac{1}{(\sigma^2 \Kmax)^p} \prod_{i=1}^p d_i}{(\exp( \sum_{i=1}^p d_i))^\frac{1}{\sigma^2 \Kmax}}\\
        & = & c_2 \frac{ (\frac{1}{\Kmax})^p}{c_3^\frac{1}{\Kmax}}\\
        & = & c_2 \frac{1}{(\Kmax)^p \cdot c_3^\frac{1}{\Kmax}}\label{Eq:LRKmax}
        \end{IEEEeqnarray}
        where $c_1 = \frac{\prod_{i=\bar N + 1}^N \frac{d_i}{\sigma^2} \cdot e^N}{\exp (\bar N - p)  \cdot \exp( \sum_{i=\bar N + 1}^N \frac{d_i}{\sigma^2} )}$, $c_2 = c_1 \frac{ \prod_{i=1}^p d_i}{\sigma^{2p}}$, and $c_3 = \exp( \frac{1}{\sigma^2}\sum_{i=1}^p d_i)$.\\
        Now let's evaluate the first derivative of the denominator of Eq. \eqref{Eq:LRKmax}.
        \begin{IEEEeqnarray}{rCl}
        \lefteqn{((\Kmax)^p \cdot c_3^\frac{1}{\Kmax})^\prime}\nonumber\\ & = & p (\Kmax)^{p-1} c_3^\frac{1}{\Kmax} + (\Kmax)^p \frac{c_3^\frac{1}{\Kmax} \log c_3}{-(\Kmax)^2}\\
        & = & p (\Kmax)^{p-1} c_3^\frac{1}{\Kmax} - (\Kmax)^{p-2} c_3^\frac{1}{\Kmax} \log c_3\IEEEeqnarraynumspace\\
        & = & (\Kmax)^{p-2} c_3^\frac{1}{\Kmax} (p\Kmax- \log c_3)\\
        & = & (\Kmax)^{p-2} c_3^\frac{1}{\Kmax} (p\Kmax- \frac{1}{\sigma^2}\sum_{i=1}^p d_i)
        \end{IEEEeqnarray}
        Since $d_1 > d_2 > \cdots > d_p > \sigma^2 \Kmax$,
        \be
        p\Kmax- \frac{1}{\sigma^2}\sum_{i=1}^p d_i < 0
        \ee
        This implies the denominator of Eq. \eqref{Eq:LRKmax} is a decreasing function, and therefore, $LR(\Kmax)$ is a increasing function with respect to $\Kmax$.

        \item $p \rightarrow p+1$ as $\Kmax$ decreases\\
        The $\lr(\Kmax)$ is a continuous function since $\lambda_{p+1} = d_{p+1}$ at the moment that $\sigma^2 \Kmax = d_{p+1}$ and there is no discontinuity of $\lambda_i$. Therefore, $\lr(\Kmax)$ is an increasing function in this case.
    \end{enumerate}

    \item $d_1 > \sigma^2 K_{\max}$ and $K_{\max} < \frac{\sum_{i=1}^c d_i}{c - \sum_{\bar N + 1}^N (d_i -1)}$\\
    \be
    \hat{\mb R}_{\text{CN}} = \mb\Phi \diag(\bs\lambda^*) \mb\Phi^H
    \ee
    where
    \be
    \bs\lambda^\star = \big[ \frac{\sigma^2}{u}, \ldots, \frac{\sigma^2}{u}, d_{p+1}, \ldots, d_q,\frac{\sigma^2}{u K_{\max}},\ldots,\frac{\sigma^2}{uK_{\max}}  \big]
    \ee
    $p$, $q$, and $\bar N$ are the vector of the eigenvalues of the estimate, the largest indices so that $d_p > \frac{\sigma^2}{u}$, $d_q > \frac{\sigma^2}{u \Kmax}$, and $d_{\bar N} \geq \sigma^2$, respectively.

    Before we prove the increasing property of $\lr(\Kmax)$, we show $u$ decreases as $\Kmax$ increases. $u$ is the optimal solution of the optimization problem. In this case, $u^\star$, the optimal solution of the optimization problem \eqref{Eq:OptimizationU} is obtained by making the first derivative of the cost function 0. Let $u_1$ and $u_2$ be the optimal solutions for $\Kmax_1$ and $\Kmax_2$, respectively. Then, $\sum_{i=1}^N G_i^\prime(u_1) = 0$ for $\Kmax_1$. Since $\frac{1}{d_i} \leq u_1 \leq \frac{1}{\Kmax_1}$ in this case, for $\Kmax_2 < \Kmax_1$, the value of $G_i^\prime (u_1)$ decreases for $d_i \leq 1$. $G_i^\prime(u)$ also decreases for $d_i >1$ and $u \leq \frac{1}{\Kmax d_i}$ and remain same for $d_i > 1$ and $\frac{1}{\Kmax d_i} < u$. Therefore, $\sum_{i=1}^N G_i^\prime(u_1) < 0$ for $\Kmax_2$. Finally, since $\sum_{i=1}^N G_i^\prime(u_2)$ must be zero for $\Kmax_2$, it is obvious that $u_1 < u_2$. This shows that $u$ decreases as $\Kmax$ increases.

    Now we show the increasing property of $\lr(\Kmax)$.
    \begin{enumerate}
        \item within the range where $p$ and $q$ remain same\\
        In this case, We show $\lr(u)$ is a decreasing function of $u$ and an increasing function of $\Kmax$ for each of $u$ and $\Kmax$.
        \begin{enumerate}
        \item Proof of $\lr(u)$ is a decreasing function.
        \biea
        \lefteqn{\lr(u)}\nonumber\\ & = & \frac{\prod_{i=1}^p \frac{u d_i}{\sigma^2} \cdot \prod_{i=q+1}^{\bar N} \frac{\Kmax u d_i}{\sigma^2} \cdot e^N}{ \exp(\sum_{i=1}^p \frac{u d_i}{\sigma^2} + \sum_{i=p+1}^{q} 1}\nonumber\\
         && \: \frac{}{+ \sum_{i=q + 1}^N \frac{\Kmax u d_i}{\sigma^2} )}\\
        & = & \frac{u^p \prod_{i=1}^p \frac{d_i}{\sigma^2} \cdot u^{N-q} \prod_{i=q+1}^{\bar N} \frac{\Kmax d_i}{\sigma^2} \cdot e^N}{ \exp( u (\sum_{i=1}^p \frac{d_i}{\sigma^2} + \sum_{i=q+1}^N \frac{\Kmax d_i}{\sigma^2})}\nonumber\\
        && \: \frac{}{ + q - p )}\\
        & = & \frac{c_1 u^{N-q+p}}{ \exp( c_2 u + c_3 )}\\
        & = & c_4 \frac{u^{N-q+p}}{c_5^u}\label{Eq:LR_u}
        \eiea
        where $c_1 = \prod_{i=1}^p \frac{d_i}{\sigma^2} \cdot \prod_{i=q+1}^{\bar N} \frac{\Kmax d_i}{\sigma^2}\cdot e^N$, $c_2 = \sum_{i=1}^p \frac{d_i}{\sigma^2} + \sum_{i=q+1}^N \frac{\Kmax d_i}{\sigma^2}$, $c_3 = q-p$, $c_4 = \frac{c_1}{e^{c_3}}$, and $c_5 = e^{c_2}$.
        The first derivative of Eq. \eqref{Eq:LR_u} is obtained by
        \biea
        \lefteqn{\lr^\prime(u)}\nonumber\\ & = & (N-q+p) u^{N-q+p-1}c_5^{-u}\nonumber\\
        &&  - \:  u^{N-q+p} \log c_5 \cdot c_5^{-u}\\
        & = & u^{N-q+p-1}c_5^{-u}(N-q+p  - u\log c_5)\IEEEeqnarraynumspace\\
        & = & u^{N-q+p-1}c_5^{-u}(N-q+p  - c_2 u)\\
        & = & u^{N-q+p-1}c_5^{-u}(N-q+p\nonumber\\
        && - \:  u(\sum_{i=1}^p \frac{d_i}{\sigma^2} + \sum_{i=q+1}^N \frac{\Kmax d_i}{\sigma^2}))
        \eiea
        Since $\frac{\sigma^2}{u} \leq d_p$,
        \biea
        \lefteqn{N-q+p  - u(\sum_{i=1}^p \frac{d_i}{\sigma^2} + \sum_{i=q+1}^N \frac{\Kmax d_i}{\sigma^2})}\nonumber\\ & \leq & N-q+p - u(\frac{p}{u} + \frac{N-q}{u}\cdot \Kmax)\IEEEeqnarraynumspace\\
        & = & N - q - \Kmax(N-q)
        \eiea
        Since $\Kmax > 1$, $\lr^\prime(u)<0$ which implies $\lr(u)$ is a decreasing function with respect to $u$.

        \item Proof of $\lr(\Kmax)$ is an increasing function.
        \biea
        \lefteqn{\lr(\Kmax)}\nonumber\\ & = & \frac{\prod_{i=1}^p \frac{u d_i}{\sigma^2} \cdot \prod_{i=q+1}^{\bar N} \frac{\Kmax u d_i}{\sigma^2} \cdot e^N}{ \exp(\sum_{i=1}^p \frac{u d_i}{\sigma^2} + \sum_{i=p+1}^{q} 1}\nonumber\\
         && \frac{}{+\sum_{i=q + 1}^N \frac{\Kmax u d_i}{\sigma^2} )}\IEEEeqnarraynumspace\\
        & = & \frac{c_1 \Kmax^{N-q}}{ \exp( c_2 \Kmax + c_3)}\\
        & = & c_4 \frac{\Kmax^{N-q}}{c_5^{\Kmax}}\label{Eq:LR_Kmax}
        \eiea
        where $c_1 = \prod_{i=1}^p \frac{u d_i}{\sigma^2} \cdot \prod_{i=q+1}^{\bar N} \frac{u d_i}{\sigma^2} \cdot e^N$, $c_2 = \sum_{i=q+1}^N \frac{ud_i}{\sigma^2}$, $c_3 = \sum_{i=1}^p \frac{u d_i}{\sigma^2} + q - p$, $c_4 = \frac{c_1}{e^{c_3}}$, and $c_5 = e^{c_2}$. The first derivative is
      \bea
        \lefteqn{\lr^\prime(\Kmax)}\nonumber\\ & = & (N-q) \Kmax^{N-q-1}c_5^{-\Kmax} \nonumber\\
        && -\: \Kmax^{N-q} \log c_5 \cdot c_5^{-\Kmax}\\
        & = & \Kmax^{N-q-1}\nonumber\\
        && \times \: c_5^{-\Kmax}(N-q  - \Kmax\log c_5)\\
        & = & \Kmax^{N-q+p-1}\nonumber\\
        && \times \: c_5^{-u}(N-q  - c_2 \Kmax)\\
        & = & \Kmax^{N-q+p-1}\nonumber\\
        && \times \: c_5^{-u}(N-q  - \Kmax \sum_{i=q+1}^N \frac{ud_i}{\sigma^2})\IEEEeqnarraynumspace
        \eea
        Since $\frac{\sigma^2}{u\Kmax} \leq d_{q+1}$,
        \bea
        \lefteqn{N-q  -\Kmax \sum_{i=q+1}^N \frac{ud_i}{\sigma^2}}\nonumber\\ & \geq & N-q - \Kmax (\frac{N-q}{\Kmax}) = 0
        \eea
        Therefore, $\lr^\prime(\Kmax) \geq 0$ and $\lr(\Kmax)$ is an increasing function with respect to $\Kmax$.
        \end{enumerate}
        These two proofs show that $\lr(u,\Kmax)$ is an increasing function with respect to $\Kmax$.

   \item $p$ and $q$ changes as $\Kmax$ decreases\\
        The $\lr(u, \Kmax)$ is a continuous function, and therefore, $\lr(u,\Kmax)$ is an increasing function in this case.
    \end{enumerate}

\end{enumerate} 


\scriptsize
\bibliographystyle{C:/Users/BXK265/BOXSYN\string~2/BOXSYN\string~1/LaTex/Bibliography/IEEEbib}
\bibliography{C:/Users/BXK265/BOXSYN\string~2/BOXSYN\string~1/LaTex/Bibliography/IEEEabrv,C:/Users/BXK265/BOXSYN\string~2/BOXSYN\string~1/LaTex/Bibliography/Bosung}
\end{document}